\newif\iffull
\newtheorem{theorem}{Theorem}
\newtheorem{lemma}[theorem]{Lemma}
\theoremstyle{definition}
\newtheorem{definition}[theorem]{Definition}
\newcommand{\myremark}[3]{\textcolor{blue}{\textsc{#1 #2:}} \textcolor{SeaGreen}{\textsf{#3}}}
\newcommand{\maarten}[2][says]{\myremark{Maarten}{#1}{#2}}
\newcommand{\real}{\mathbb{R}}
\newcommand{\inte}{\textrm{int}\;}
\newcommand{\mydef}{:=}
\newcommand{\eps}{\varepsilon}
\newcommand{\comment}[1]{}
\title{Geometric Multicut}
\author{Mikkel Abrahamsen\footnote{Basic Algorithms Research Copenhagen (BARC), University of Copenhagen, Universitetsparken 1, DK-2100 Copenhagen \O, Denmark. miab@di.ku.dk. Supported by the Innovation Fund Denmark through the DABAI project. BARC is supported by the VILLUM Foundation grant 16582.} \\
Panos Giannopoulos\footnote{giCenter, Department of Computer Science, City University of London, EC1V 0HB, London, United Kingdom. panos.giannopoulos@city.ac.uk.} \\
Maarten L\"{o}ffler\footnote{Department of Information and Computing Sciences, Utrecht University, The Netherlands. m.loffler@uu.nl. Partially supported by the Netherlands Organisation for Scientific Research (NWO); 614.001.504.} \\
G\"unter Rote\footnote{Institut f\"ur Informatik, 
Freie Universit\"at Berlin, Takustraße 9, 14195 Berlin, Germany. rote@inf.fu-berlin.de.}}
\author{Mikkel Abrahamsen}{BARC, University of Copenhagen, Universitetsparken 1, DK-2100 Copenhagen \O, Denmark}{miab@di.ku.dk}{https://orcid.org/0000-0003-2734-4690}{Supported by the Innovation Fund Denmark through the DABAI project. MA is also a part of BARC, Basic Algorithms Research Copenhagen, supported by the VILLUM Foundation grant 16582.}
\author{Panos Giannopoulos}{giCenter, Department of Computer Science, City University of London, EC1V 0HB, London, United Kingdom}{panos.giannopoulos@city.ac.uk}{}{}
\author{Maarten L\"{o}ffler}{Department of Information and Computing Sciences, Utrecht University, The Netherlands}{m.loffler@uu.nl}{}{Partially supported by the Netherlands Organisation for Scientific Research (NWO); 614.001.504.}
\author{G\"unter Rote}{Institut f\"ur Informatik, 
Freie Universit\"at Berlin, Takustraße 9, 14195 Berlin, Germany}
{rote@inf.fu-berlin.de}{https://orcid.org/0000-0002-0351-5945}{}
\authorrunning{Mikkel Abrahamsen, Panos Giannopoulos, Maarten L\"offler, and G\"unter Rote}
\subjclass{F.2.2 [Nonnumerical Algorithms and Problems] Geometrical problems and computations}
\keywords{Multicut, Clustering, Steiner tree}
\begin{document}

\maketitle

\begin{abstract}
We study the following separation problem:
Given a collection of colored objects in the plane, compute a shortest ``fence'' $F$, i.e., a union of curves of minimum total length, that separates every two objects of different colors.
Two objects are separated if $F$ contains a simple closed curve that has one object in the interior and the other in the exterior. We refer to the problem as GEOMETRIC $k$-CUT, where $k$ is the number of different colors, as it can be seen as a geometric analogue to the well-studied multicut problem on graphs.
We first give an $O(n^4\log^3 n)$-time algorithm that computes an optimal fence for the case where the input consists of polygons of two colors and $n$ corners in total.
We then show that the problem is NP-hard for the case of three colors.
Finally, we give a $(2-4/3k)$-approximation algorithm.
\end{abstract}

\clearpage
\section{Introduction}
\subparagraph{Problem Definition.}

We are given $k$ pairwise interior-disjoint, not necessarily connected, sets $B_1,B_2,\ldots,B_k$ in the
plane. We want to find a covering of the plane
 $\real^2 = \bar B_1\cup \bar B_2 \cup \cdots\cup \bar B_k$ such that the sets $\bar B_i$ are closed and interior-disjoint, $B_i\subseteq \bar B_i$ and
the total length of the boundary $F=\bigcup_{i=1}^k \partial\bar B_i$ between the different sets $\bar B_i$ is minimized. 

We think of the $k$ sets $B_i$ as having $k$
different \emph{colors} and each set $B_i$ as a union of simple geometric objects like
circular disks and simple polygons. 
Examples are shown in Figure~\ref{fig:4_circles_example1} and Figure~\ref{fig:polygon_example}.
We call $\bar B_i$ the \emph{territory} of color $i$.
The ``fence'' $F$ is the set of points that separates the territories. %sets $\bar B_i$.
(Alternatively, $F$ is the set of points belonging to more than one territory.)
As we can see, a territory %set $\bar B_i$
 can have more
than one connected component.
%A \emph{domain} is a connected component of a territory.
% GR: DEFINITION MOVED TO WHERE IT IS (extensively) USED.

%
\begin{figure}[bt]
\centering
\includegraphics%[width=3.5cm]
{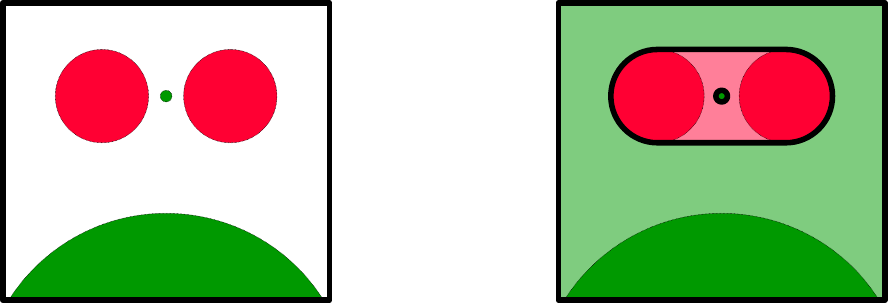}
\caption{An instance with $k=2$ sets, red and green, with two disks
  each; the big green disk is only partially shown. The optimal cover has a hippodrome-shaped red set, with the small green disk
  as a hole, and an unbounded green set. The fence $F$ has two
  components: the boundary of the hippodrome and the boundary of
  the small green disk.}
\label{fig:4_circles_example1}
\end{figure}
\begin{figure}[bt]
\centering
\includegraphics[width=\iffull 11\else 9\fi cm]
{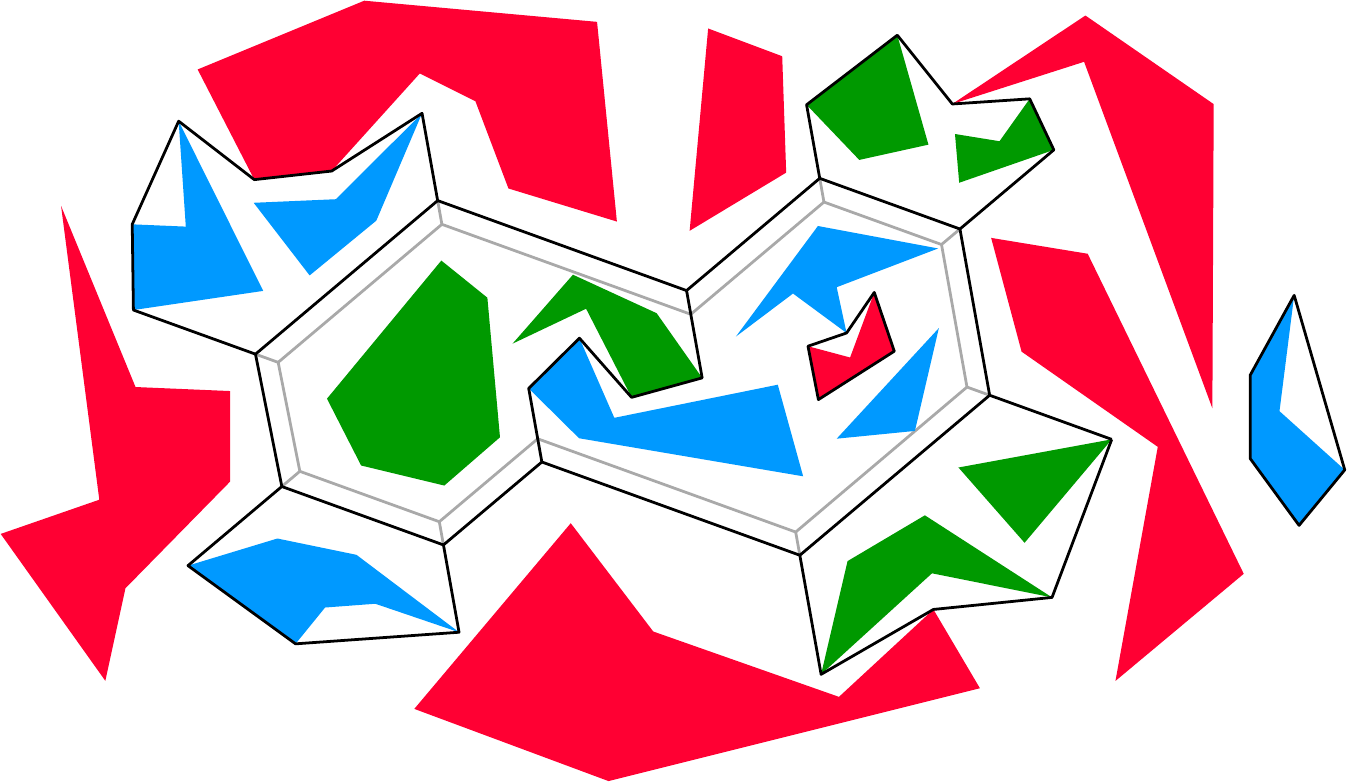}
\caption{An instance of GEOMETRIC $3$-CUT and an optimal fence in black.
The fence contains a cycle that does not touch any object.
The grey fence shows how the cycle can be shrunk without changing the
total length of the fence
(Lemma~\ref{lemma:no_cycles}).
%Eventually, an edge of the cycle hits a corner of an object.
% (REPEATED FROM THE MAIN TEXT)
%
%%, and the cycle is no longer contained in the exterior of all
%%objects.
}
\label{fig:polygon_example}
\end{figure}
%

%As stated,
%the territories $\bar B_i$
%do not necessarily have to be open or closed, but they should
%partition the plane. In this formulation, portions of the fence have to be arbitrarily
%assigned to one of the neighboring territories $\bar B_i$.
%We can get an equivalent, cleaner, but less straightforward formulation by taking the topological
%closure: We require the territories $\bar B_i$ to be closed sets.
%They should cover the plane and be interior-disjoint.
%The fence $F$ is the set of points belonging to more than one territory.
%In this setting we also allow the boundaries of the given sets $B_i$ to overlap, but the sets should be interior-disjoint.

An alternative view of the problem concentrates on
% Formally, we define a
the \emph{fence}:
A fence is defined as
a union of curves $F$ such that each connected component of $\real^2\setminus F$ intersects at most one set $B_i$.
An interior-disjoint covering as defined above gives, by definition, such a fence.
%Note that if an interior-disjoint covering of the plane
%$\real^2 = \bar B_1\cup \bar B_2 \cup \cdots\cup \bar B_k$
%satisfies $B_i\subseteq \bar B_i$, then the corresponding set $F=\bigcup_{i=1}^k \partial\bar B_i$ is a solution.
Likewise, a fence $F$ induces such a covering,
by assigning each connected component of $\real^2\setminus F$ to an
appropriate territory $\bar B_i$.
% , as follows.
% Let $\mathcal C$ be the set of connected components of $\real^2\setminus F$.
% We then define $\bar B_i=\bigcup_{C\in\mathcal C\colon B_i\cap C\neq \emptyset} \overline C$, where $\overline C$ is the closure of $C$.
% For components $C\in\mathcal C$ disjoint from all sets $B_i$, we can assign $\overline C$ to $\bar B_1$.
%The problem we study is therefore to compute a solution $F^*$ of minimum total length.
The total length of a fence $F$ is also called the \emph{cost} of $F$ and is denoted as $|F|$.

% We assume that $B_i$ and $\bar B_i$ are well-behaved: Each set is the
% closure of its interior (a ``regular set''), and its boundary is a
% finite union of Jordan arcs. Then also $F$ will be a union of finitely
% many Jordan arcs, and we want to minimize their total length.

%Then the optimal fence will be a union of straight edges.

In our paper, we will focus on the case where the input consists of
simple polygons %of $k$ different colors
(with disjoint interiors)\iffull\
where the corners have rational
coordinates\fi.
We denote this problem as \emph{GEOMETRIC $k$-CUT},
Each input polygon is called an \emph{object}.
 We use $n$ to
denote the total number of corners of the input polygons.
We count the corners with multiplicity so that if a point is a corner
of more objects, it is counted for each object
individually.

\iffull
The results can be extended to more general shapes as long as they are
reasonably well behaved.
\fi

Even in this simple setting, the problem poses both geometric and combinatorial difficulties.  A
set $B_i$ can consist of disconnected pieces, and the combinatorial challenge is to
choose which of the pieces should be grouped into the same
component of~$\bar B_i$.
The geometric
task is to construct a network of curves that surrounds the given
groups of objects and thus separates the groups from each other.
For $k=2$ colors, optimal fences consist of geodesic curves around
obstacles, which are well understood.
As soon as the number $k$ of colors exceeds~$2$, the geometry
% of optimal fences
 becomes more complicated, and
the problem acquires
traits of the geometric Steiner tree problem, as shown by the example in Figure~\ref{fig:polygon_example}.

The problem of enclosing a set of objects by a shortest system of
fences has been considered with a single set
$B_1$ by Abrahamsen et al.~\cite{abrahamsen2018fast}. The task is to ``enclose'' the
components of $B_1$ by a shortest system of fences.
% CITE MIKKELETAL[..]
This can be formulated as a special case of our problem with
$k=2$ colors: We add an additional
set $B_2$, far away from $B_1$
and large enough so that
% If the diameter of $B_2$ is larger
% than the total perimeter of $B_1$,
it is never optimal to enclose
$B_2$.
Thus, we have to enclose all components of $B_1$ and separate them from
the unbounded region. In this setting, there will be no nested fences.
Abrahamsen et al.~gave an algorithm
with running time $O(n\;\text{polylog}\;n)$ for the case where the
input consists of $n$ unit disks.
\iffull\else\looseness-1\fi

\subparagraph{Our Results.}
\label{sec:results}

In Section~\ref{sec:2}, we show how to solve the case with $k=2$ colors in time $O(n^4\log^3 n)$.
The algorithm works by reducing the problem to the multiple-source
multiple-sink maximum flow problem in a planar graph.
In Section~\ref{sec:3}, we show that already the case with $k=3$ colors is NP-hard by a reduction from PLANAR POSITIVE 1-IN-3-SAT.
%It is not known whether (the decision version of) GEOMETRIC $k$-CUT is in NP or not.
%Indeed, this seems to depend on the complexity of other problems such as the sum of square roots problem~\cite{kayal2012sum} and the Euclidean Steiner tree problem~\cite{garey1977complexity}, both of which are not known to be in NP.

In Section~\ref{sec:a}, we discuss approximation algorithms.
We first compare the optimal fence $F_{\mathcal A}$ consisting of line segments between corners of input polygons to the unrestricted optimal fence~$F^*$.
We show that $|F_{\mathcal A}|\leq 4/3\cdot |F^*|$.
After applying a $(3/2-1/k)$-approximation algorithm for the $k$-terminal multiway cut problem~\cite{cualinescu2000improved}, we obtain a polynomial-time $(2-\frac{4}{3k})$-approximation algorithm for GEOMETRIC $k$-CUT (Theorem~\ref{approximation-theorem}).
\section{Structure of Optimal Fences}

%\begin{lemma}
%Let $B_1,B_2,\ldots,B_k$ and
% $B_1',B_2,\ldots,B_k$ be two instances that differ only in the first set,
% with
%$B_1 \subset B_1'$. Let 
% $\bar B_1,\bar B_2,\ldots,\bar B_k$ and
% $\bar B_1',\bar B_2',\ldots,\bar B_k'$ be the corresponding optimal
% partitions.
%
%Then
%$\bar B_1 \subset \bar B_1'$.
%\end{lemma}
%\begin{proof}
%  Wishful thinking. (PANOS: I failed to show this!)
%\end{proof}

\iffull
%The following lemma states some useful properties of optimal fences.
\fi

\begin{lemma}\label{lemma:char}
%Let an instance of GEOMETRIC $k$-CUT be given.
An optimal fence $F^*$ is a union of (not necessarily disjoint)
closed curves, %and $F^*$ is
disjoint from the interior of the objects.
Furthermore, $F^*$ is the union of straight line segments of positive length.
Consider two non-collinear line segments $\ell_1,\ell_2\subset F^*$
with a common endpoint~$p$.
If $p$ is not a corner of an object, then exactly three line segments
% $\ell_1,\ell_2,\ell_3$
meet at $p$ and form angles of $2\pi/3$.
\end{lemma}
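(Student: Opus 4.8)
The plan is to argue in three stages: first that an optimal fence decomposes into finitely many straight segments and closed curves, then that at any vertex $p$ of the fence that is not a corner of an object, at most three segments meet, and finally that if exactly three meet they do so at $120^\circ$ angles.

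\emph{Stage 1 (structure as segments and closed curves).}
First I would observe that replacing any curve in $F$ connecting two points $a,b$ by a shortest path between $a$ and $b$ that avoids the interiors of the objects cannot increase the length, and still yields a valid fence (it separates the same regions, since homotopic curves relative to the obstacles bound the same components up to sets of measure zero). Hence we may assume every maximal arc of $F^*$ between ``branch or corner'' points is a geodesic in the complement of the object interiors, which is a polygonal path whose interior vertices are at object corners. This gives that $F^*$ is a union of straight segments disjoint from the interiors of the objects. For the ``closed curves'' claim I would argue that a vertex of the fence graph of degree $1$ can be deleted together with its incident edge without destroying the separation property (removing a dangling edge merges two sub-regions of the \emph{same} color), strictly decreasing the length; so in an optimal fence every vertex has degree $\ge 2$, which forces $F^*$ to be a union of closed curves. (The fact that this union is \emph{finite} follows because the geodesics run between a bounded number of corners and branch points.)

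\emph{Stage 2 (local optimality at a non-corner vertex).}
Fix a vertex $p$ of $F^*$ that is not an object corner, and let $u_1,\dots,u_m$ be the unit directions of the segments of $F^*$ emanating from $p$, with $m=\deg(p)\ge 2$. In a small disk $D$ around $p$ not meeting any object and not containing any other vertex of $F^*$, the fence looks like $m$ radii. The key point is that $p$ may be moved freely (and the segment endpoints at $p$ reconnected to a nearby Steiner point, possibly splitting into a small local tree) without affecting the separation property, because inside $D$ all regions can be recolored consistently as long as the cyclic pattern of colors around $\partial D$ is preserved. Therefore $F^*\cap D$ must be a \emph{shortest} network connecting the $m$ points where the radii cross $\partial D$, subject only to keeping those $m$ boundary points connected in the cyclic order dictated by the colors — i.e.\ an optimal Steiner-type configuration. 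Standard facts about shortest networks then give: no two segments at a Steiner point can meet at an angle $<2\pi/3$ (otherwise a local Fermat-point modification shortens the network), and a degree-$m$ vertex with all angles $\ge 2\pi/3$ forces $m\le 3$. Hence $m\in\{2,3\}$ at a non-corner vertex; and the case $m=2$ is excluded by the hypothesis that $\ell_1,\ell_2$ are non-collinear (two collinear segments would just be one segment), so $m=3$.

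\emph{Stage 3 (the three angles are $2\pi/3$).}
With exactly three segments at $p$ and each pairwise angle $\ge 2\pi/3$, the three angles sum to $2\pi$ and are each $\ge 2\pi/3$, hence all equal to $2\pi/3$; equivalently, $p$ is the Fermat point of the triangle formed by three nearby points on the three segments, which is the unique point with this angular property.

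\emph{Main obstacle.}
The delicate part is Stage 2: making rigorous the claim that a local modification of the fence inside the small disk $D$ — which may change the topology of $F^*$ near $p$, e.g.\ split one vertex into two or merge the incident edges into a little tree — always produces a \emph{valid} fence of the \emph{same or smaller} length. One must check that the only constraint imposed by the global separation requirement is that the cyclic sequence of colors around $\partial D$ is realized, which in turn means the $m$ crossing points with $\partial D$ must lie in the \emph{same} connected components of $D\setminus F$ exactly as prescribed by equal adjacent colors; any network inside $D$ inducing that partition of $\partial D$ into color-arcs is admissible, so minimizing length over all such networks is exactly a (cyclically constrained) Steiner problem, to which the classical $120^\circ$ and degree-$\le 3$ results apply. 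The rest is routine.
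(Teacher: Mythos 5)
Your plan follows essentially the same route as the paper's: local-exchange arguments to reduce the fence to segments, plus the classical Steiner/Fermat argument (which the paper also uses, citing Gilbert--Pollak) for the degree-$\le 3$ and $2\pi/3$ conclusions. Your Stage 2 is actually more explicit than the paper about \emph{why} the local replacement preserves feasibility (recoloring constrained by the cyclic color pattern on $\partial D$), which is a nice addition. But Stage~1 contains two gaps.

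The geodesic-replacement step is circular as written: you replace an arc of $F$ by a shortest path \emph{avoiding the object interiors} and assert this cannot increase length. If the original arc passed through an object's interior, the replacement could well be longer. The argument has to begin, as the paper's does, with the separate observation that $F^*$ never enters the interior of an object: any sub-arc inside an object $O\subset B_i$ has the territory $\bar B_i$ on both sides and can simply be deleted, strictly shortening $F$. Only after that is it legitimate to replace arcs by geodesics in the complement of the interiors. The second gap is that ``every vertex has degree $\ge 2$, which forces $F^*$ to be a union of closed curves'' is a false graph-theoretic implication: two cycles joined by a bridge have all degrees $\ge 2$, yet the bridge lies on no simple closed curve. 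What is needed is that $F^*$ is \emph{bridgeless}. The fix is easy and in the same spirit as your leaf argument: a bridge of a planar embedded graph has the same face, hence the same territory, on both sides, so it separates nothing and can be deleted --- but this case must be stated, not just degree-$1$ vertices. (The paper's phrasing, removing the union $F'$ of all closed curves and arguing the remainder ``does not contribute to the separation,'' handles bridges and leaves uniformly.) Finally, a small point in Stage 2: the exclusion of $m=2$ does not follow from the $\ge 2\pi/3$ angle condition alone (a degree-$2$ vertex with angle, say, $5\pi/6$ satisfies it); you need the separate shortcut observation that two non-collinear segments meeting at a point can always be shortened by a chord, as the paper states explicitly.
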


\begin{proof}
  \iffull
We first prove that the curves in $ F^*$ are disjoint from the interior of each object.
To this end, consider any fence $ F$ in which some open curve $\pi\subset F$ is contained in the interior of an object $O\subset B_i$.
Then the domains on both sides of $\pi$ must be part of the territory $\overline B_i$.
Hence, $\pi$ can be removed from $ F$ while the fence remains feasible.
That operation reduces the length, so $ F$ is not optimal.

\else
It is clear that an optimal fence $ F^*$ never enters the interior of an object.
\fi

We next show that $F^*$ is the union of a set of closed curves. Suppose not.
Let $F'\subset F^*$ be the union of all closed curves contained in $F^*$ and let $\pi$ be a connected component in $F^*\setminus F'$.
Then $\pi$ is the (not necessarily disjoint) union of a set of open curves, which do not contribute to the separation of any objects.
Hence, $F^*\setminus \pi$ is a fence of smaller length than $F^*$, so $ F^*$ is not optimal.\iffull\else\looseness-1\fi

In a similar way, one can consider the union $L$ of all line segments of positive length contained in $F^*$, and if $F^*\setminus L$ is non-empty, a standard argument shows that a curve $\pi$ in $F^*\setminus L$ can be replaced by line segments, thus reducing the total length.

The last claimed property is shared with the Euclidean Steiner minimal tree on
a set of points in the plane, and it can be proved
\iffull
in the same way,
see for example Gilbert and Pollak~\cite{gilbert1968steiner}:
Suppose that the fence $F$ contains
 two non-collinear line segments $\ell_1$ and $\ell_2$ sharing an endpoint $p$ that is not a corner of an object.
If the angle between
$\ell_1$ and $\ell_2$ at $p$ is less than $2\pi/3$, then parts of
$\ell_1$ and $\ell_2$ can be replaced by three shorter segments.
Hence, the angle between segments meeting at $p$ is
at least $2\pi/3$, and there can be at most three such line segments.
If there are only two, one can make a shortcut.
Therefore, there are exactly three segments, and they form angles of~$2\pi/3$.
% Suppose that the fence $F$ contains
%  two non-collinear line segments $\ell_1$ and $\ell_2$ sharing an endpoint $p$ that is not a corner of an object.
% Gilbert and Pollak~\cite{gilbert1968steiner} proved that in the Steiner minimal tree on a set of points in the plane, all line segments meet at angles of at least $2\pi/3$.
% Similarly, as $p$ is not a corner of an object, then $\ell_1$ and $\ell_2$ must meet at $p$ at angles of at least $2\pi/3$.
% Hence, there can be at most three line segments meeting at $p$.
% If there are only two, one can obtain a fence of smaller total length by substituting the part of $\ell_1\cup\ell_2$ in a neighbourhood around $p$ by a single line segment.
% Therefore, there are three segments $\ell_1,\ell_2,\ell_3$, and they form angles of exactly $2\pi/3$.
\else
in the same easy way by local optimality arguments,
%see for example Gilbert and Sullivan~\cite{gilbert1968steiner}.
see for example Gilbert {\it et al.}~\cite{gilbert1968steiner}.
\maarten {To my knowledge, Gilbert and Sullivan were not experts in geometry...}
\fi
\end{proof}

As it can be seen in Figure~\ref{fig:polygon_example}, optimal fences may contain cycles that do not touch any object.
By the following lemma, such cycles can be eliminated without
increasing the length.
\iffull
This will turn out to be useful in our design of approximation
algorithms.
\fi

\begin{lemma}\label{lemma:no_cycles}
Let $N$ be the set of corners of the objects in an instance of GEOMETRIC $k$-CUT.
There exists an optimal fence $ F^*$ with the property that $ F^*\setminus N$ contains no cycles.
\end{lemma}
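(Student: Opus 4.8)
The plan is to start from an arbitrary optimal fence $F^*$ and, as long as $F^*\setminus N$ still contains a cycle, perform a length‑preserving modification that shrinks one such cycle, until none are left. Let $\gamma\subset F^*\setminus N$ be a cycle, i.e.\ a simple closed curve in $F^*$ avoiding all corners, and let $D$ be the closed disk it bounds. Since $\gamma$ misses $N$, Lemma~\ref{lemma:char} applies at every vertex of $\gamma$: at each such vertex exactly three segments of $F^*$ meet at angles $2\pi/3$, namely the two edges of $\gamma$ and one further ``pendant'' segment leading into the rest of $F^*$. The modification is to translate every maximal segment of $\gamma$ perpendicular to itself by a small amount $\eps$ toward the interior of $D$, and at each vertex of $\gamma$ slide the pendant along its own supporting line so that it still reaches the moved vertex of $\gamma$. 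This is exactly the ``shrinking'' drawn in grey in Figure~\ref{fig:polygon_example}.

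First I would check that this does not change $|F^*|$. Because the three segments at a vertex $v$ of $\gamma$ meet at $2\pi/3$, the supporting line of the pendant at $v$ bisects the angle of $D$ at $v$; hence $v$ moves exactly along that line, every segment involved stays parallel to its original direction, and the total length is an affine function of $\eps$. Its derivative at $\eps=0$ equals $-\sum_{v} v'\cdot\bigl(\sum_{e\ni v}\hat u_{e,v}\bigr)$, summed over the moved vertices of $\gamma$ (the outer endpoints of the pendants are fixed); each inner sum of unit edge‑vectors is $0$ by the $2\pi/3$ condition, so the derivative vanishes and the length stays constant. I also have to check that the modified set is still a feasible fence: the modification is realised by a homeomorphism of a neighborhood of $\gamma$ that only pushes $\gamma$ inward and drags the pendant feet along, so it changes neither the incidences between complementary regions and objects nor feasibility.

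Next I would increase $\eps$ until the first combinatorial event: an edge of $\gamma$ or a pendant shrinks to length $0$, or $\gamma$ runs into another part of $F^*$ or into a corner. In the limit we still have an optimal fence (the length has stayed constant and feasibility is preserved). If the event is a collapse or a collision away from $N$, then in the resulting fence some point that is not a corner becomes incident to four or more segments, contradicting Lemma~\ref{lemma:char}; so this cannot happen, and the only possibility is that $\gamma$ comes to touch a corner or an object. To make this a terminating argument I would fix, among all optimal fences, one minimizing the total area enclosed by cycles of the fence that avoid $N$, counted with multiplicity over all such cycles, and take $\gamma$ to be one of those cycles enclosing the smallest area; then no cycle avoiding $N$ lies in the interior of $D$, so pushing $\gamma$ inward cannot collide with another such cycle, and since we only move $\gamma$, no new cycle avoiding $N$ is created, while the disk of $\gamma$ strictly shrinks — contradicting minimality.

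The part I expect to be hardest is the case where $D$ contains objects. Then some pendants of $\gamma$ point into $D$, $\gamma$ is not convex, and the inward offset is more delicate near the reflex vertices of $D$; moreover $\gamma$ may ``flatten'' against the edge of an enclosed object without ever passing through a corner, so the straightforward shrinking can get stuck before the cycle has been eliminated. Making the area potential genuinely decrease in this situation — showing $\gamma$ can always be moved (inward somewhere, outward, or by re‑routing it around the enclosed object) so that the enclosed area goes down without creating a new cycle that avoids $N$ — is where the bulk of the work lies; the length‑neutrality and feasibility verifications above are routine by comparison.
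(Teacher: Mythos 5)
Your approach is essentially the paper's: take a cycle in $F^*\setminus N$, offset it inward in a length-preserving way, and argue that the process must terminate by driving the cycle onto a corner. The one substantive difference is the length-preservation argument. You compute that the derivative of the total length is $-\sum_v v'\cdot\bigl(\sum_{e\ni v}\hat u_{e,v}\bigr)$ and observe that each inner sum of unit edge-vectors vanishes by the $2\pi/3$ condition; the paper instead encloses each degree-three vertex in a small equilateral triangle whose sides meet the three incident segments at right angles and invokes the fact that the sum of distances from an interior point to the three sides of an equilateral triangle is constant. Both arguments are correct and local; yours is algebraic, the paper's is synthetic. You are also somewhat more explicit about termination (via the area potential) than the paper, which simply asserts that cycles can be eliminated one by one.

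On the ``flattening'' concern you raise at the end: the paper does not explicitly treat it either, asserting only that ``an edge of $C$ must eventually hit a corner of an object.'' But this case is in fact ruled out by Lemma~\ref{lemma:char}, and you could close your own gap with the following observation. If the offset cycle first contacts an object along the relative interior of one of its edges $e$, consider the degree-three Steiner vertices at the ends of the coinciding segment. Any such vertex lying in the relative interior of $e$ has three incident unit edge-vectors summing to zero and spaced $2\pi/3$ apart; since these three directions span an arc of $4\pi/3$, at least one of them points strictly into the object, which contradicts the fact (also from Lemma~\ref{lemma:char}) that $F^*$ avoids the interior of every object. Hence the endpoints of the coinciding segment must lie at or beyond the endpoints of $e$, so the first contact already passes through a corner of the object. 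With that, the case you flagged as hardest disappears and your proof agrees with the paper's.
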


\begin{proof}
Let us look at a connected component $T$ of $ F^*\setminus N$.
 By Lemma~\ref{lemma:char}, its leaves are
% is an embedded graph with leafs
 in $N$.  All other vertices have degree 3, and the incident edges meet at angles of $2\pi/3$.
 If $T$ contains a cycle $C$, we can push the edges of $C$ % to the
                                % interior of $C$ 
in a parallel fashion (forming an offset curve),
as shown in Figure~\ref{fig:polygon_example}. This operation does not change the
total length of $T$. This can be seen by looking at each degree-3
vertex $v$ individually: We enclose $v$ in a small equilateral
triangle whose sides cut the edges at right angles, %   are perpendicular,
see Figure~\ref{fig:triangle}. It is an easy
geometric fact that the sum of the distances from a point inside an equilateral
triangle to the three sides is constant. This implies that the length of 
%the portion of 
the fence
 inside the triangle is unchanged by the offset operation.
The portions of $C$ %the fence
 outside the triangles are just translated
and do not change their lengths either.
\begin{figure}[hbt]
\centering
\includegraphics%[width=3.5cm]
{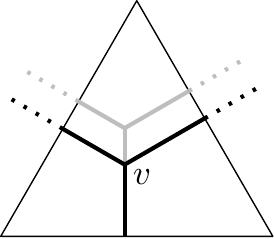}
\caption{Offsetting the cycle does not change the total length of the
  fence inside the triangle.}
\label{fig:triangle}
\end{figure}
%

% Let $n$ be the number of vertices of $C$ and write $n=n_c+n_r$, where $n_c$ and $n_r$ are the number of convex and reflex vertices of $C$, respectively.
% Similarly, write $n=n_{cc}+n_{cr}+n_{rr}+n_{rc}$, where $n_{cc}$ is the number of edges of $C$ between two convex vertices of $C$, $n_{cr}$ is the number of edges from a convex vertex to a reflex vertex in clockwise order, etc.
% Then $n_r=n_{rr}+n_{cr}$ and $n_{cr}=n_{rc}$.
% If we offset $C$ inwards by a distance of $d$, the length of each edge between two convex vertices decreases by $d$ and the length of each edge between two reflex vertices increases by $d$, whereas the edges between different types of vertices keep their length.
% We therefore obtain a cycle $C_d$ of perimeter $|C_d|=|C|+dn_{rr}-dn_{cc}$.
% When offsetting $C$ inwards by $d$, the edges pointing to the exterior of $C$ from convex vertices are extended by $d$, whereas the edges pointing to the interior of $C$ from reflex vertices are shortened by $d$.
% Therefore, the total increase of these edges' lengths is $dn_c-dn_r$.
% In total, we get that offsetting $C$ leads to an increase in length by
% $dn_{rr}-dn_{cc}+dn_c-dn_r=d(n_{rr}-n_{cc}+n-2n_r)=d(n_{rr}-n_{cc}+n-2(n_{rr}+n_{rc}))=d(n-n_{cc}-n_{rr}-n_{rc}-n_{cr})=0$.
% Hence, the length is invariant under offsetting.

As we offset the cycle $C$, an edge of $C$ must eventually hit a
corner of an object. %contained in the region enclosed by $C$, and $C$
                     %has thus been eliminated.
Another conceivable possibility is that an edge of $C$ between two degree-3 vertices is reduced to a
point, but this can be excluded because it would lead to an optimal fence violating
 Lemma~\ref{lemma:char}.

In this way, the cycles of $T$ can be eliminated one by one.
\end{proof}

\section{The Bicolored Case}
\label{sec:2}

In this section we consider the case %where the input objects are simple polygons 
of $k=2$ different colors.
Let $N$ be the set of all corners of the objects.
A line segment is said to be \emph{free} if it is disjoint from the
interior of every object.
A vertex $v$ of an optimal fence cannot have degree $3$ or more unless $v\in N$, as otherwise two
of the regions meeting at $v$ would be part of the same territory and
could be merged, thus reducing the length.
We therefore get the following consequence of Lemma~\ref{lemma:char}.

\begin{lemma}
\label{lemma:opt_fence_polygons}
An optimal fence consists of free line segments with endpoints in $N$.\qed
\end{lemma}

%\begin{proof}
%In an optimal partition, one of $\bar B_1, \bar B_2$ is bounded, while the other one is unbounded. W.l.o.g., assume that $\bar B_1$ is bounded. Consider a connected component $C$ of $\bar B_1$. It may be non-simple, i.e., contain holes, where each hole is a connected component of $\bar B_2$ (which may have holes as well). As such, its boundary, $\partial C$, may also have more than one connected components, where each component (including the outer one) is a closed Jordan curve separating the polygons of $B_1$ lying inside $C$ from one or more polygons of $B_2$ lying outside $C$. Consider such a curve $\alpha$. Since every curve that is homotopic to $\alpha$ in $\mathbb{R}^2 \setminus \cup_{P\in\mathcal{P}}P$ separates two polygons if and only if $\alpha$ does, $\alpha$ must be a shortest curve in its homotopy class.  A shortest curve (in general, a loop) that is homotopic to $\alpha$ in $\mathbb{R}^2 \setminus \cup_{P\in\mathcal{P}}P$ is a simple and closed polygonal chain consisting of free segments whose endpoints are vertices of the polygons in $\mathcal{P}$; see Hershberger and Snoeyink~\cite{HershSnoey1994}.
%\end{proof}

\begin{figure}
\centering
\includegraphics{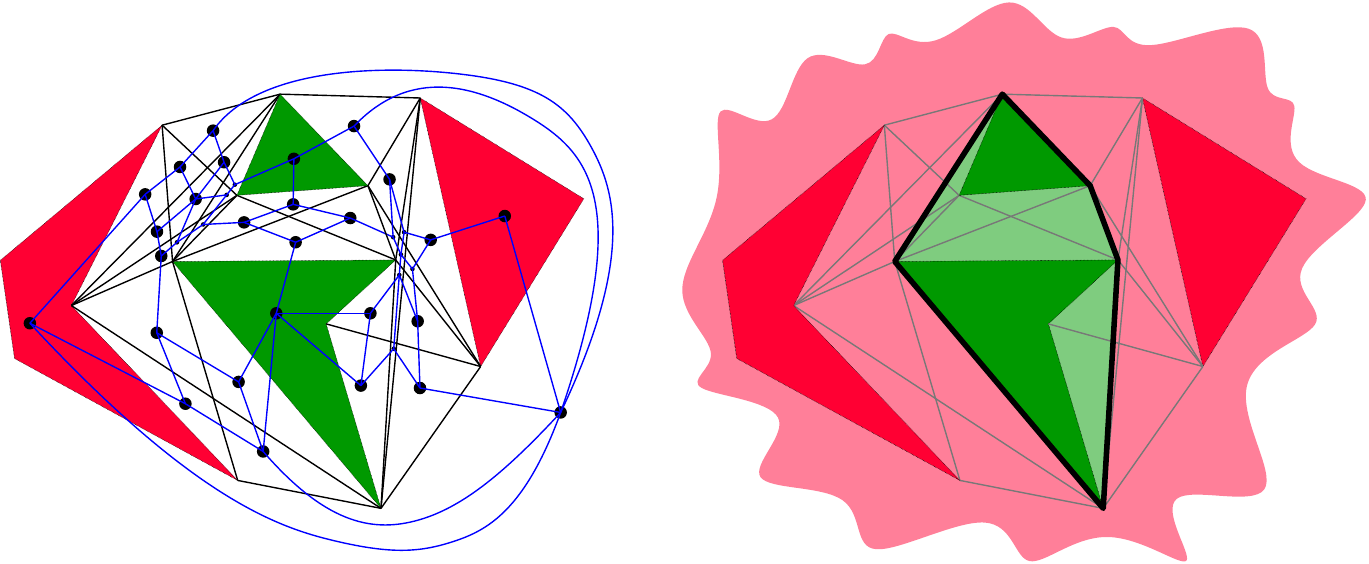}
\caption{Left: The arrangement $\mathcal A$ induced by an instance of GEOMETRIC $2$-CUT with two green and two red objects. The edges of the dual graph $G$ are blue. Right: The optimal solution.}
\label{fig:bichrom}
\end{figure}

Let $S$ be the set of all free segments with endpoints in $N$.
%corners of polygons in $\mathcal{P}$;
$S$ includes all edges of the objects.
%every segment that is part of the boundary of some polygon in $\mathcal{P}$ is contained in $S$.
Let $\mathcal{A}$ be the arrangement induced by $S$,
see Figure~\ref{fig:bichrom}.
Consider an optimal fence $F^*$ and the associated territories $\bar B_1$ and $\bar B_2$.
Lemma~\ref{lemma:opt_fence_polygons} implies that $F^*$ is contained in~$\mathcal{A}$.
Thus, each cell of $\mathcal{A}$ belongs entirely either to $\bar B_1$ or $\bar B_2$.
The objects are cells of $\mathcal{A}$ whose classification (i.e., membership of $\bar B_1$ versus $\bar B_2$) is fixed.
In order to find $F^*$, we need to select the territory that each of
the other cells belongs to.
Since $|S|=O(n^2)$, $\mathcal{A}$
has size $O(|S|^2)=O(n^4)$ and can be computed in $
O(|\mathcal{A}|)= O(n^4)$
time~\cite{chazelle1992optimal}.
For simplicity, we stick with the worst-case bounds.
In practice, set $S$ can be pruned by observing that the edges of
an optimal fence must be \emph{bitangents} that touch the objects in a
certain way, because the curves of the fence are locally shortest.

 Finding an optimal fence amounts to minimizing the boundary between $\bar B_1$ and $\bar B_2$.
This can be formulated as a minimum-cut problem in the dual graph $G(V, E)$ of the arrangement $\mathcal{A}$.  There is a node in $V$ for each cell and a weighted edge in $E$ for each pair of adjacent cells: the weight of the edge is the length of the cells' common boundary.
Let $S_1,S_2\subset V$ be the sets of cells that contain the objects of $B_1,B_2$, respectively.
We need to find the minimum cut that separates $S_1$ from $S_2$.
This can be obtained by finding the maximum flow in $G$ from the sources $S_1$ to the sinks $S_2$, where the capacities are the weights.
As $G$ is a planar graph, we can use the algorithm by Borradaile et al.~\cite{borradaile2017multiple} with running time $O(|V|\log^3|V|)$.
The running time has since then been improved to $O(\frac {|V|\log^3|V|}{\log^2\log |V|})$~\cite{gawrychowski2016improved}.
As $|V|=O(|S|^2)=O(n^4)$, we obtain the following theorem.

\begin{theorem}
GEOMETRIC $2$-CUT can be solved in time $O(\frac{n^4\log^3
  n}{\log^2\log n})$.
\qed
\end{theorem}

A similar algorithm has been described before in a slightly different context: image segmentation~\cite{greig1989exact}, see also~\cite{borradaile2017multiple}.
Here, we have a rectangular grid of pixels, each having
 a given gray-scale value.
Some pixels are known to be either black or white.
The remaining pixels have to be assigned either the black or the white
color.
Each pixel has edges to its (at most four) neighbors.
The weights of these  edges % between neighbouring pixels
 can be chosen in such a way  that the minimum cut problem corresponds to minimizing a cost function consisting of two parts:
One part, the \emph{data component}, has a term for each pixel, and it
measures the discrepancy between the gray-value of the pixel and the assigned value.
The other part, the \emph{smoothing component}, penalizes neighboring
pixels
 with similar gray-values
 that are assigned different colors. 
\iffull\else\looseness-1\fi
%, so that pixels observed colors lead to a higher penalty than pixels with a large difference.

%%% Local Variables:
%%% mode: latex
%%% TeX-master: "geometric-multisep"
%%% End:

\section{Hardness of the Tricolored Case}
\label{sec:3}

We show how to construct an instance $I$ of GEOMETRIC $3$-CUT from an instance $\Phi$ of PLANAR POSITIVE 1-IN-3-SAT.
%Given an instance $\Phi$ of PLANAR POSITIVE 1-IN-3-SAT, we show how to make an instance $I$ of GEOMETRIC $3$-CUT.
For ease of presentation, we first describe the reduction
geometrically, allowing irrational coordinates. % were allowed as the coordinates of the objects in $I$.
We prove that if $\Phi$ is satisfiable, then $I$ has a fence of cost $M^*$, whereas if $\Phi$ is not satisfiable, then the cost is at least $M^*+1/50$.
We then argue that the corners %with irrational coordinates
can be slightly moved to make a new instance $I'$ with rational
coordinates while still being able to distinguish whether $\Phi$ is
satisfiable or not, based on the cost of an optimal fence. % to $I'$.

In order to make the proof as simple as possible, we introduce a new specialized problem COLORED TRIGRID POSITIVE 1-IN-3-SAT in the following.

\subsection{Auxiliary NP-complete problems}

\begin{definition}
In the POSITIVE 1-IN-3-SAT problem, we are given a collection $\Phi$ of clauses containing exactly three distinct variables (none of which are negated).
The problem is to decide whether there exists an assignment of truth values to the variables of $\Phi$ such that exactly one variable in each clause is true.
\end{definition}

%\begin{definition}
%In the PLANAR POSITIVE 1-IN-3-SAT problem, we are given an instance $\Phi$ of POSITIVE 1-IN-3-SAT together with a planar embedding of an associated graph $G(\Phi)$ with the following properties:
%\begin{itemize}
%\item
%for each variable $x$, there is a horizontal segment $v_x$,

%\item
%for each clause $C=\{x,y,z\}$, there is a horizontal segment $c_C$ and three vertical segments with one endpoint on $c_C$ and one on each of $v_x,v_y,v_z$,

%\item
%except for the described incidences, all segments are pairwise disjoint,

%\item
%each segment has endpoints with integer coordinates bounded by a linear function of the size of $\Phi$.
%\end{itemize}
%The problem is to decide whether $\Phi$ has a satisfying assignment.
%\end{definition}

\begin{definition}
In the TRIGRID POSITIVE 1-IN-3-SAT problem, we are given an instance $\Phi$ of POSITIVE 1-IN-3-SAT together with a planar embedding of an associated graph $G(\Phi)$ with the following properties:
\begin{itemize}
\item
$G(\Phi)$ is a subgraph of a regular triangular grid,

\item
for each variable $x$, there is a simple cycle $v_x$,

\item
for each clause $C=\{x,y,z\}$, there is a path $c_C$ and three vertical paths $\ell^C_x,\ell^C_y,\ell^C_z$ with one endpoint at a vertex of $c_C$ and one at a vertex of each of $v_x,v_y,v_z$,

\item
except for the described incidences, no edges share a vertex,

\item
all vertices have degree $2$ or $3$,

\item
any two adjacent edges form an angle of $\pi$ or $2\pi/3$,

\item
the number of vertices is bounded by a quadratic function of the size of $\Phi$.
\end{itemize}
The problem is to decide whether $\Phi$ has a satisfying assignment.
\end{definition}

\begin{figure}
\centering
\includegraphics[page=1]{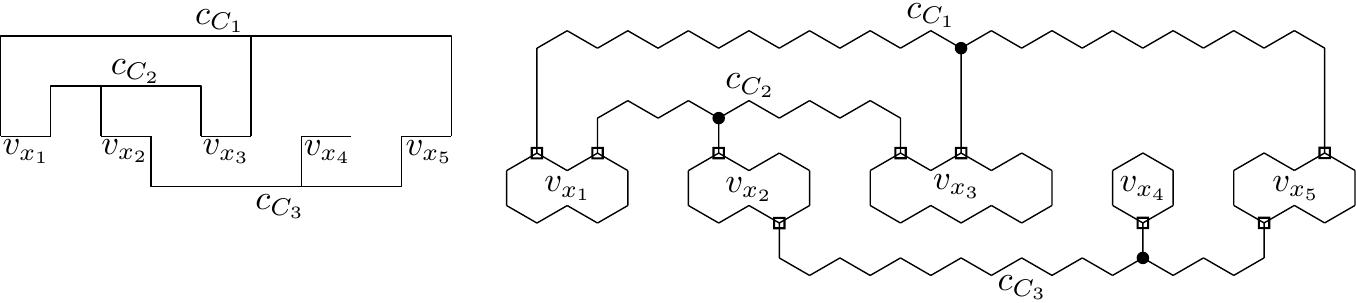}
\caption{Left: An instance of PLANAR POSITIVE 1-IN-3-SAT for the formula $\Phi=C_1\land C_2\land C_3$ for $C_1=x_1 \lor x_3 \lor x_5$, $C_2=x_1 \lor x_2 \lor x_3$, and $C_3=x_2 \lor x_4 \lor x_5$.
Right: A corresponding instance of TRIGRID POSITIVE 1-IN-3-SAT.
Clause vertices are drawn as dots and branch vertices as boxes.}
\label{trigrid:fig}
\end{figure}

Mulzer and Rote~\cite{mr-mwtnh-08} showed that another problem, PLANAR POSITIVE 1-IN-3-SAT, is NP-complete, which is similar but uses a slightly different embedding with axis-parallel segments.
It trivially follows that TRIGRID POSITIVE 1-IN-3-SAT is also NP-complete, see Figure~\ref{trigrid:fig}.
%In particular, we want $G(\Phi)$ to be embedded in a regular triangular grid, since that makes it possible for all adjacent edges (each of unit length) to meet at angles of either $\pi$ or $2\pi/3$.
%we want all segments to be either vertical or have slopes $\frac{\sin \frac{\pi}6}{\pm \cos\frac{\pi}6}=\pm 1/\sqrt 3$, since that makes it possible for all adjacent segments to meet at angles of $2\pi/3$.
%We define a \emph{zig-zag curve} to be an $x$-monotone curve consisting of segments of length $1$ with slopes alternating between $1/\sqrt 3$ and $-1/\sqrt 3$.
%Furthermore, we define a \emph{zig-zag loop} to be two zig-zag curves that span the same $x$-coordinates and are connected by vertical segments at the ends to form a simple, closed curve.

%\begin{corollary}
%The problem TRIGRID POSITIVE 1-IN-3-SAT is NP-complete.
%\end{corollary}

%\begin{proof}
%See Figure~\ref{trigrid:fig}.
%\end{proof}

Consider an instance $(\Phi,G(\Phi))$ of TRIGRID POSITIVE 1-IN-3-SAT.
There are some vertices of degree three on the cycles $v_x$ corresponding to each variable $x$ in $\Phi$, and these we denote as \emph{branch vertices} of $G(\Phi)$.
There is also one vertex of degree three on the path $c_C$ corresponding to each clause $C$ in $\Phi$, which we denote as a \emph{clause vertex}.
Except for branch and clause vertices, at most two edges meet at each vertex.

Let $\mathcal C$ be the set of all clause vertices (considered as geometric points).
Removing $\mathcal C$ from $G(\Phi)$ (considered as a subset of $\mathbb R^2$) splits $G(\Phi)$ into one connected component $E_x$ for each variable $x$ of $\Phi$.
The idea of our reduction to GEOMETRIC $3$-CUT is to build a \emph{channel} on top of $E_x$ for each variable $x$.
The channel has constant width $1/2$ and contains $E_x$ in the center.
The channel contains small \emph{inner} objects and is bounded by larger \emph{outer objects} of another color.
There will be two equally good ways to separate the inner and outer objects, namely taking an individual fence around each inner object and taking long fences along the boundaries of the channel that enclose as many inner objects as possible.
Any other way of separating the inner from the outer objects will require more fence.
These two optimal fences play the roles of $x$ being true and false, respectively.

At the clause vertices where three regions $E_x,E_y,E_z$ meet, we make a clause gadget that connect the three channels corresponding to $x,y,z$.
The objects in the clause gadget can be separated using the least amount of fence if and only if one of the channels is in the state corresponding to true and the other two are in the false state.
Therefore, this corresponds to the clause in $\Phi$ being satisfied.

In order to make this idea work, we first assign every edge of $G(\Phi)$ an \emph{inner} and an \emph{outer} color among $\{\text{red},\text{green},\text{blue}\}$.
These will be used as the colors of the inner and outer objects of the channel later on.
We require the following of the coloring:
\begin{enumerate}
\item
The inner and outer colors of any edge are distinct.\label{col:req1}

\item
Any two adjacent collinear edges have the same inner or outer color.\label{col:req2}

\item
Any two adjacent edges that meet at an angle of $2\pi/3$ at a non-clause vertex have the same inner and the same outer color.\label{col:req3}

\item
The inner colors of the three edges meeting at a clause vertex are red, green, blue in clockwise order, while the outer colors of the same edges are blue, red, green, respectively.\label{col:req4}
\end{enumerate}

We now introduce the problem COLORED TRIGRID POSITIVE 1-IN-3-SAT, which we will reduce to GEOMETRIC $3$-CUT.
\iffull\else
It is an easy exercise to show that the problem is NP-complete, see Figure~\ref{fig:2zigzag}.
\fi

\begin{figure}
\centering
\includegraphics[page=2,width=0.75\textwidth]{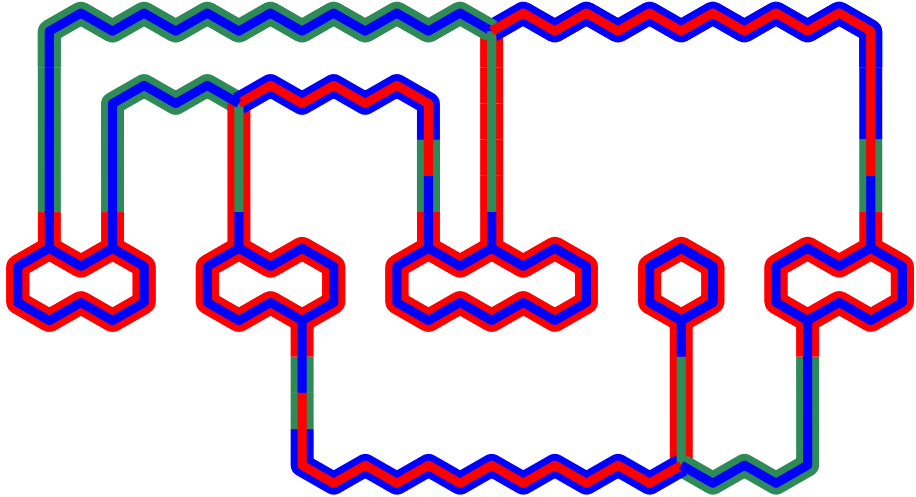}
\caption{An instance of COLORED TRIGRID POSITIVE 1-IN-3-SAT based on the instance from Figure~\ref{trigrid:fig}.}
\label{fig:2zigzag}
\end{figure}

\begin{definition}\label{def:CZZPP}
In the COLORED TRIGRID POSITIVE 1-IN-3-SAT problem, we are given an instance $(\Phi,G(\Phi))$ of TRIGRID POSITIVE 1-IN-3-SAT together with a coloring of the edges of $G(\Phi)$ satisfying the above requirements.
We want to decide whether $\Phi$ has a satisfying assignment.
\end{definition}

\iffull
\begin{lemma}\label{lemma:ctp}
The problem COLORED TRIGRID POSITIVE 1-IN-3-SAT is NP-complete.
\end{lemma}

\begin{proof}
Membership in NP is obvious.
For NP-hardness, we reduce from TRIGRID POSITIVE 1-IN-3-SAT.
Let $(\Phi,G(\Phi))$ be an instance of the latter and $G'(\Phi)$ be the graph obtained from $G(\Phi)$ by expanding the vertical paths $\ell^C_x$, so that they have length at least $4$.
The cycles $v_x$ and paths $c_C$ are shifted up or down accordingly, see Figure~\ref{fig:2zigzag}.
We specify the coloring of $G'(\Phi)$ below.

We color each triple of edges meeting at a clause vertex so that requirement~\ref{col:req4} is met.
In each of the paths $c_C$, we have colored one edge on each side of the clause vertex, and we use the colors of that edge to color the rest of the edges on that side.
Next, we choose two distinct colors that we use as the inner and outer colors of all the cycles $v_x$ containing the branch vertices.
For each vertical path $\ell^C_x$, the edge adjacent to the cycle $v_x$ has to be colored with the same two colors.

It remains to color some edges of each vertical path $\ell^C_x$.
Since the vertical paths have length at least $4$ and only the first and last edges have been colored, it is possible to change inner and outer color three times along each of them.
The maximum number of changes is needed when the edge adjacent to a clause vertex has the inner and outer colors swapped as compared to the edge adjacent to a branch vertex, in which case exactly three changes are needed.
Therefore, it is possible to adjust the colors so that the entire path gets colored.
We have hence constructed an instance of COLORED TRIGRID POSITIVE 1-IN-3-SAT based on the same instance $\Phi$ of POSITIVE 1-IN-3-SAT that we were given.
\end{proof}
\else
\fi

\subsection{Building a GEOMETRIC $3$-SAT instance from tiles}\label{sec:tileDefs}

Consider an instance $(\Phi,G(\Phi))$ of COLORED TRIGRID POSITIVE 1-IN-3-SAT that we will reduce to GEOMETRIC $3$-CUT.
We make the construction using hexagonal \emph{tiles} of six different types, namely \emph{straight}, \emph{inner color change}, \emph{outer color change}, \emph{bend}, \emph{branch}, and \emph{clause} tiles.
Each tile is a regular hexagon with side length $1/\sqrt 3$ and hence has width $1$.
The tiles are rotated such that they have two horizontal edges.

The tiles are placed so that each tile is centered at a vertex $p$ of $G(\Phi)$.
Let $G_p$ be the part of $G(\Phi)$ within distance $1/2$ from $p$ (recall that each edge of $G(\Phi)$ has length $1$).
Figure~\ref{fig:tiles} shows the tiles and how they are placed according to the shape and colors of $G_p$.

\begin{figure}
\centering
\halign{&\hfil#\hfil\cr
straight\ &inner color change\ & outer color change\ \cr
\includegraphics[width=0.33\textwidth]{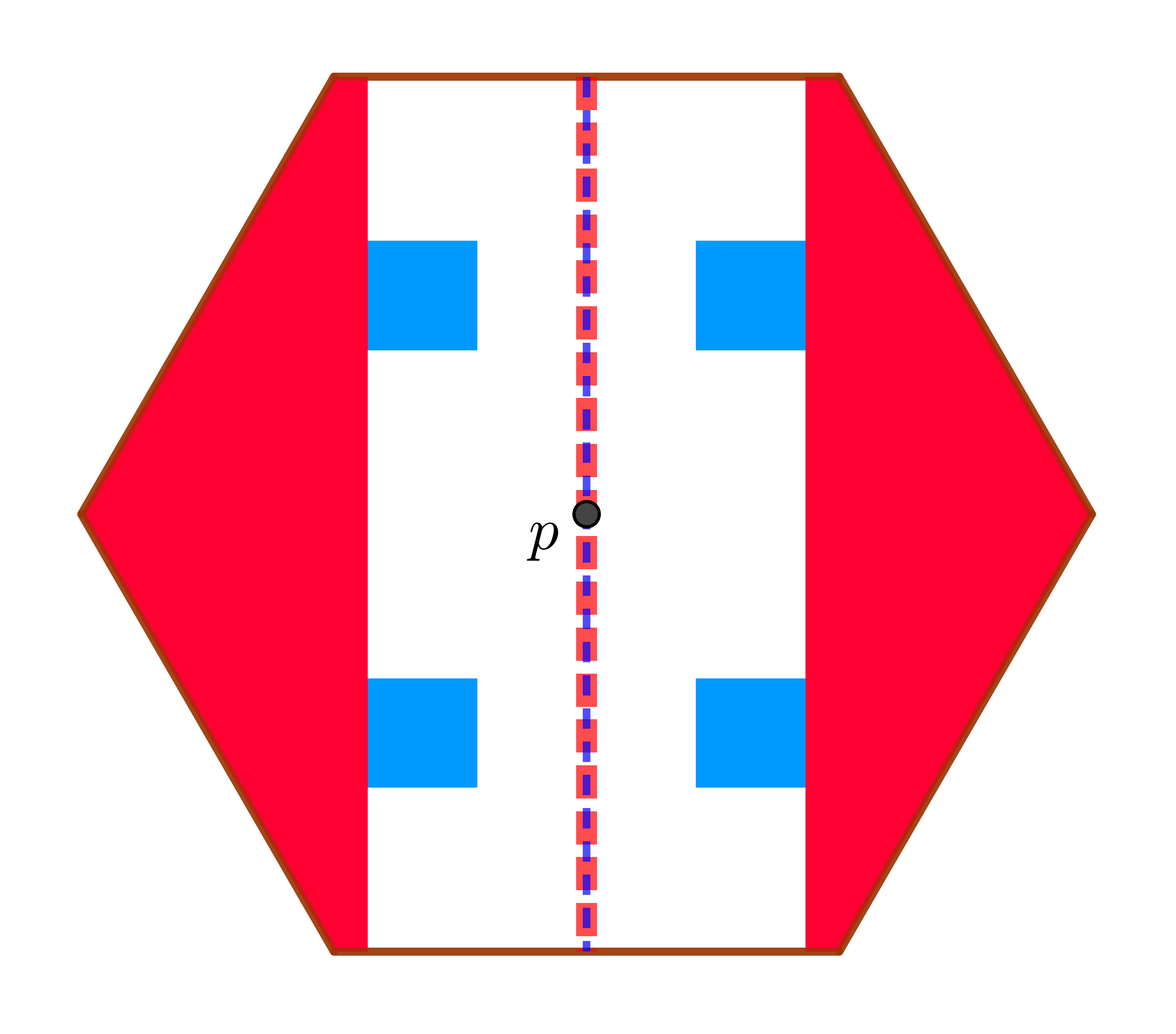}
&
\includegraphics[width=0.33\textwidth]{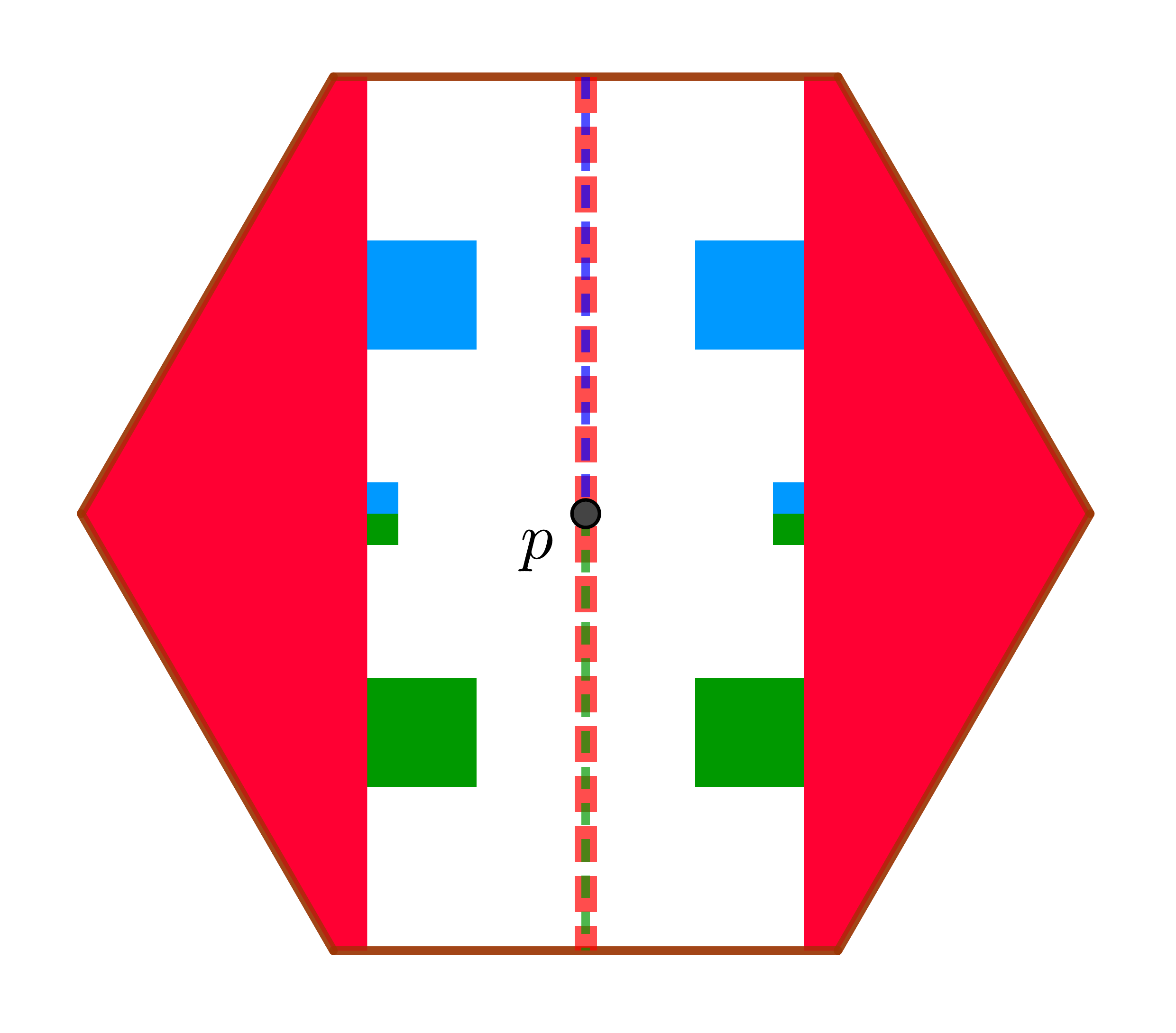}
&
\includegraphics[width=0.33\textwidth]{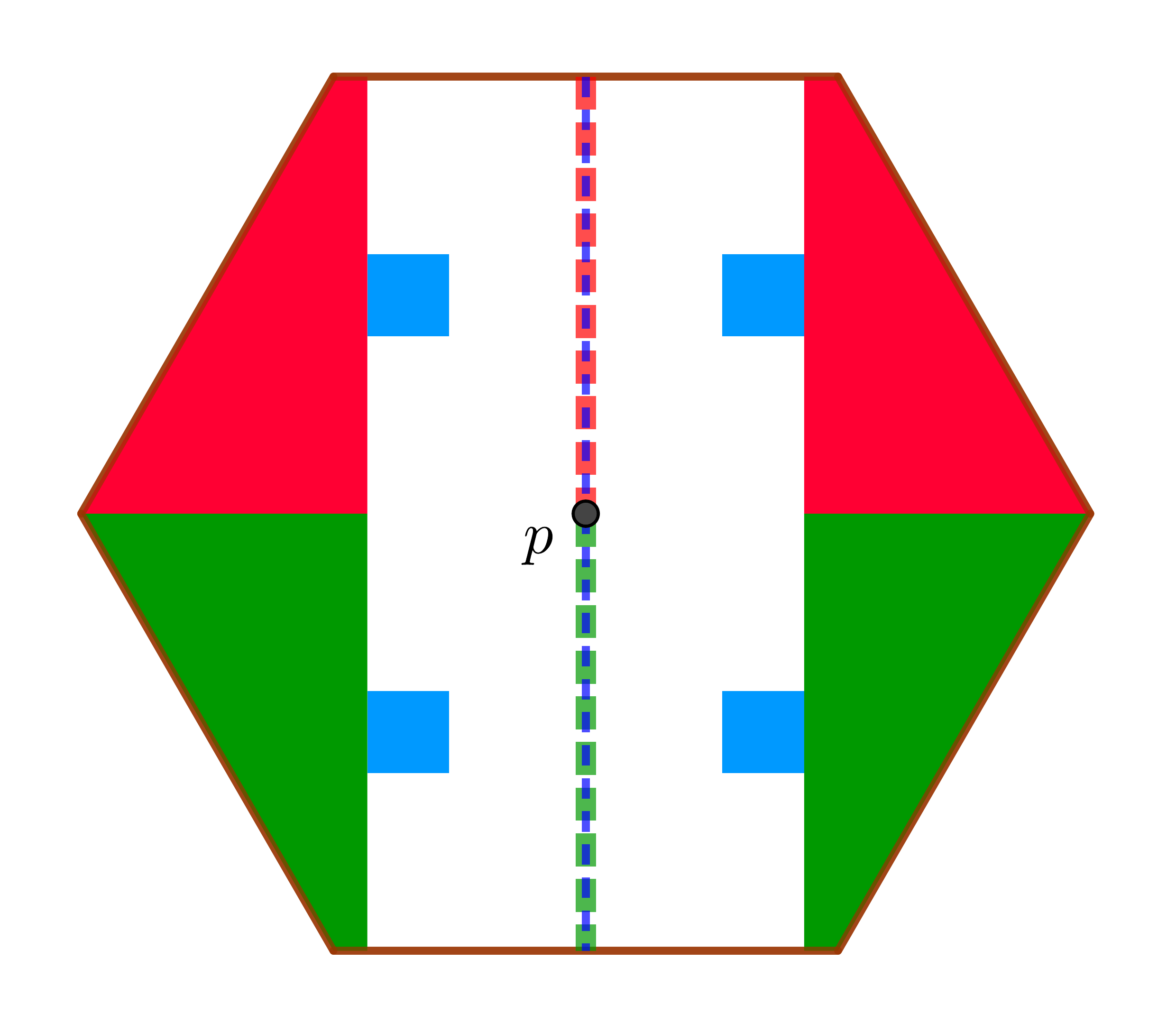}
\cr
%\noalign{\smallskip}
 bend & branch & clause \cr
\includegraphics[width=0.33\textwidth]{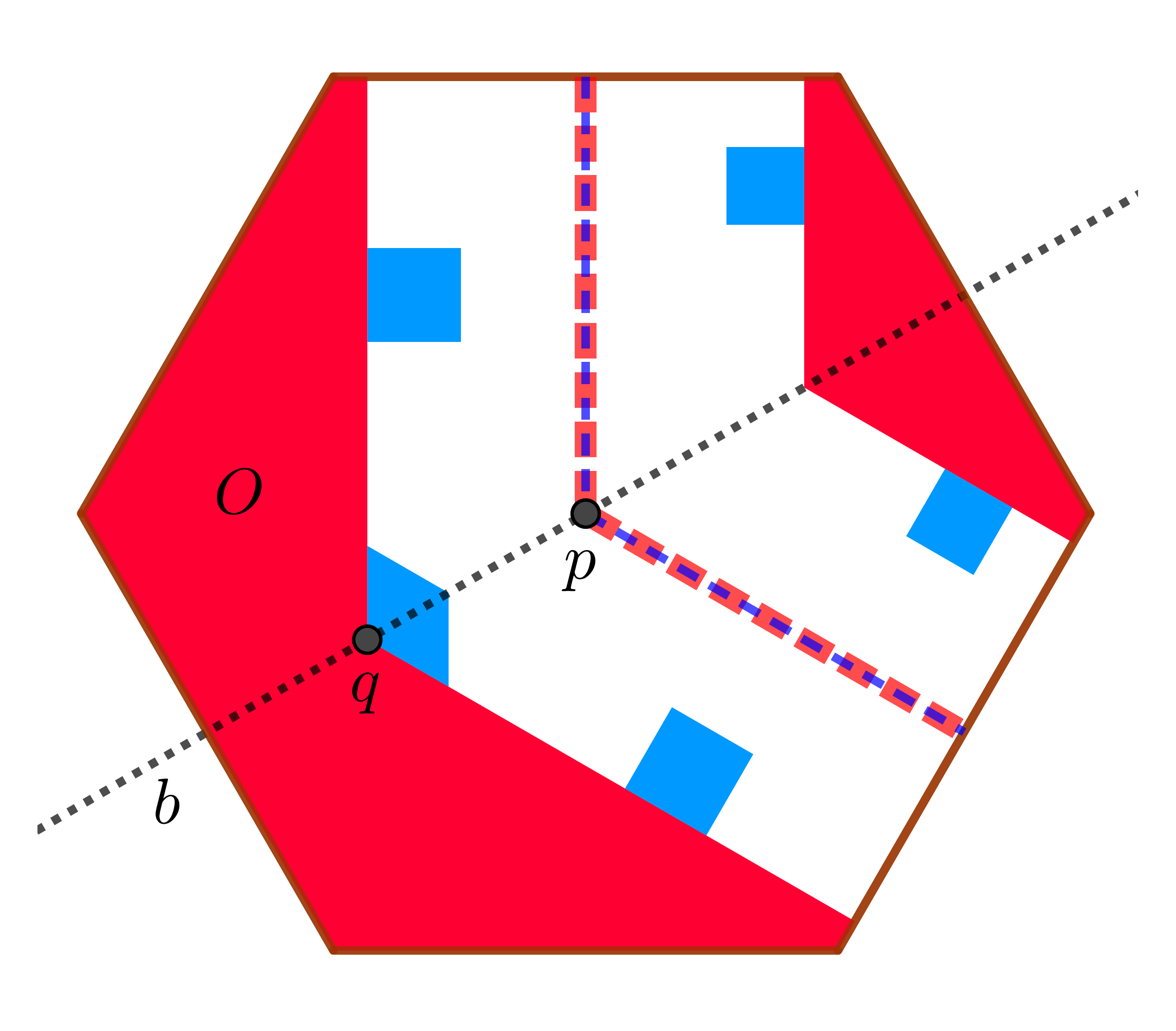}
&
\includegraphics[width=0.33\textwidth]{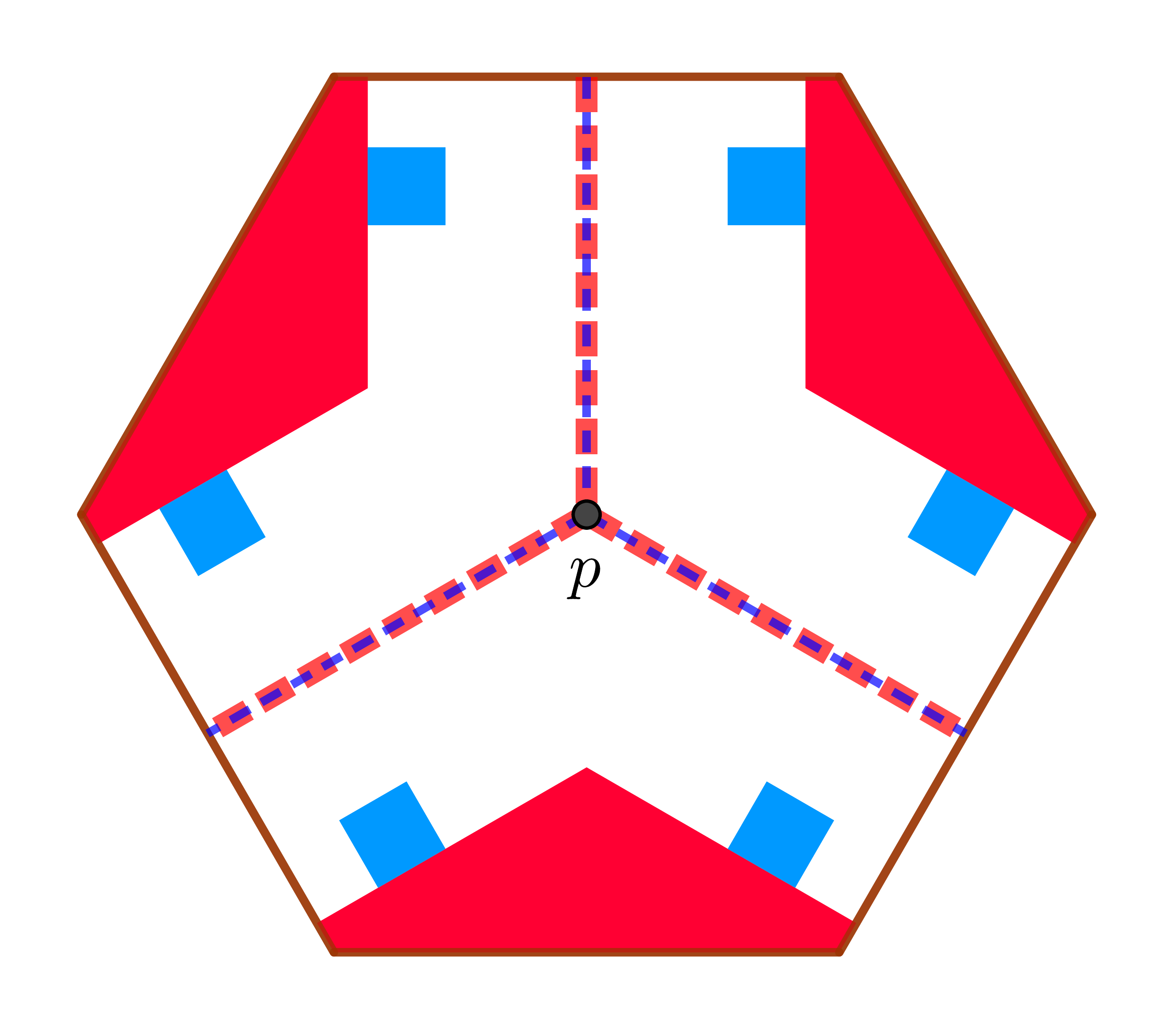}
&
\includegraphics[width=0.33\textwidth]{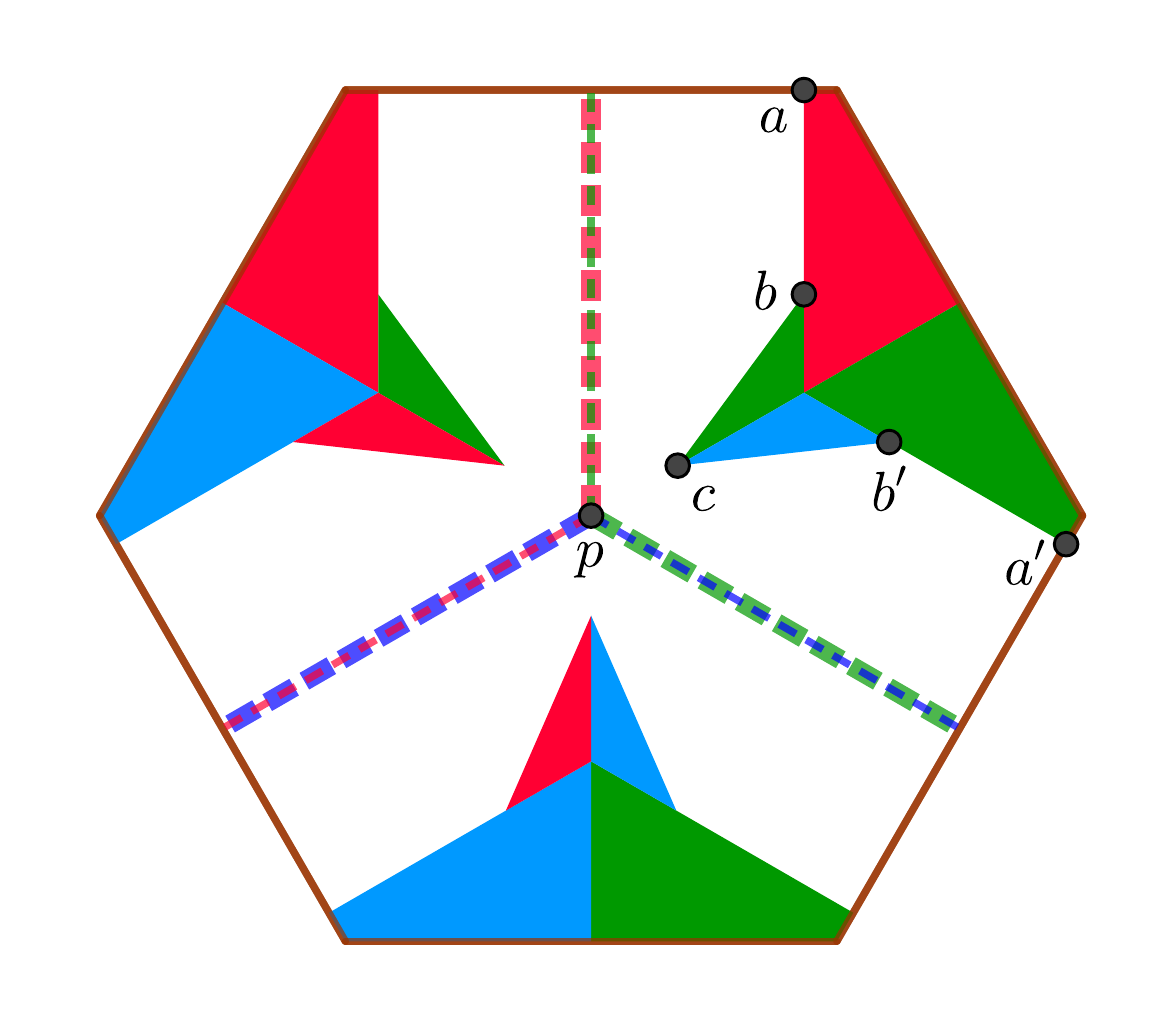}
\cr
}

\caption{Different kinds of tiles used in the reduction to GEOMETRIC
  $3$-CUT.
%: straight, inner color change, outer color change, bend, branch, and clause tile.
The dashed colored segments show $G_p$ and the inner and outer color of $G_p$.
The tiles are colored accordingly.
The points in the clause tile are defined so that $\|ab\|=\|a'b'\|=6/25=0.24$ and $\|bc\|=\|b'c\|=1/4=0.25$.
Point $c$ has coordinates $(x,x/\sqrt 3)$, where $x= {\frac {13\sqrt {3}}{200}}+3/16-{\frac {\sqrt {-459+3900\sqrt {3}}}{400}}$ is a solution to $10000x^2+(-1300\sqrt 3-3750)x+507=0$.
The remaining points in the tile are given by rotations by angles $2\pi/3$ and $4\pi/3$ around $p$.}
\label{fig:tiles}
\end{figure}

In order to define the outer objects of a tile, we consider the straight skeleton offset~\cite{Aichholzer1996} of $G_p$ at distance $1/4$.
With the exception of the bend tile, this offset is the same as the Euclidean offset.
By the \emph{outer} and \emph{inner region}, we mean the region of the tile outside, resp.~inside, this offset.
The outer objects cover the outer region, and every point is colored with the outer color of a closest edge in $G_p$.
The inner region is empty except for the inner objects described in each case below.
We suppose that $p=(0,0)$.

\textbf{The straight tile.}
If two collinear edges meet at $p$ with the same inner and outer color, we use a straight tile.
Suppose in this and the following two cases that $G_p$ is the vertical line segment from $(0,-1/2)$ to $(0,1/2)$---tiles for edges of other slopes are obtained by rotation of the ones described here.
There are four axis-parallel squares of the inner color of $G_p$ with side length $1/8$ centered at $(\pm (1/4-1/16),\pm 1/4)$.
This size is chosen so their total perimeter is $2$, which is the length of the common boundary of the inner and outer regions.

\textbf{The inner color change tile.}
If two collinear edges meet at $p$ with different inner colors, we use an inner color change tile.
There are again four squares colored in the inner color of the closest point in $G_p$.
There are also four smaller axis-parallel squares with side length $1/28$ centered at $(\pm (1/4-1/56),\pm 1/56)$, likewise colored in the inner color of the closest point in $G_p$.
The size of these small squares is chosen so that they can be  individually enclosed using fences of total length $14\cdot 1/28=1/2$, which is the width of the inner region.

\textbf{The outer color change tile.}
If two collinear edges meet at $p$ with different outer colors, we use an outer color change tile.
There are four axis-parallel squares of the inner color of $G_p$ with side length $3/32$.
Their centers are $(\pm (1/4-3/64),\pm 1/4)$.
The size of these squares is chosen so that their total perimeter is $2-1/2=3/2$.

\textbf{The bend tile.}
If two non-collinear edges meet at $p$, we use a bend tile.
Consider the case where $G_p$ is the vertical line segment from $p$ to $(0,1/2)$ and the segment of length $1/2$ from $p$ with direction $(\cos \pi/6,-\sin\pi/6)$.
The other cases are obtained by a suitable rotation of this tile.
There is an axis parallel square of side length $x=\frac{6+\sqrt 3}{72}$ with center $(-(1/4-x/2),1/4)$ and another with side length $y=\frac{6-\sqrt 3}{48}$ centered at $(1/4-y/2, 3/8)$.
The tile is symmetric with respect to the angular bisector $b$ of $G_p$, and so the reflections of the described squares with respect to $b$ are also inner objects.
Note that there are two outer objects, one of which, $O$, has a concave corner $q$ with exterior angle $2\pi/3$.
We place a parallelogram with side length $x$, a corner at $q$, and two edges contained in the edges of $O$ incident at $q$.
It is easy to verify that the common boundary of the inner and outer regions has a total length of $2$; the inner objects are chosen such that their total perimeter is also $2$.

\textbf{The branch tile.}
If $p$ is a branch vertex, we use the branch tile.
There are two cases: $G_p$ either contains the vertical segment from $p$ to $(0,1/2)$ or that from $p$ to $(0,-1/2)$.
We specify the tile in the first case---the other can be obtained by a rotation of $\pi$.
There are axis-parallel squares of side length $y=\frac{6-\sqrt 3}{48}$ centered at $(\pm(1/4-y/2), 3/8)$ and their rotations around $p$ by angles $2\pi/3$ and $4\pi/3$.
The common boundary of the inner and outer regions has a total length of $\frac{6-\sqrt 3}{2}$, and the total perimeter of the inner objects is also $\frac{6-\sqrt 3}{2}$.

\textbf{The clause tile.}
If $p$ is a clause vertex, we use the clause tile (defined in Figure~\ref{fig:tiles}).
The other clause tiles are given by rotations of the described tile by angles $k\pi/3$ for $k=1,\ldots,5$.

\subsection{Solving the tiles}

Let an instance $I$ of GEOMETRIC $3$-SAT be given together with an associated fence $\mathcal F$.
Consider the restriction of $I$ to a convex polygon $P$ and the part
of the fence $\mathcal F\cap P$ inside $P$.
Note that $\mathcal F\cap P$ consists of (not necessarily disjoint) closed curves and open curves with endpoints on the boundary $\partial P$, such that no two objects in $P$ of different color can be connected by a path $\pi\subset P$ unless $\pi$ intersects $\mathcal F$.
(An open curve is a subset of a larger closed curve of $\mathcal F$ that continues outside $P$.)
We say that a set of closed and open curves in $P$ with that property is a \emph{solution} to $I\cap P$.
In the following, we analyze the solutions to the tiles defined in Section~\ref{sec:tileDefs} in order to characterize the solutions of minimum cost.
We say that two closed curves (disjoint from the interiors of the objects) are \emph{homotopic} if one can be continuously deformed into the other without entering the interiors of the objects.
Two open curves with endpoints on the boundary of the tile are homotopic if they are subsets of two homotopic closed curves (that extends outside the tile).

\begin{figure}
\centering
\includegraphics[width=0.24\textwidth]{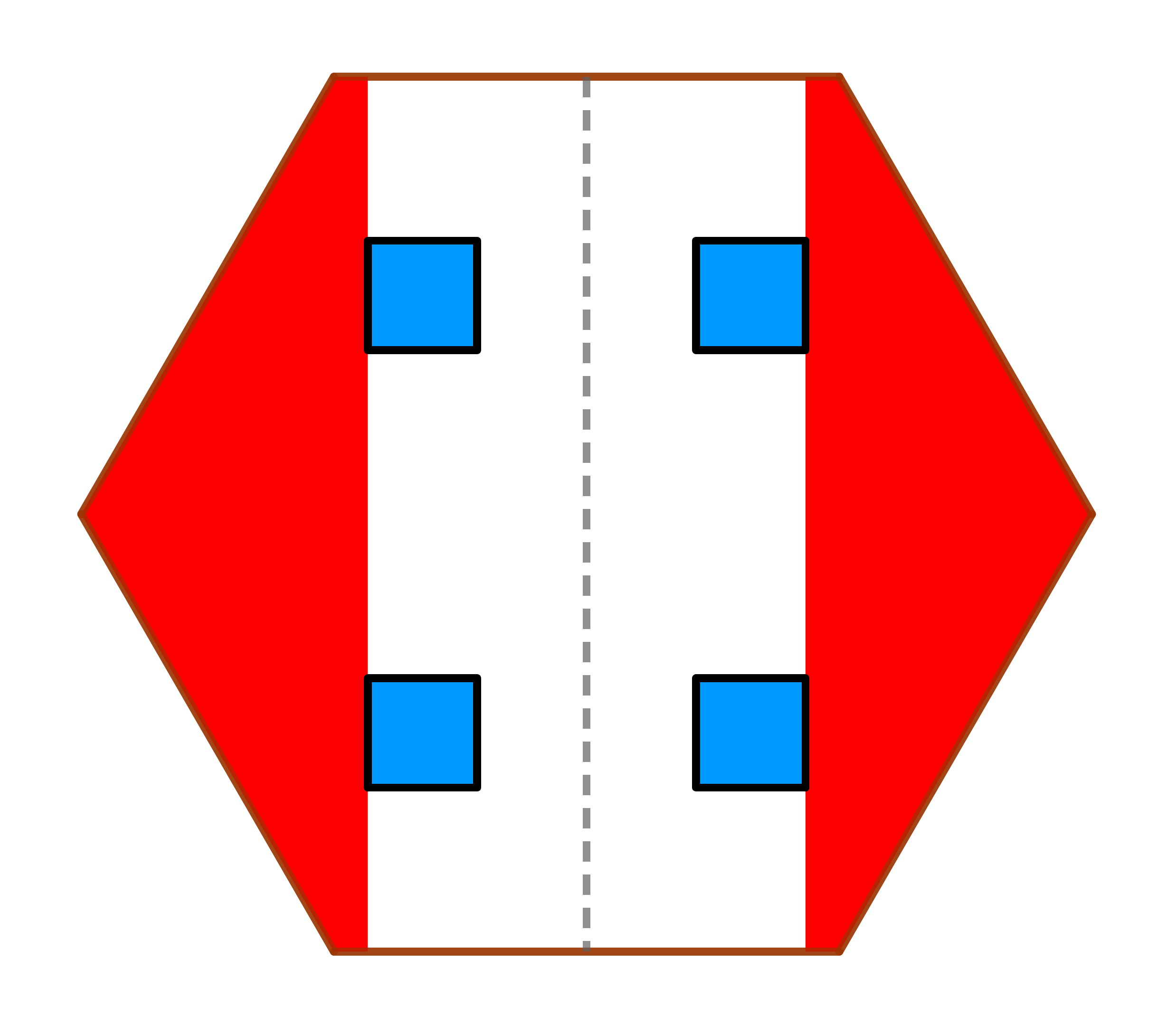}
\includegraphics[width=0.24\textwidth]{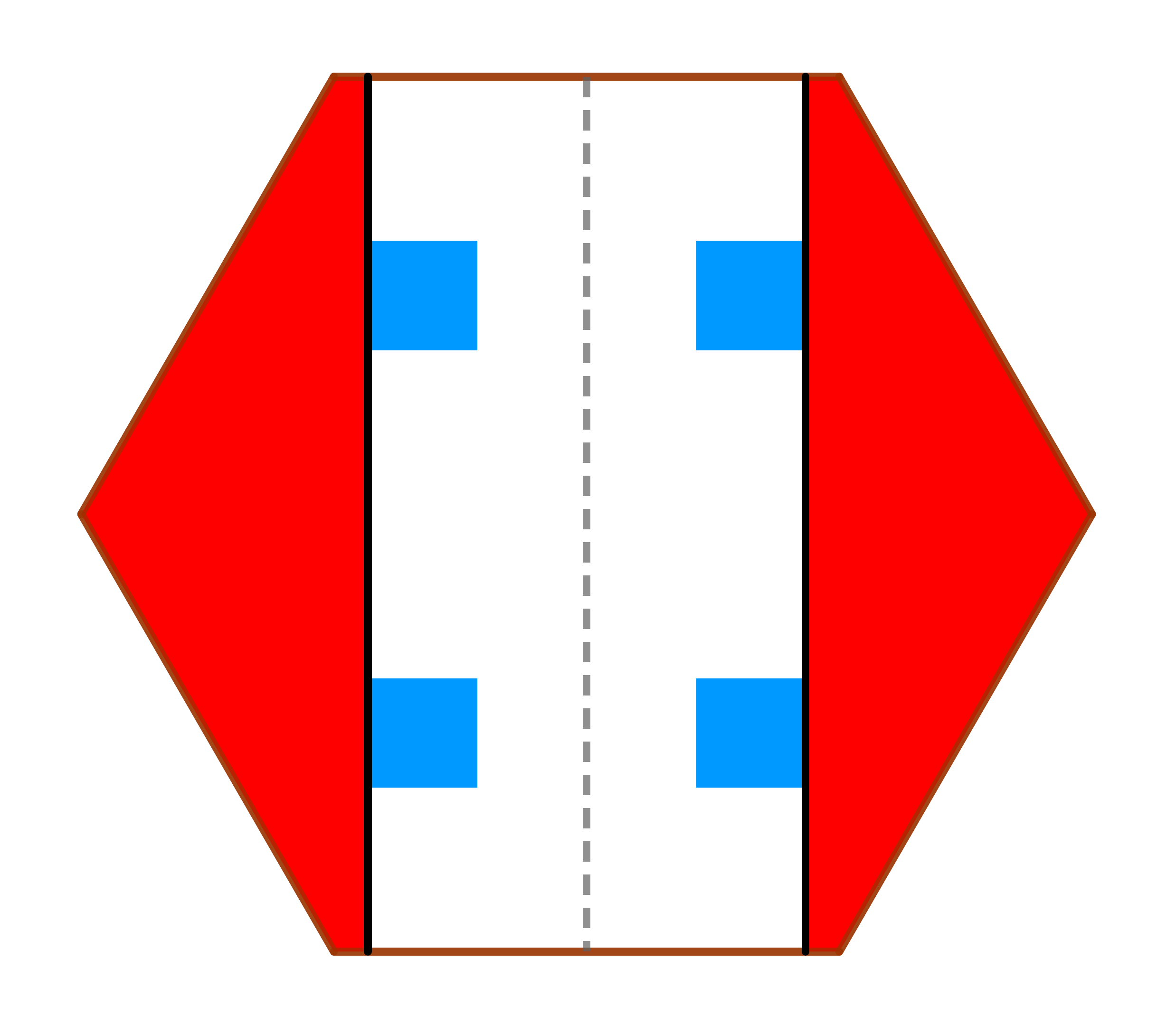}
\includegraphics[width=0.24\textwidth]{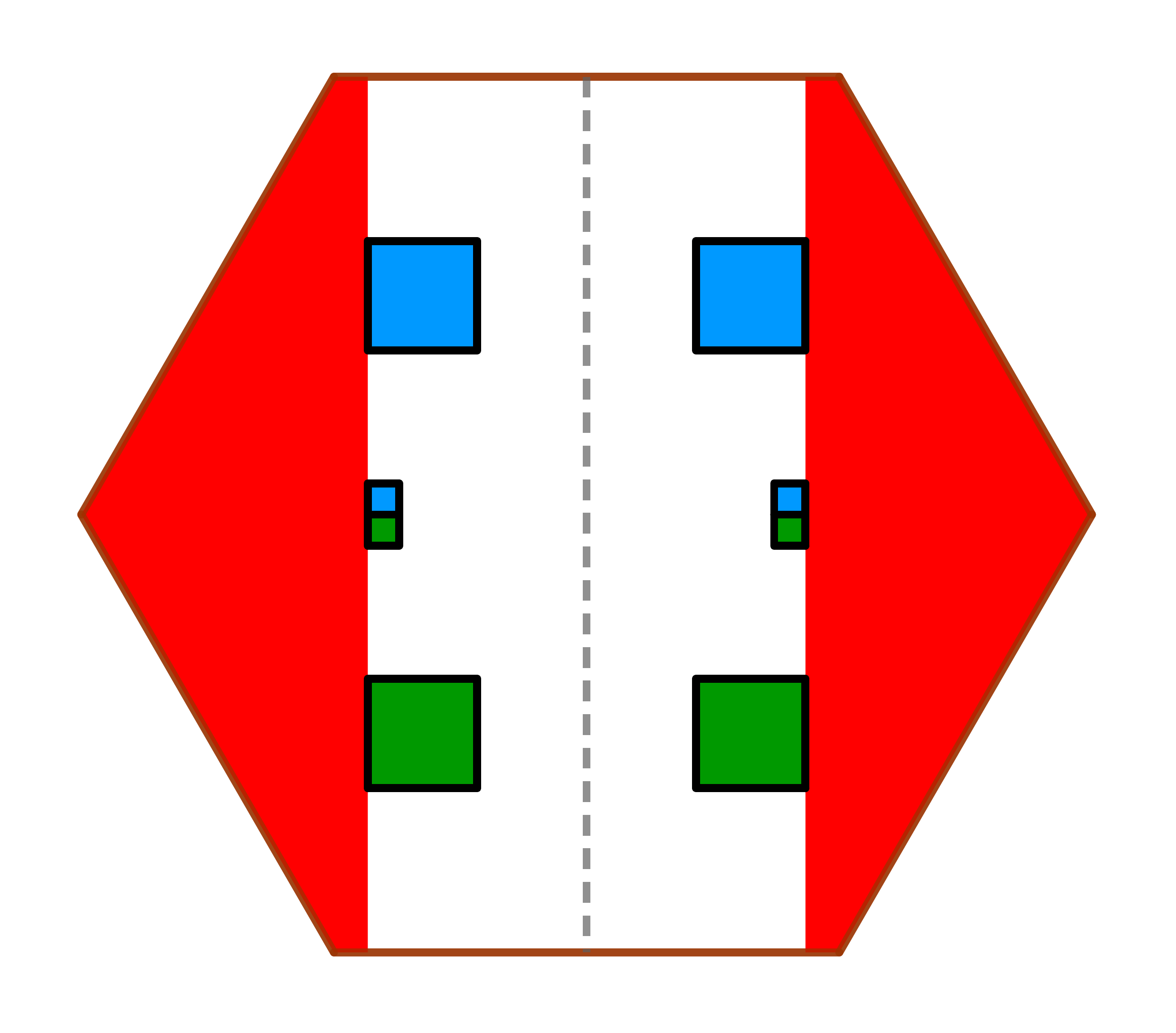}
\includegraphics[width=0.24\textwidth]{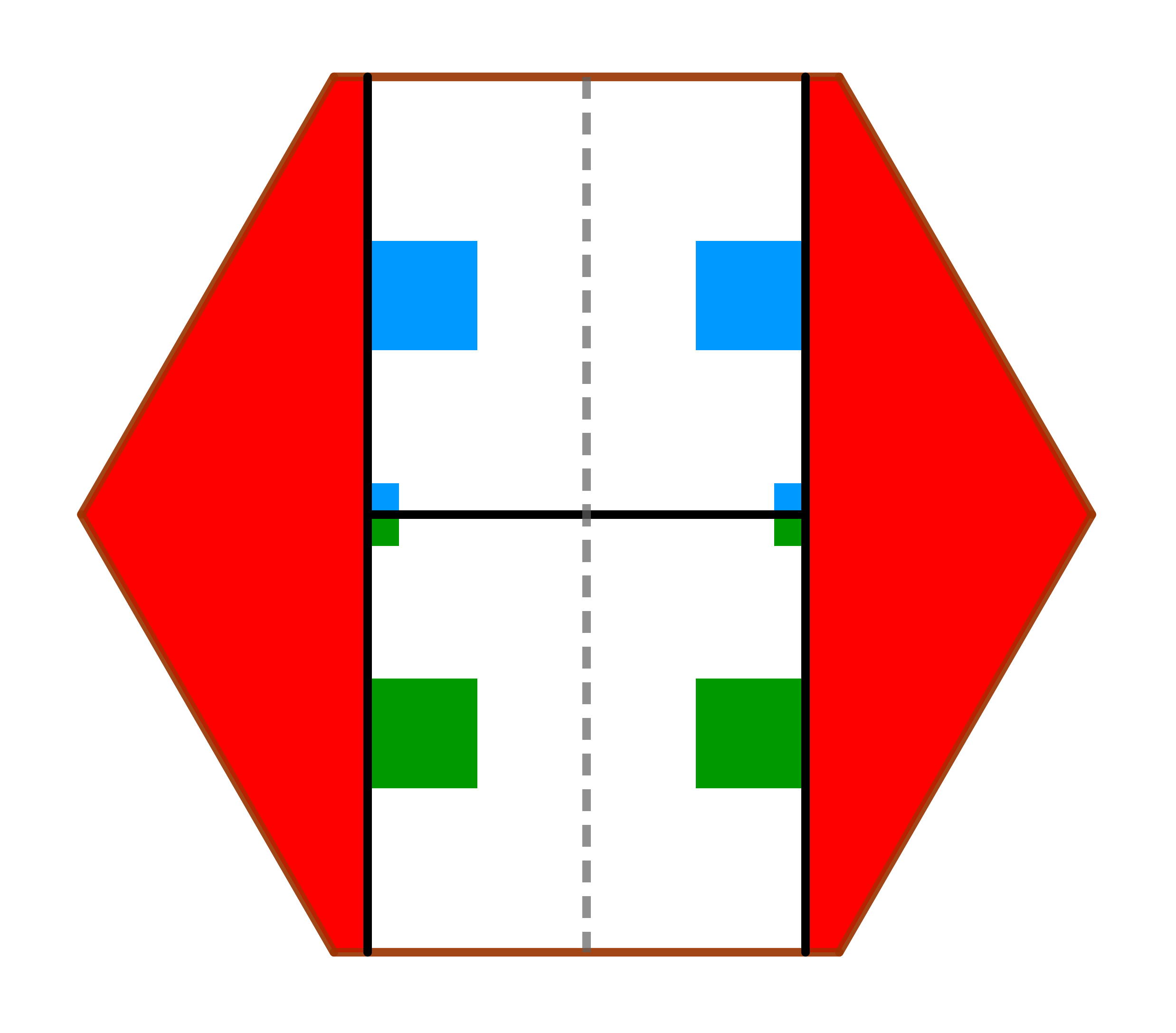}
\\
\includegraphics[width=0.24\textwidth]{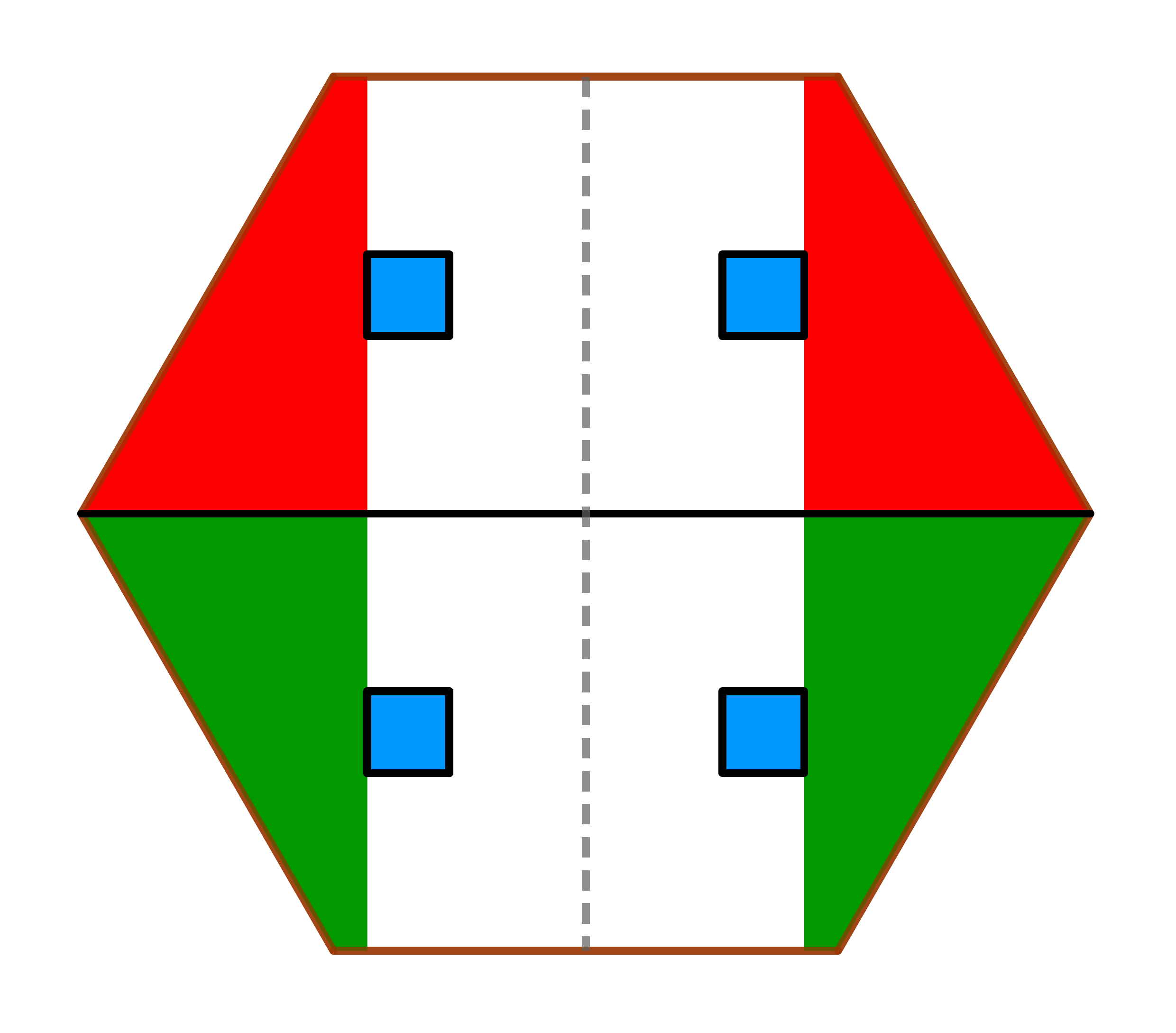}
\includegraphics[width=0.24\textwidth]{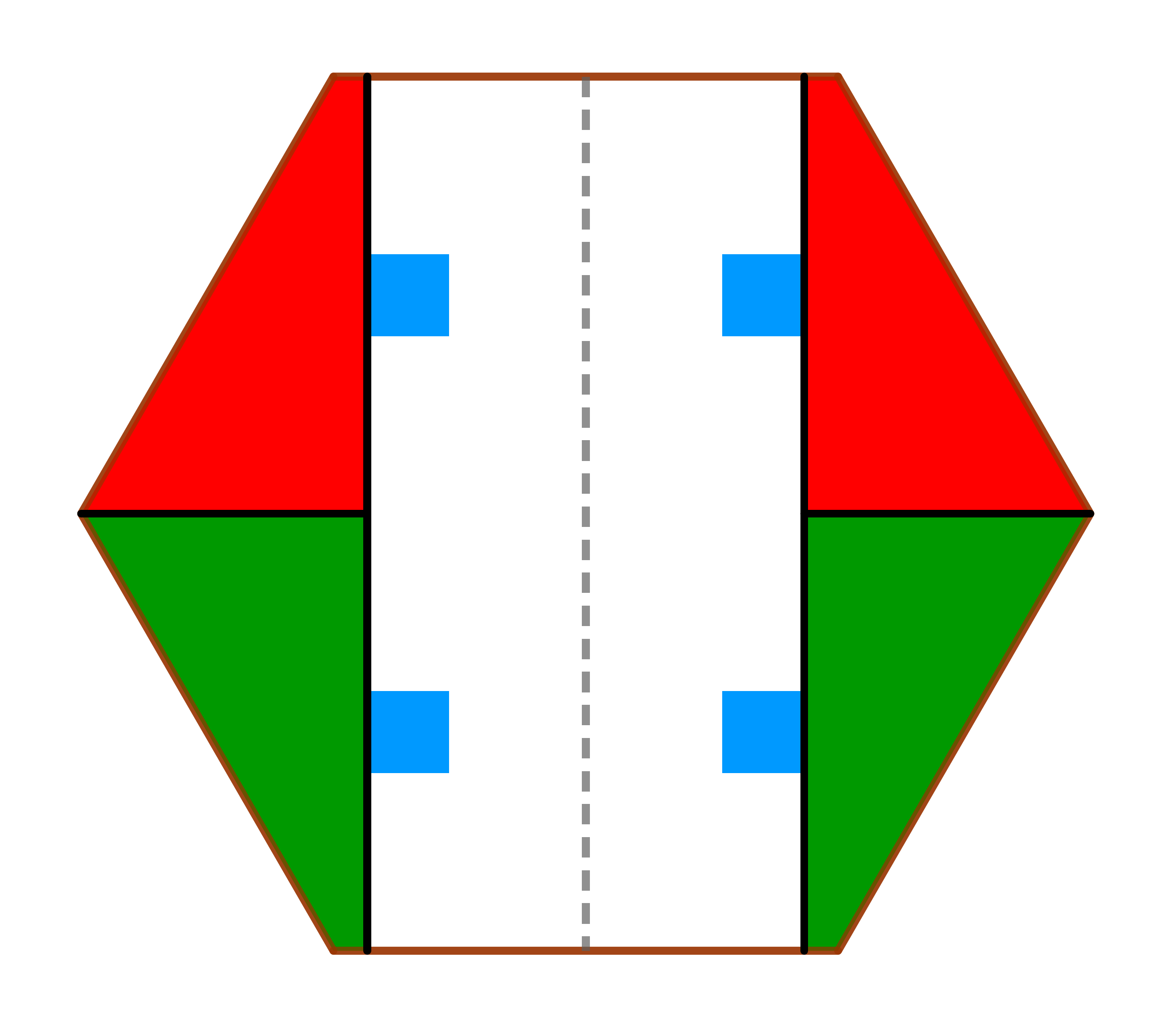}
\includegraphics[width=0.24\textwidth]{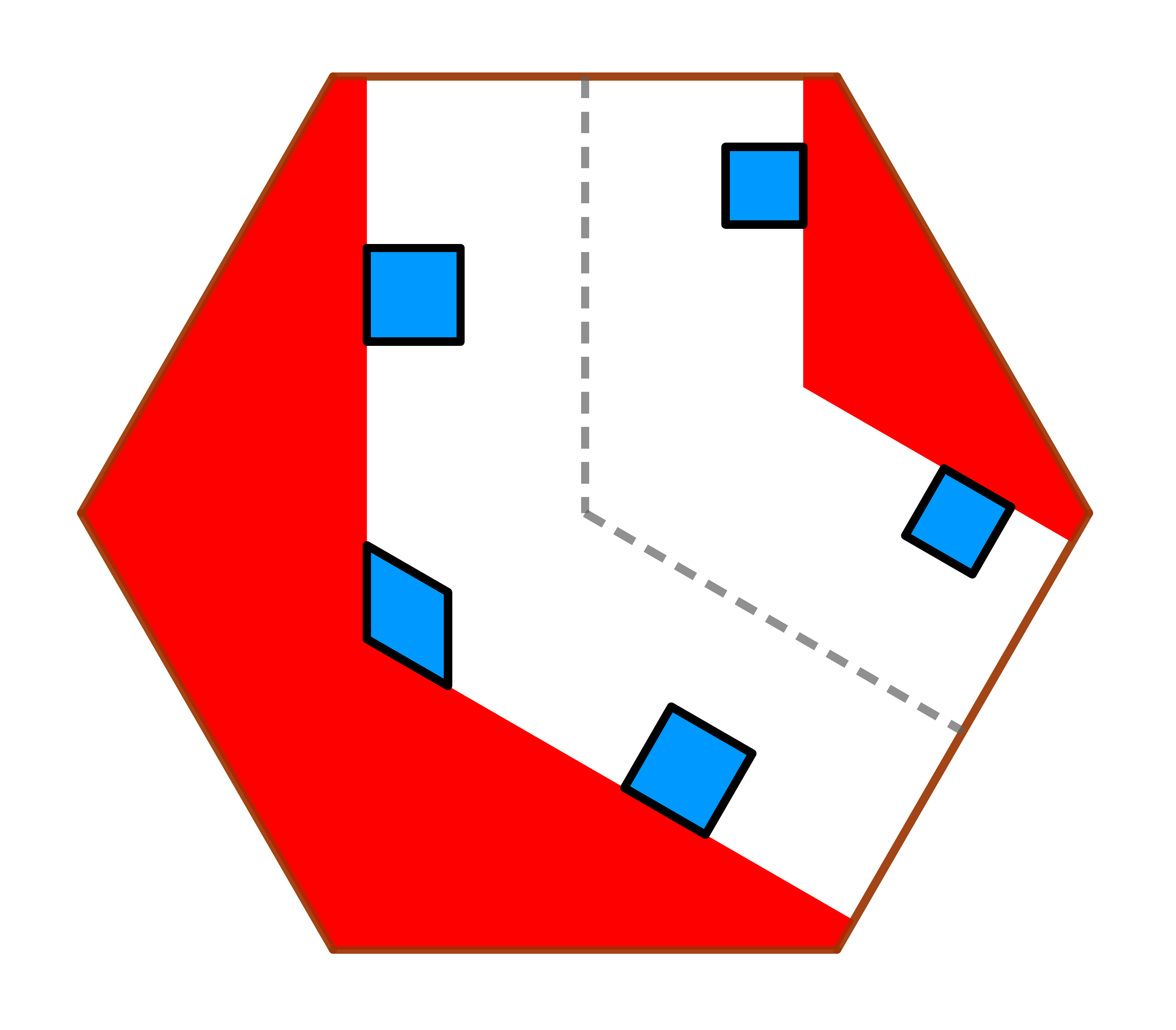}
\includegraphics[width=0.24\textwidth]{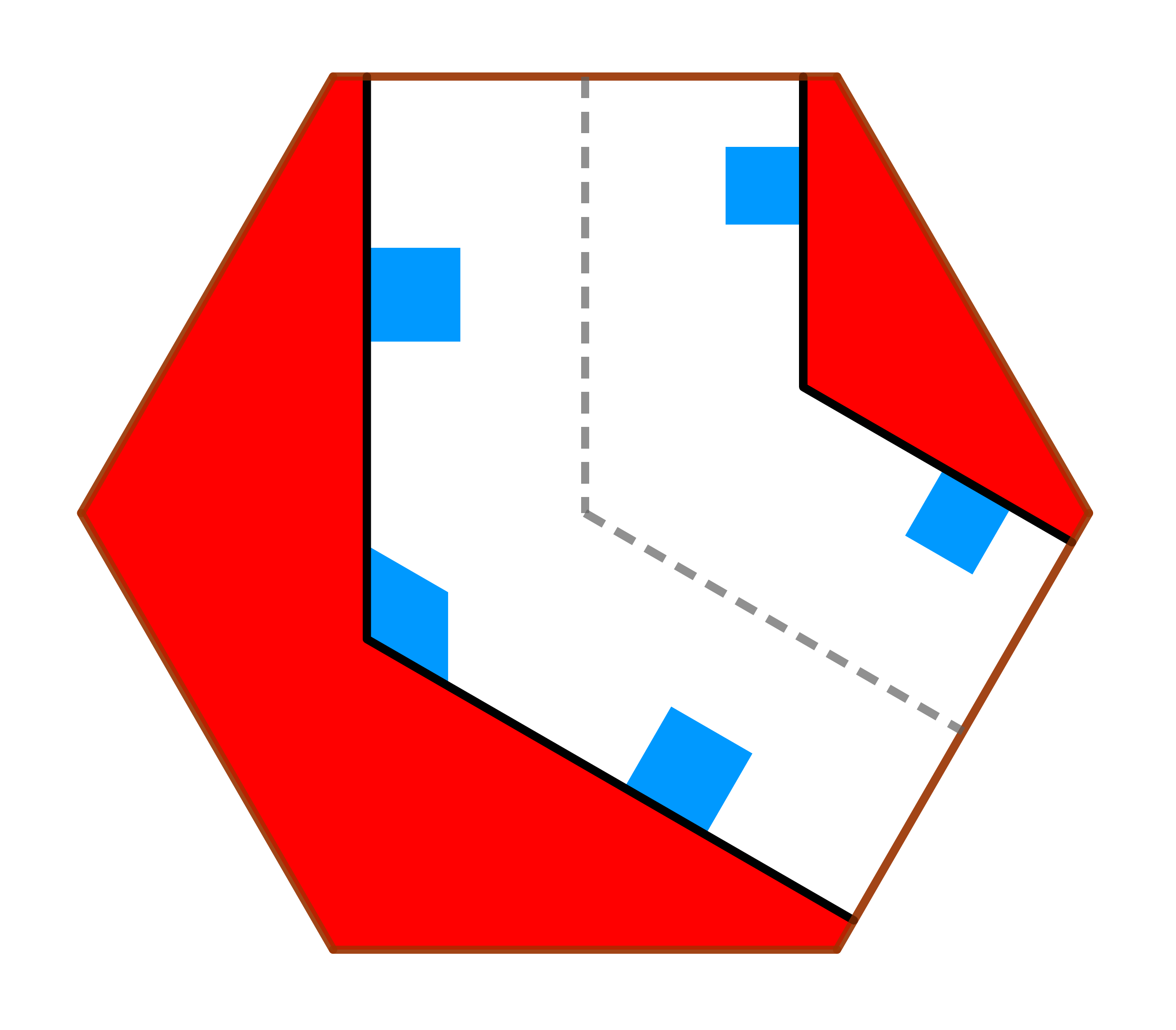}
\\
\includegraphics[width=0.24\textwidth]{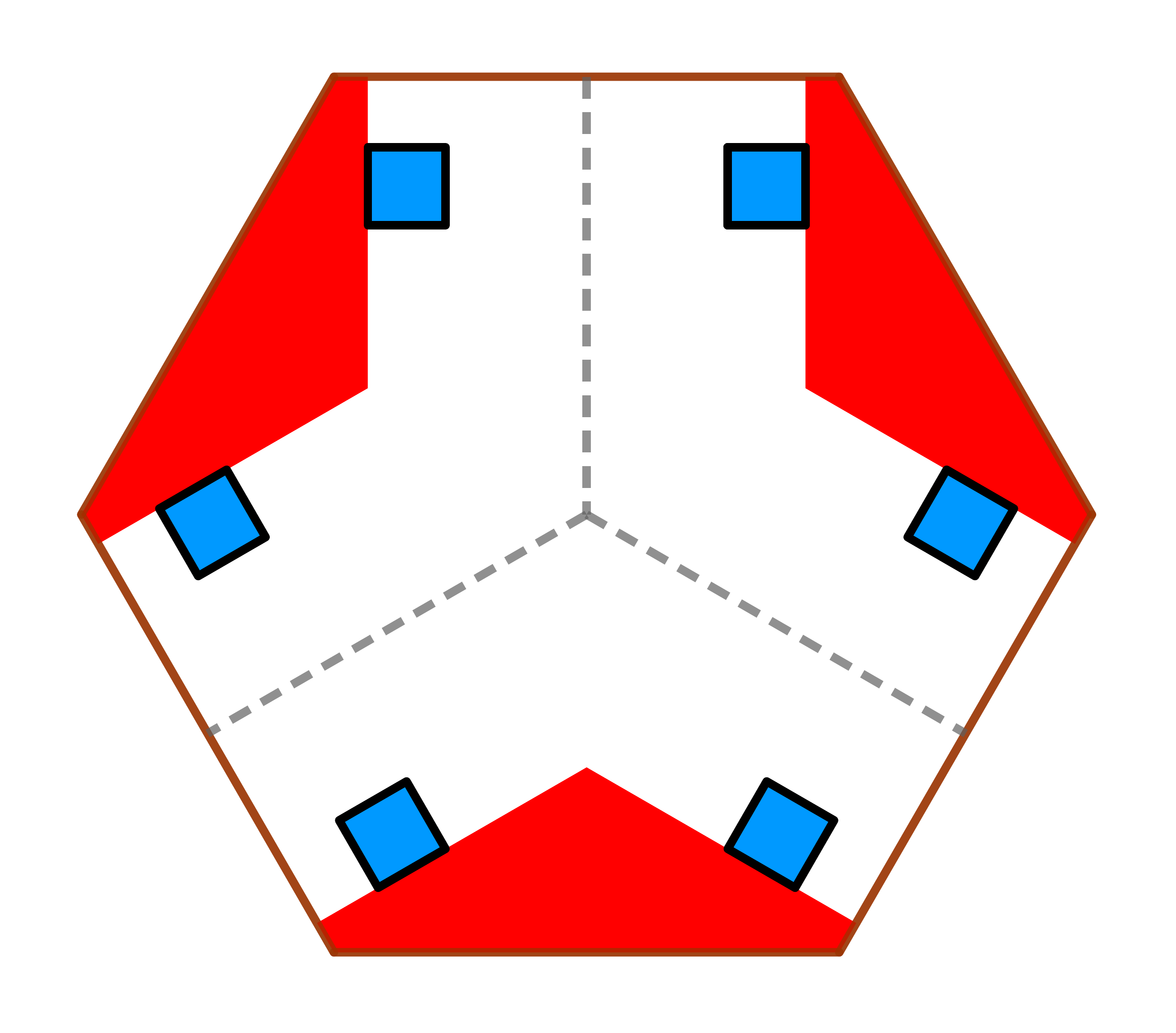}
\includegraphics[width=0.24\textwidth]{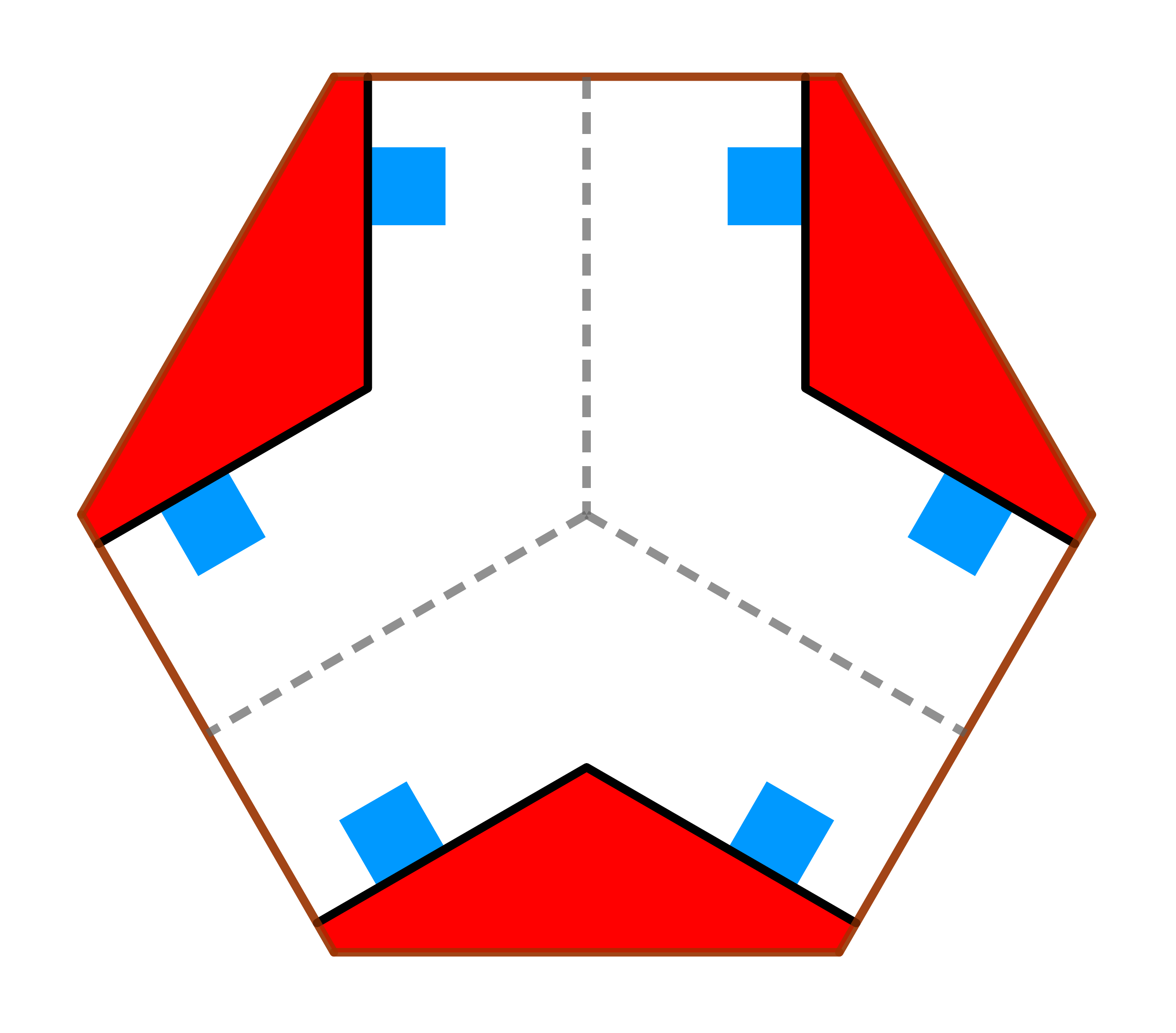}
\\
\includegraphics[scale=0.75]{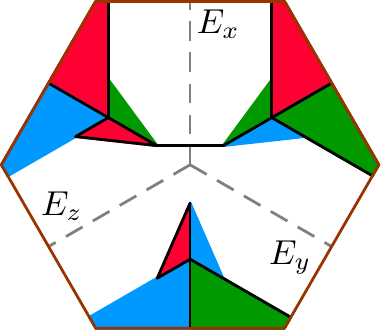}
\quad
\includegraphics[scale=0.75]{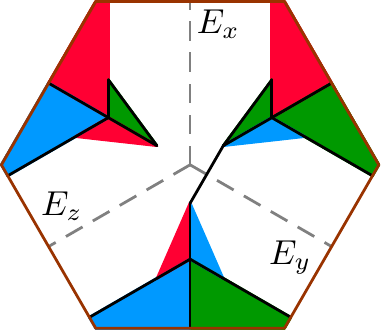}
\quad
\includegraphics[scale=0.75]{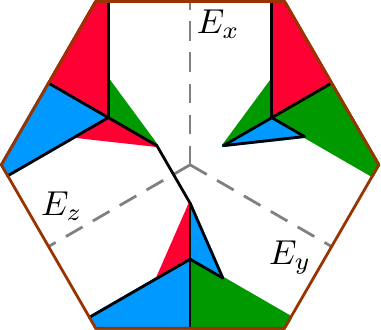}
\caption{The optimal solutions to each type of tile.
The edges in $G_p$ are shown in dashed grey.
We denote the left solution of each of the first five types of tiles as the \emph{outer} solution and the other as the \emph{inner} solution.
For the clause tile, we define the solution as the \emph{$z$-outer}, \emph{$x$-outer}, and \emph{$y$-outer} solution in order from left to right, respectively.
}
\label{fig:solved_states}
\end{figure}

\begin{lemma}\label{lemma:solved:states}
Figure~\ref{fig:solved_states} shows optimal solutions to each kind of tile.
The cost in each case is:
\iffull
\begin{itemize}
\item
Straight tile: $2$.

\item
Inner color change tile: $5/2$.
\item
Outer color change tile: $\left(\frac 2{\sqrt 3}-\frac 12\right)+2\approx 2.65$.
\item
Bend tile: $2$.
\item
Branch tile: $\frac{6-\sqrt 3}2\approx 2.13$.
\item
  Clause tile: $\approx 3.51$.\footnote{The exact value is complicated due to the coordinates and of no use.}
\end{itemize}
\else
Straight tile: $2$.
Inner color change tile: $5/2$.
Outer color change tile: $\left(\frac 2{\sqrt 3}-\frac 12\right)+2\approx 2.65$.
Bend tile: $2$.
Branch tile: $\frac{6-\sqrt 3}2\approx 2.13$.
  Clause tile: $\approx 3.51$ %\footnote
  (the exact value is complicated due to the coordinates and
    of no use).
\fi

If the cost of a solution $\mathcal F$ to a tile $T$ exceeds the optimum by less than $1/50$, then $\mathcal F$ is homotopic to one of the optimal solutions $\mathcal F^*$ of $T$ in the following sense:
For each curve $\pi^*$ in $\mathcal F^*$, there is a curve $\pi$ in $\mathcal F$ homotopic to $\pi^*$.
If $\pi$ is closed, the distance from any point on $\pi$ to the closest point on $\pi^*$ is less than $\sqrt{(1/8+1/100)^2-(1/8)^2}<0.06$.
If $\pi$ is open and $\pi^*$ has an endpoint $f^*$, there is a corresponding endpoint $f$ of $\pi$ with $\|f^*f\|< 1/10$.
\end{lemma}

\begin{proof}
We assume that the center of the tile $T$ is $p\mydef (0,0)$ and in each case, that $T$ is rotated as in the description of the tiles and have colors as in Figure~\ref{fig:tiles}.
We also define $G_p$ to be the two or three half-edges of $G(\Phi)$ meeting at $p$ as in the description of the tiles.

Consider first the case that $T$ is any of the tiles except the clause tile.
Note that $G_p$ separates $T$ into two or three pieces.
The pieces are two pentagons for the straight, inner color change, and outer color change tiles; a pentagon and a heptagon for the bend tile; and three pentagons for the branch tile.
We consider each such piece $T'$ individually and check the minimum cost of a solution to $T'$.
It is easy to verify that for each such piece $T'$, there are two solutions, and they are exactly as shown in Figure~\ref{fig:solved_states}.
One solution corresponds to the outer state and the other to the inner state, and in order to be combined to a solution for all of $T$, each of the two or three pieces $T'$ needs to be in the same state.
It therefore follows that the solutions shown in Figure~\ref{fig:solved_states} are all the optimal solutions.

\begin{figure}
\centering
\includegraphics[width=0.45\textwidth]{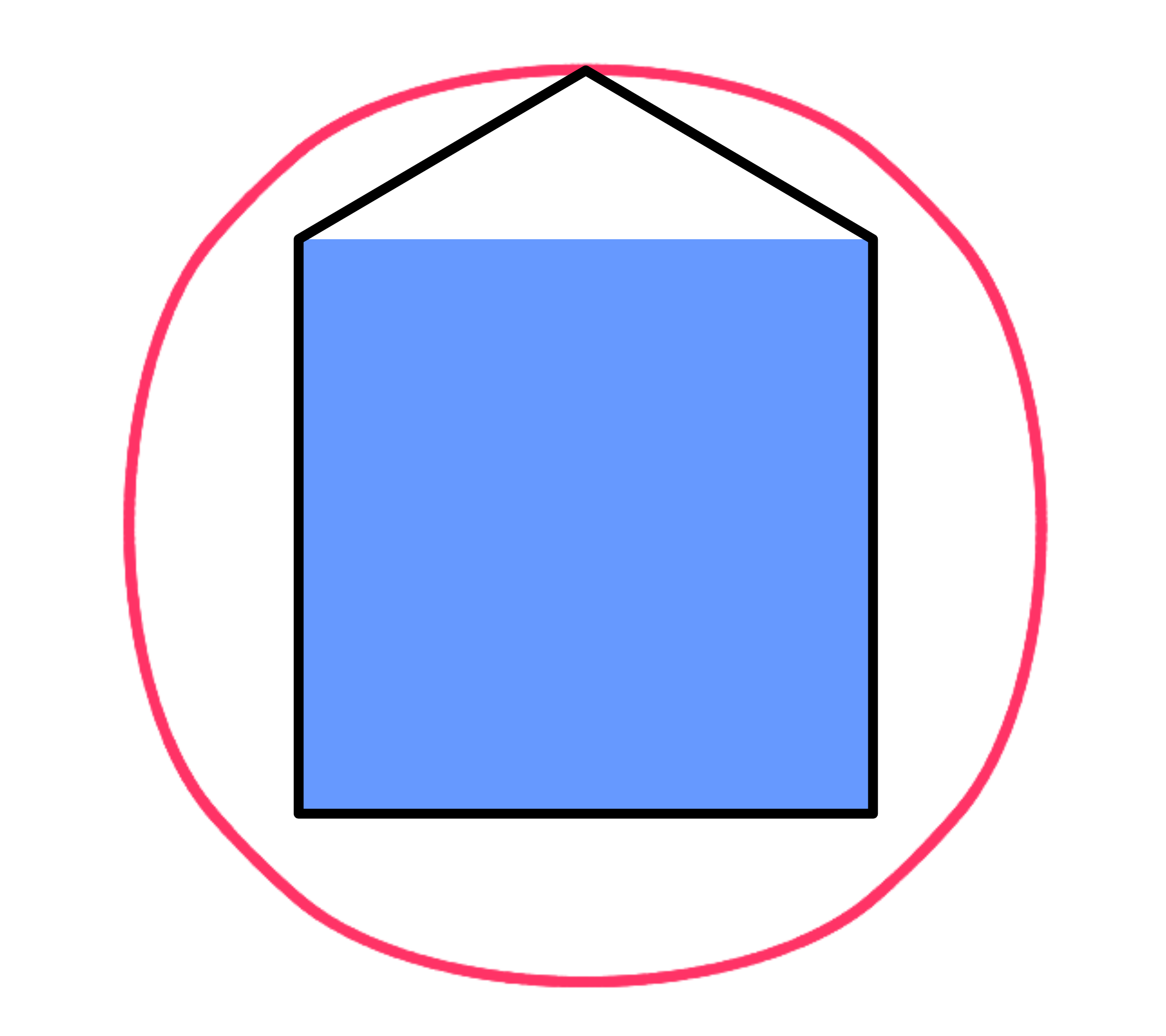}
\includegraphics[width=0.45\textwidth]{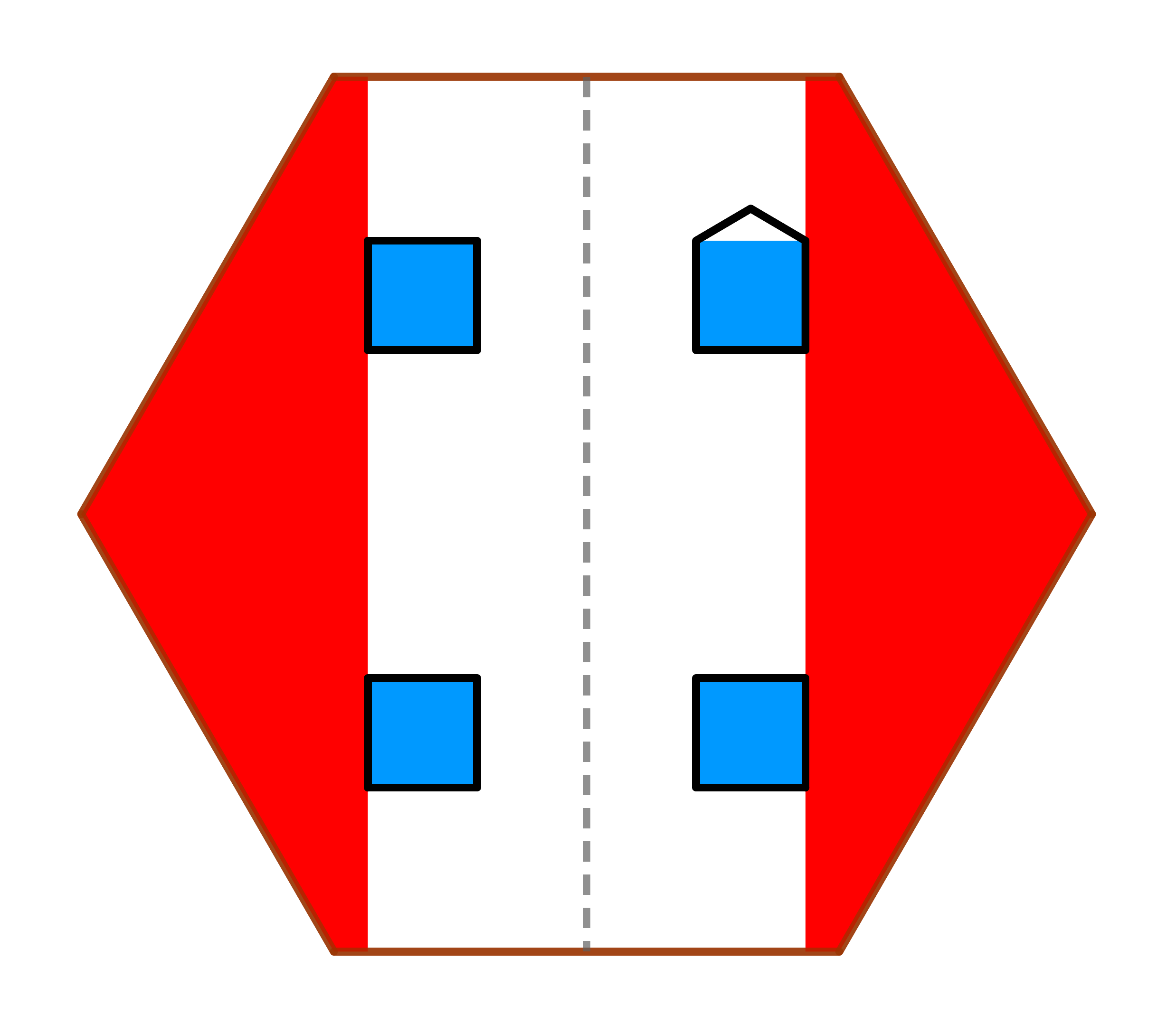}
\includegraphics[width=0.45\textwidth]{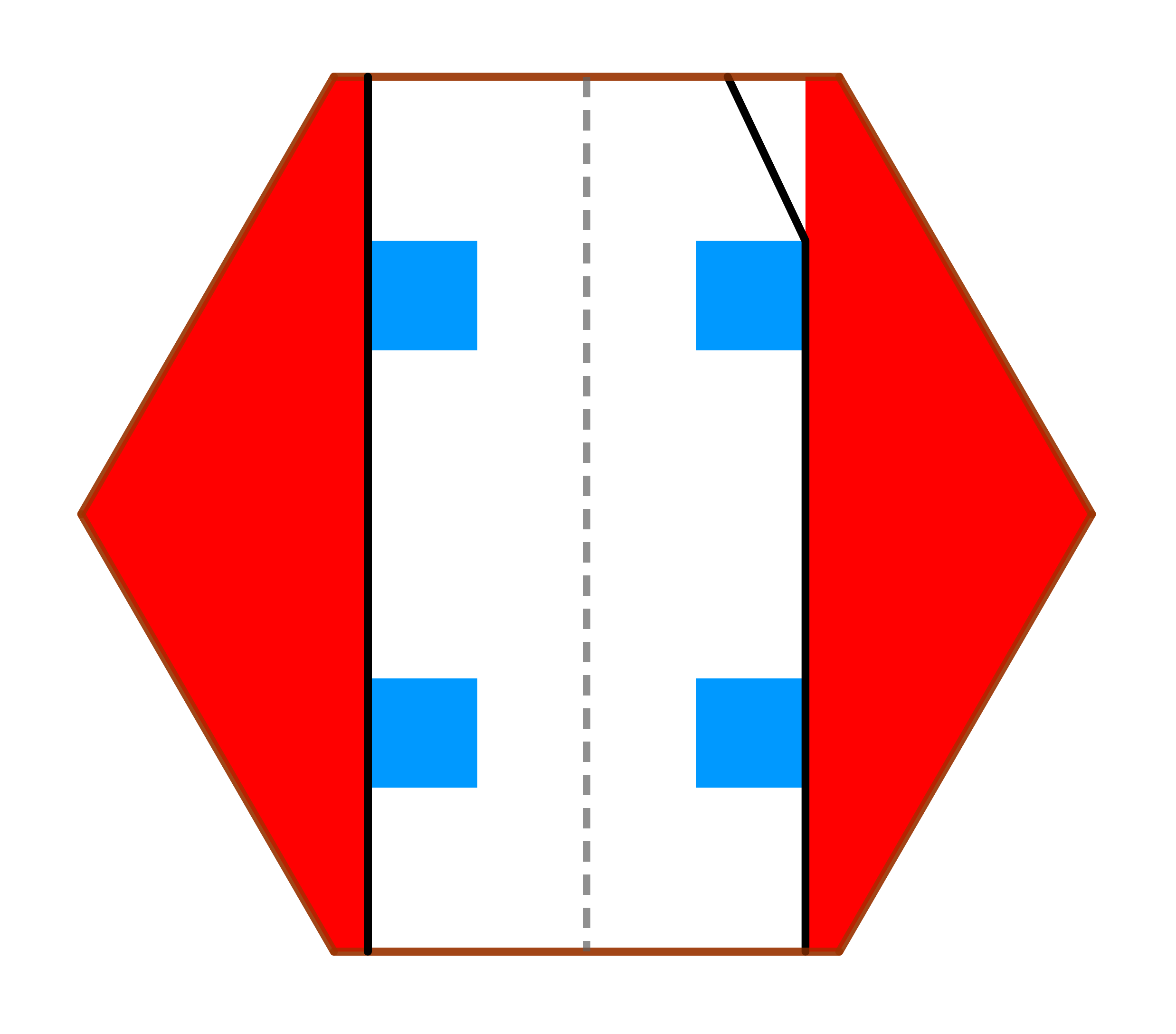}
\caption{Left:
A square of side length $1/8$.
The red curve encloses all curves of length at most $1/2+1/50$ that enclose the square.
Such a curve with maximum deviation from the boundary of the square is drawn in black.
The red curve consists of a piece of each of eight different ellipses.
Middle resp.~right:
A solution to the straight tile in the outer resp.~inner state with a cost that exceeds the optimum by $1/50$.}
\label{fig:max_dev}
\end{figure}

One can also easily verify that any solution $\mathcal F$ that is not homotopic to an optimal solution has a cost that exceeds the optimal cost by more than $1/50$.
Suppose that the cost of a solution $\mathcal F$ exceeds the cost of a homotopic optimal solution $\mathcal F^*$ by less than $1/50$.
In order to decide how much $\mathcal F$ can deviate from $\mathcal F^*$, consider the straight tile as an example, see Figure~\ref{fig:max_dev}.
In the outer state, each curve enclosing an inner object has length at least $1/2$.
Since the total cost is less than $2+1/50$, each curve has length less than $1/2+1/50$.
An elementary analysis gives that a closed curve of length at most $1/2+1/50$ which encloses a square of side length $1/8$ is within distance $\sqrt{(1/16+1/100)^2-(1/16)^2}<0.04$ from the boundary of the square.
For the inner state, consider the curve $\pi\subset\mathcal F$ in the right side of the tile that has the inner objects to the left.
The length of $\pi$ has to be less than $1+1/50$ in order for the total cost to be less than $2+1/50$.
Note that $\pi$ has to pass through the upper right corner $(1/4,5/16)$ of the upper right square.
Therefore, $\pi$ has to meet the top edge of $T$ at a point within distance $\sqrt{(3/16+1/50)^2-(3/16)^2}<0.09$ from the corresponding endpoint $(1/4,1/2)$ of $\pi^*$.
The other non-clause tiles are analyzed in a similar way.

The analysis of the clause tile is not as simple, since one does not get a solution to the complete tile by combining optimal solutions of smaller pieces.
The proof has been deferred to
\iffull
Lemma~\ref{lemma:clause:analysis}
\else
the full version
\fi
and relies on an extensive case analysis.

The largest possible deviation between a closed curve in $\mathcal F$ and $\mathcal F^*$ can appear for the clause tile, since it contains an inner object with the longest edge of all tiles, namely a triangle with an edge of length $1/4$.
That leads to a deviation of less than $\sqrt{(1/8+1/100)^2-(1/8)^2}<0.06$.
Likewise, the largest possible deviation between open curves is $1/10$, as realized in the clause tile and described in
\iffull
Lemma~\ref{lemma:clause:analysis}.
\else
the full version.
\fi
\end{proof}

\iffull
We now analyze the optimal solutions of the clause tile.
Here, we define a \emph{domain} as a connected component of a territory.
The names of objects in the clause tile referred to in the following lemma are defined in Figure~\ref{fig:case1}.
Indices are taken modulo $3$.
The optimal solutions are covered by case~2.3.2 in the proof.

\begin{figure}[bht]
\centering
    \includegraphics[width=0.45\textwidth]
    {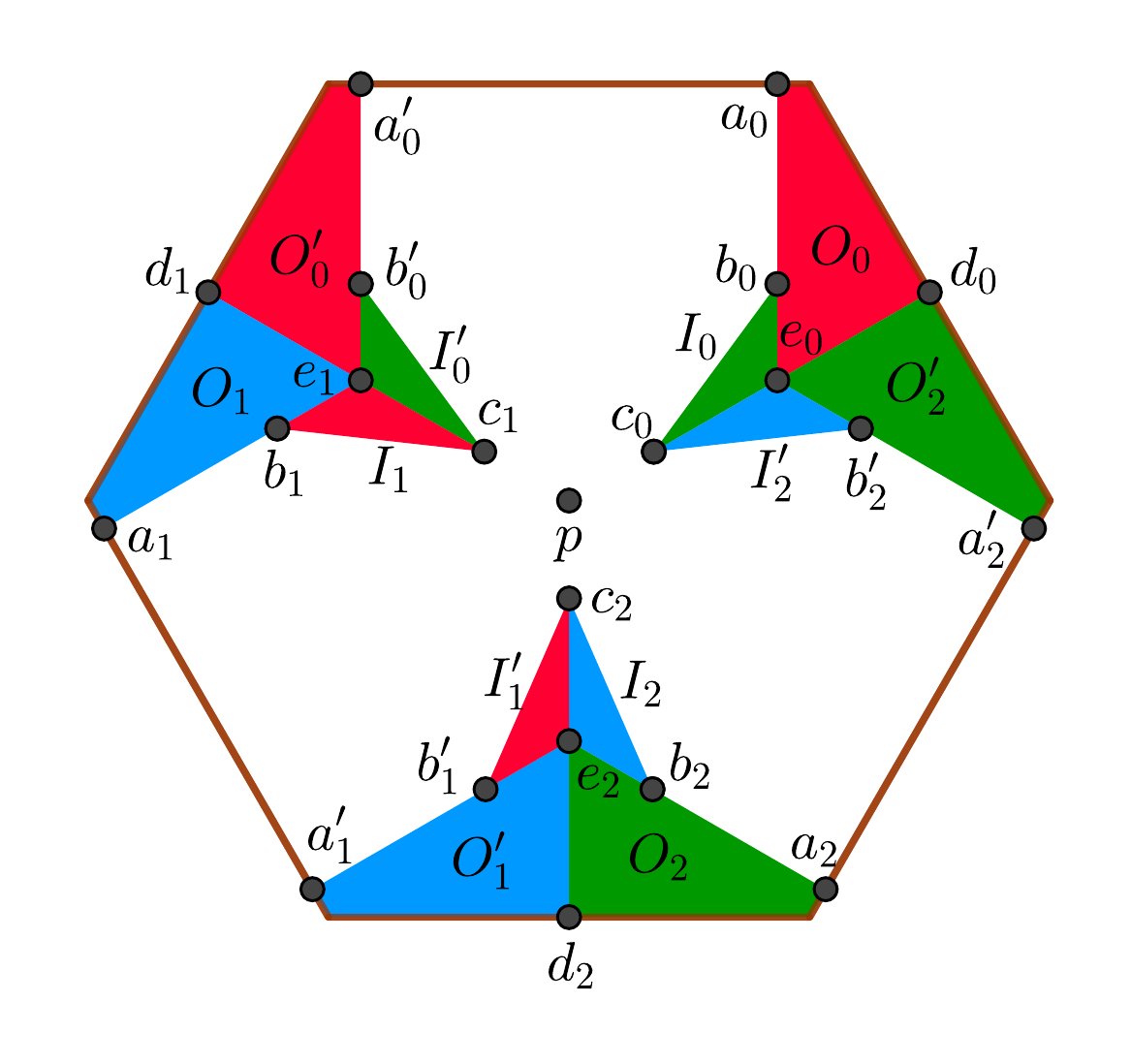}
    \includegraphics[width=0.45\textwidth]
    {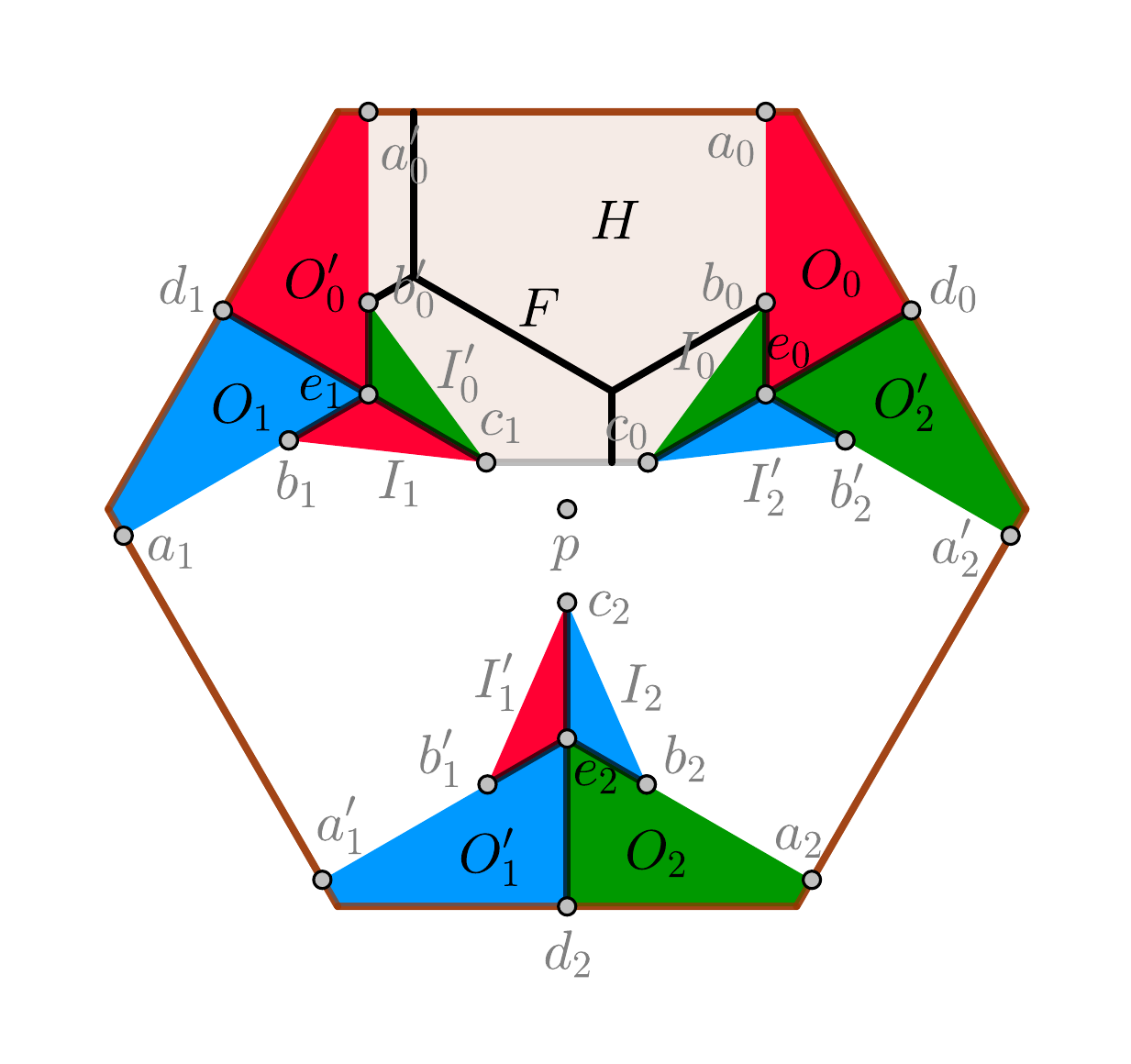} \\
    \includegraphics[width=0.45\textwidth]
    {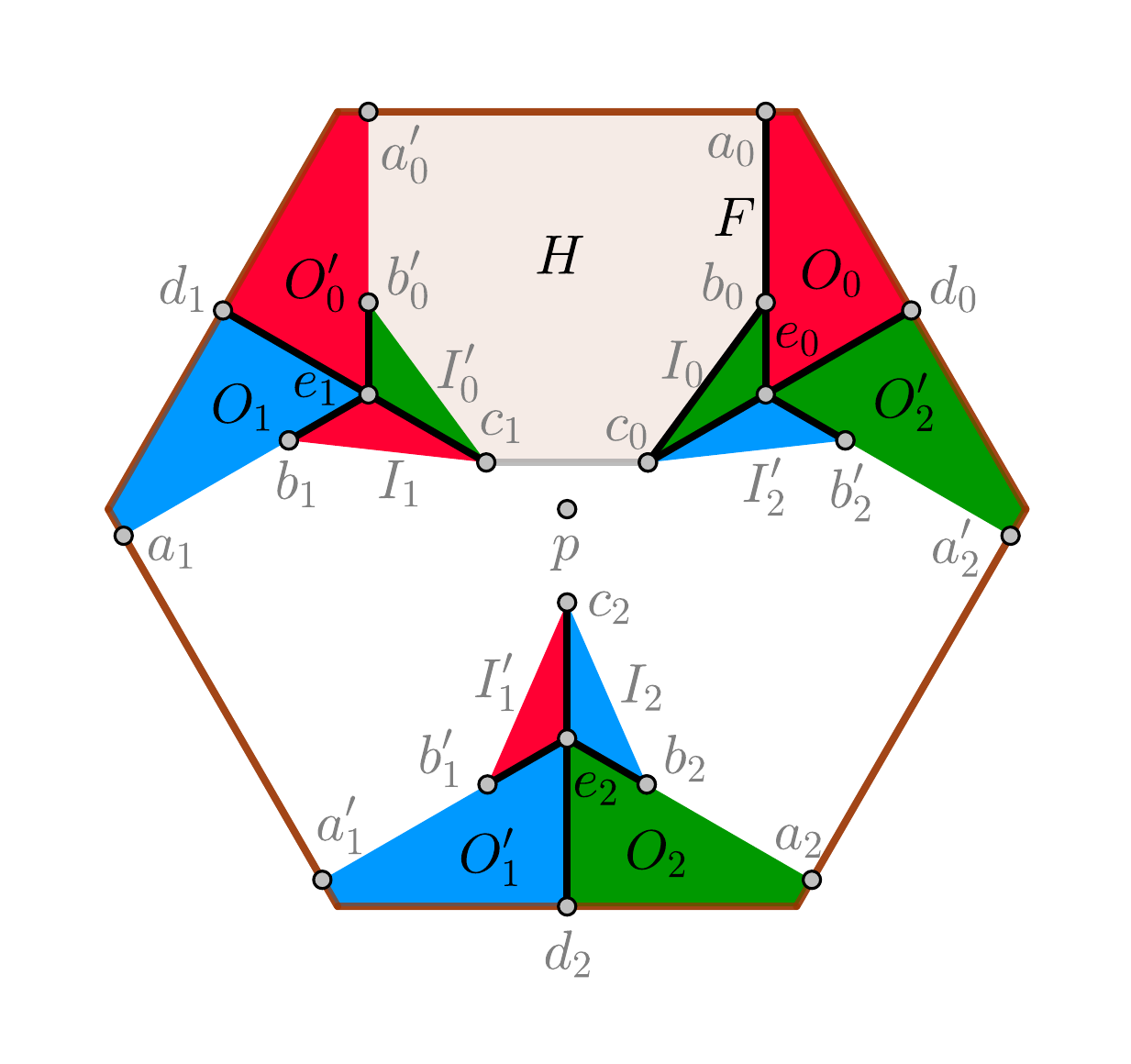}
    \includegraphics[width=0.45\textwidth]
    {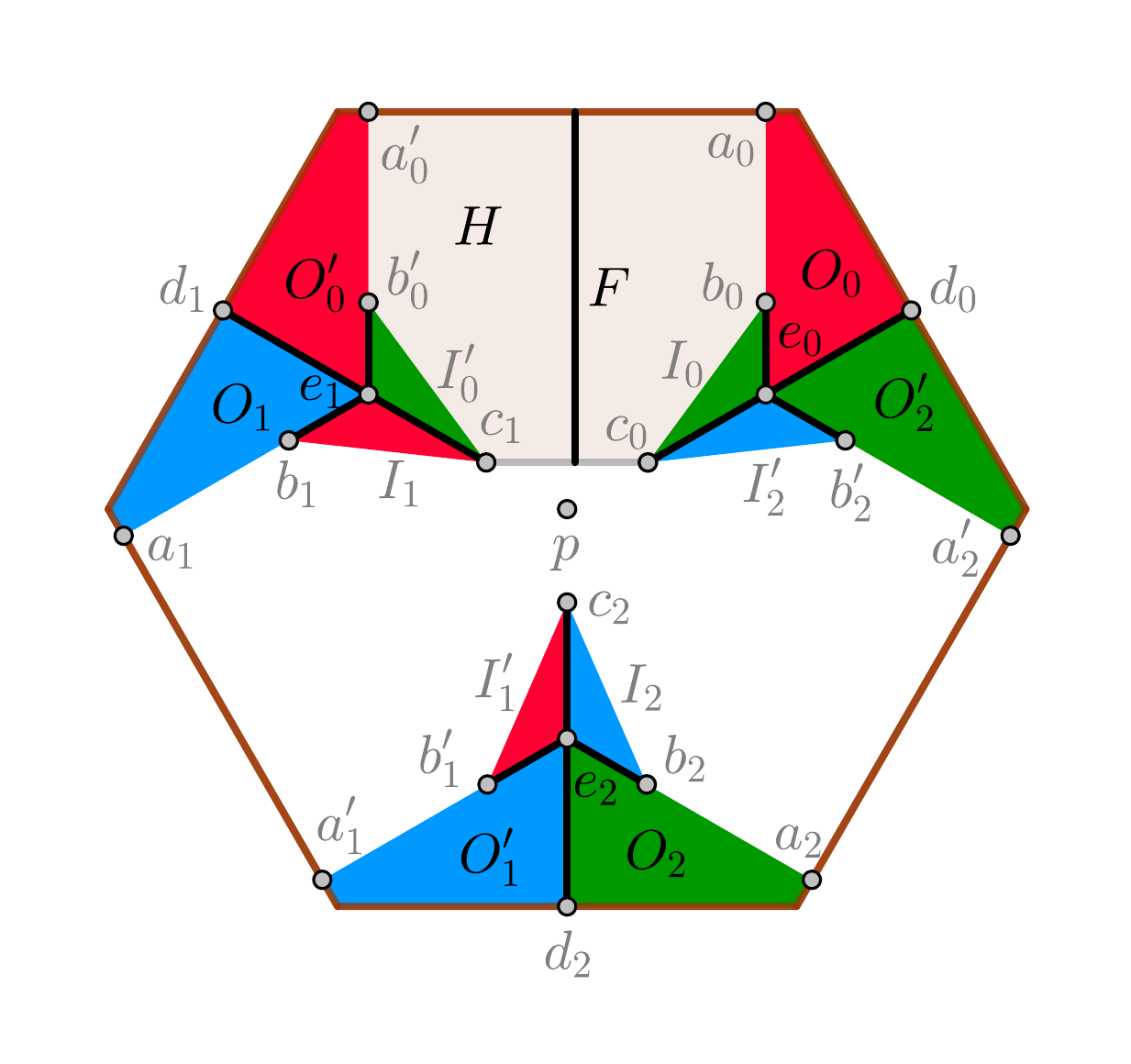}
    \caption{Top: Annotations. Case~1.1. Bottom: Case~1.2. Case~1.3.}
    \label{fig:case1}
\end{figure}

\begin{lemma}\label{lemma:clause:analysis}
The cost of a solution $\mathcal F$ to a clause tile is at least $M:= 3\|d_0c_0\|+6\|e_0b_0\|+4\|a_0b_0\|+2\|b_0c_0\|+\|c_0c_1\|$.
Moreover, if the cost is less than $M+1/50$, then there is $i\in\{0,1,2\}$ such that $\mathcal F$ contains
\begin{itemize}
\item
a curve from $f_i\in a_ia'_i$ to $b_i$, where $\|f_ia_i\|<\sqrt{(6/25+1/50)^2-(6/25)^2}=1/10$,

\item
a curve from $f'_i\in a_ia'_i$ to $b'_i$, where $\|f'_ia'_i\|<1/10$,

\item
a curve from $f_{i+1}\in a_{i+1}a'_{i+1}$ to $b_{i+1}$, where $\|f_{i+1}a_{i+1}\|<1/10$,

\item
a curve from $f'_{i+1}\in a_{i+1}a'_{i+1}$ to $b'_{i+1}$, where $\|f'_{i+1}a'_{i+1}\|<1/10$,

\item
a curve from $c_{i+1}$ to $c_{i+2}$,

\item
a curve from $c_i$ to $b'_{i+2}$,

\item
a curve from $c_{i+2}$ to $b_{i+2}$.
\end{itemize}
\end{lemma}

\begin{proof}
Clearly, $\mathcal F$ must contain segments $d_ic_i$, $e_ib_i$, and $e_ib'_{i+2}$ for any $i\in\{0,1,2\}$ since each of these segments are on the shared boundary of two objects of different color.
In total, this amounts for the cost $3\|d_0c_0\|+6\|e_0b_0\|$.
In the following, we argue about the fence needed in addition to that, i.e., the part of $\mathcal F$ contained in the closed pentadecagon $T'=a_0a'_0b'_0c_1b_1a_1a'_1b'_1c_2b_2a_2a'_2b'_2c_0b_0$.
We characterize how the solution looks when the additional cost in $T'$ is at most $4\|a_0b_0\|+2\|b_0c_0\|+\|c_0c_1\|+0.02<1.67$.
When we say that the solution must contain a curve or a tree with certain properties (such as connecting two specific points), we mean such a curve or tree contained in $T'$.

We divide into cases after which objects are in the same domains.
After making enough assumptions in one branch of the case analysis, we can either give a lower bound of the cheapest solution above $1.67$, or it follows that all objects of different colors are separated, and then we state what the optimal solution satisfying the specific assumptions is.
We have used Geogebra~\cite{hohenwarter2018geogebra} to construct the solutions and estimate their costs.
Each estimate differs from the exact value by at most $0.005$.

Note first that for any $i\in\{0,1,2\}$, in order to separate $O_i$ from $I_i$, the solution must contain a curve (in $T'$) starting at $b_i$ that has a length of at least $0.24$, and similarly one from $b'_i$ in order to separate $O'_i$ from $I'_i$.
The prefixes of length $0.24$ of these six fences are disjoint.
We therefore charge $0.24$ to each $b_i$ and $b'_i$.

\paragraph*{Case 1: For a value of $i$, neither $O_i$ and $O'_i$ nor $I_i$ and $I'_i$ are in a domain together.}
Without loss of generality, suppose that the condition holds for $i=0$.
We consider the solution $\mathcal F$ restricted to the hexagon $H=a_0a'_0b'_0c_1c_0b_0$.
In order to separate $O_0$ and $O'_0$ in $H$, there must be a connected component $F$ of $\mathcal F\cap H$ that connects $a_0a'_0$ and $c_0c_1$.
The individual cases are shown in Figure~\ref{fig:case1}.

\paragraph*{Case 1.1: $F$ also separates $O_0$ from $I_0$ and $O'_0$ from $I'_0$.}
Note that $F$ connects $b_0$ and $b'_0$.
The shortest connected system of curves that connects $b_0$, $b'_0$, $a_0a'_0$ and $c_0c_1$ is a Steiner minimal tree with vertical edges meeting $a_0a'_0$ and $c_0c_1$.
All such trees have the same cost, which is $\approx 0.87$.
But adding the $0.24$ charged to $b_1,b'_1,b_2,b'_2$, we get more than $1.67$.

\paragraph*{Case 1.2: $F$ separates $O_0$ from $I_0$, but not $O'_0$ from $I'_0$.}
(The case where $F$ separates $O'_0$ from $I'_0$, but not $O_0$ from $I_0$, is analogous.)
Note that $F$ has minimal length if $F=a_ib_i\cup b_ic_i$, which has length $\approx 0.49$.
In addition to that comes $1.2$ charged to $b'_i,b_{i+1},\ldots$, and the total is $1.69>1.67$.

\paragraph*{Case 1.3: $F$ separates neither $O_0$ from $I_0$, nor $O'_0$ from $I'_0$.}
In this case, $F$ has cost at least $\approx 0.44$, which is the distance between $a_ia'_i$ and $c_ic_{i+1}$.
In addition comes $1.44$ charged to $b_i,b'_i,\ldots$, which in total is more than $1.67$.

\begin{figure}[htb]
    \centering
    \includegraphics[width=0.45\textwidth]
    {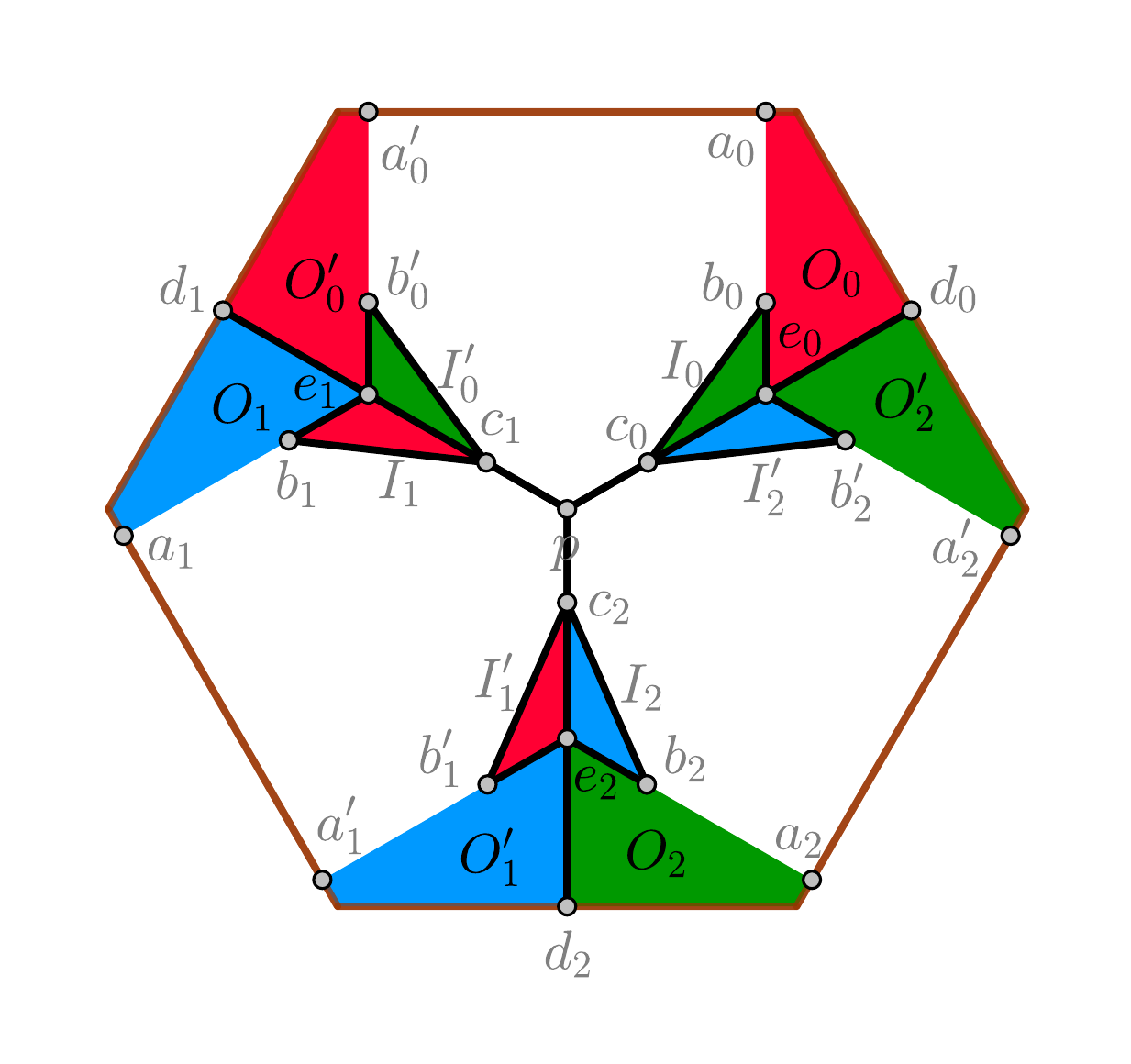}
    \includegraphics[width=0.45\textwidth]
    {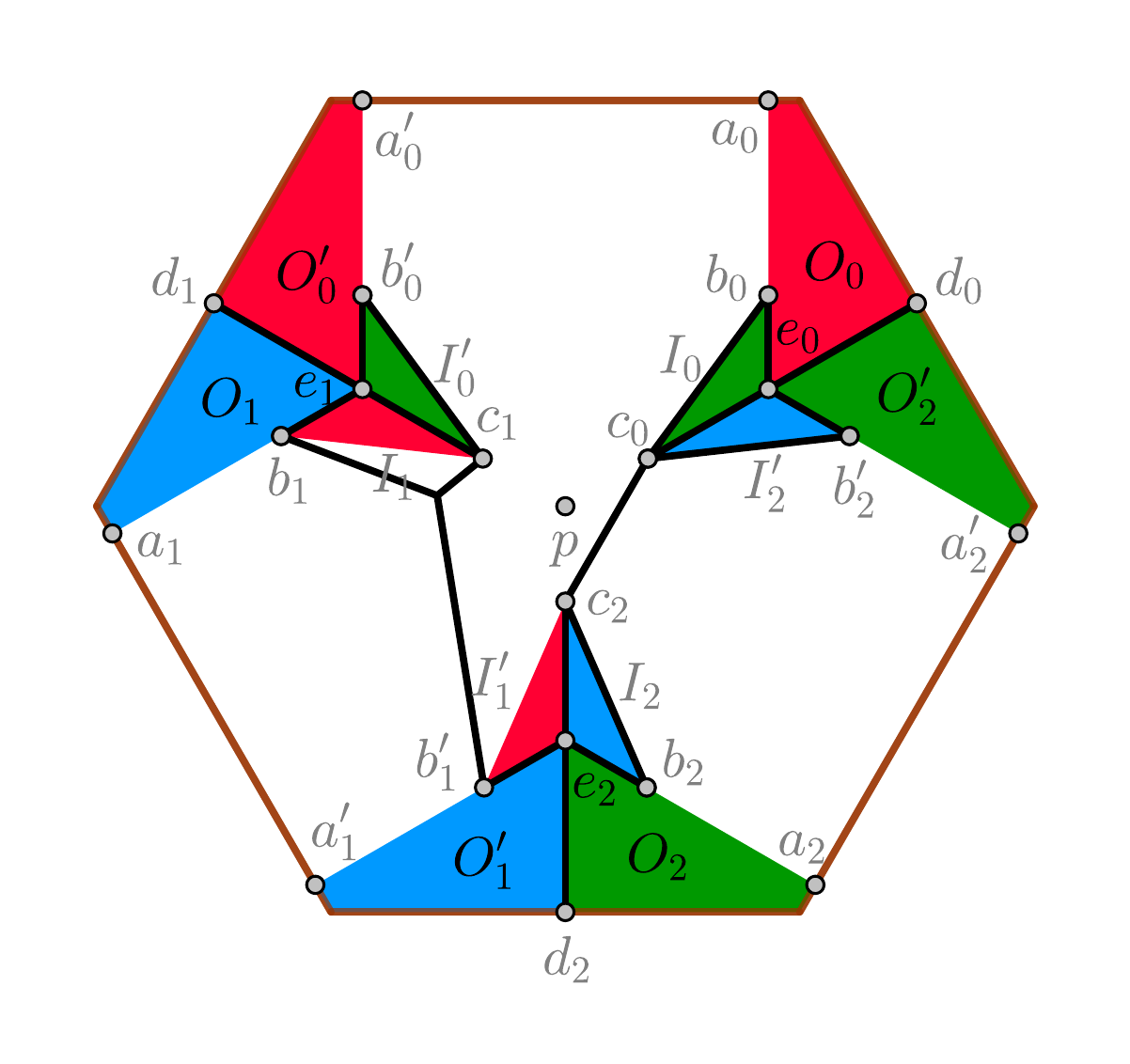} \\
    \includegraphics[width=0.45\textwidth]
    {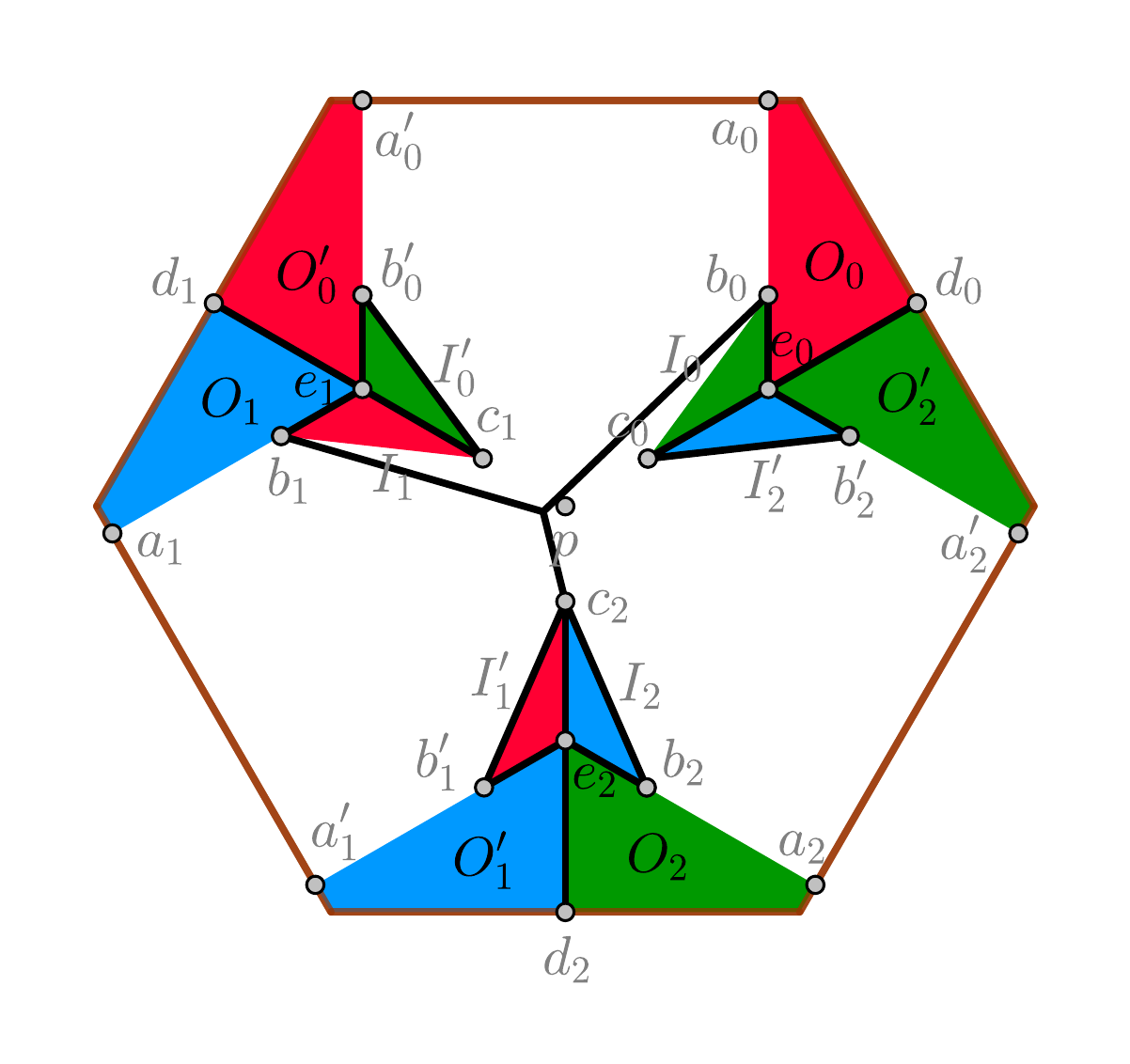}
    \includegraphics[width=0.45\textwidth]
    {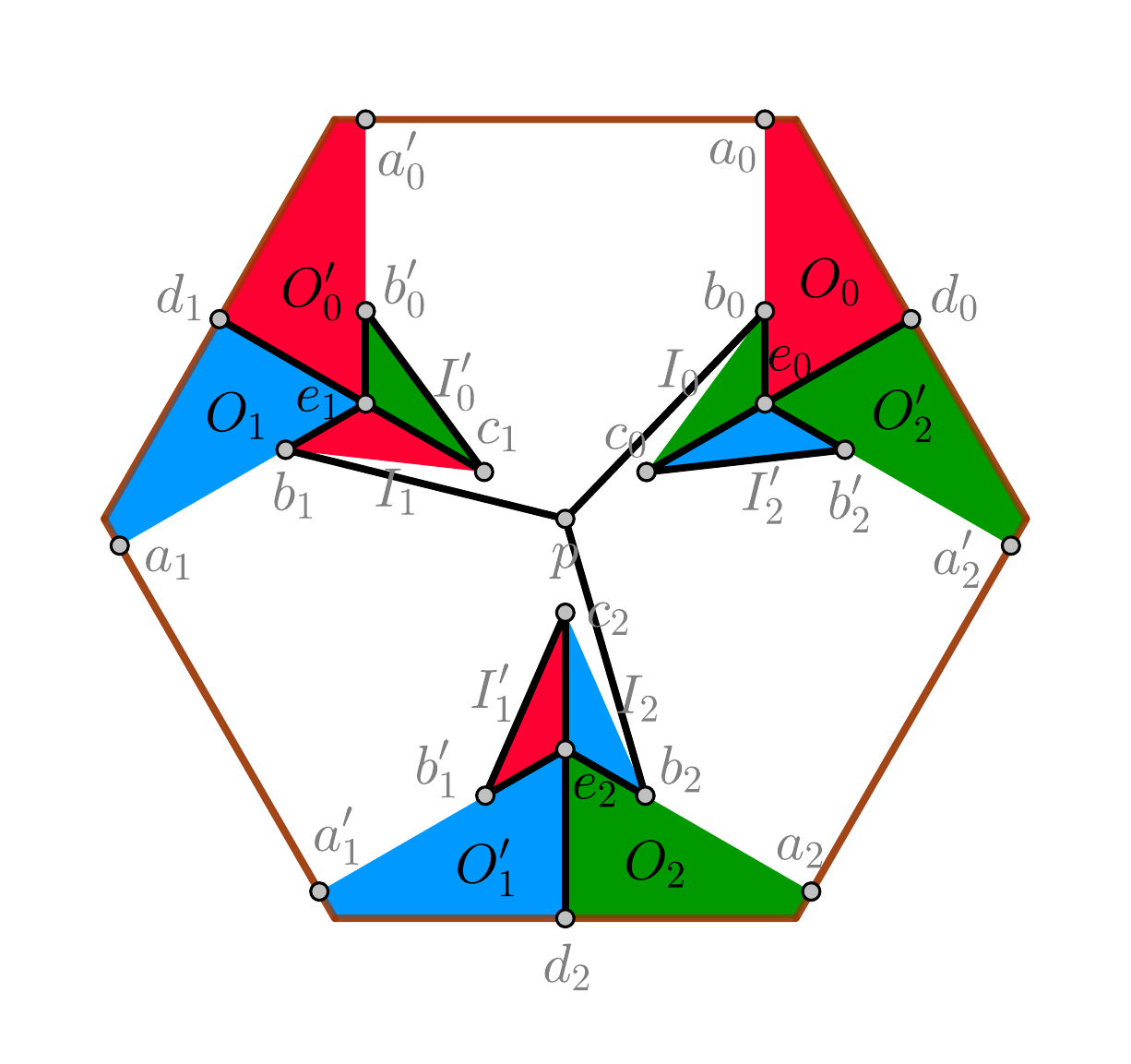}
    \caption{Top: Case~2.1.1. Case~2.1.2. Bottom: Case~2.1.3.1. Case~2.1.3.3.}
    \label{fig:case21}
\end{figure}

\paragraph*{Case 2: For any value of $i$, $O_i$ and $O'_i$ or $I_i$ and $I'_i$ are in a domain together.}

We may now divide into cases after how many values of $i$ satisfy that $I_i$ and $I'_i$ are in the same domain.
Let $c$ denote this number.
\paragraph*{Case 2.1: $c=0$.}
In this case, $O_i$ and $O'_i$ are in the same domain for each $i$.
The individual cases are shown in Figure~\ref{fig:case21}.

\paragraph*{Case 2.1.1: For no value of $i$ is $O_i\cup O'_i$ in a domain with $I_{i+1}$ or $I'_{i+1}$.}
In this case, the solution contains curves from $b_i$ to $c_i$ and $b'_i$ to $c_{i+1}$ for each $i$, since the objects $I_i$ and $I'_i$ are separated from other objects of the same color.
Furthermore, there must be a curve from $b_i$ to $b'_i$ bounding the domain containing $O_i\cup O'_i$.
It follows that the solution connects any two of the nine points $\bigcup_{i=0}^2\{b_i,b'_i,c_i\}$.
The cheapest solution that satisfies this is $\bigcup_{i=0}^2 b_ic_i\cup b'_ic_{i+1}\cup c_ip$, which has cost $\approx 1.85>1.67$.

\paragraph*{Case 2.1.2: $O_i\cup O'_i$ is in a domain with $I'_{i+1}$.}
Assume without loss of generality that $O_0\cup O'_0$ is in a domain with $I'_1$.
The mentioned domain separates $O_1\cup O'_1$ from $I_2$ and $I'_2$, and it separates $O_2\cup O'_2$ from $I'_0$.
The boundary of that domain contains a curve connecting $b'_0$ and $b'_1$ and one connecting $b_0$ and $c_2$.
It follows that the solution contains
\begin{itemize}
\item
a tree connecting $b'_0,c_1,b_1,b'_1$ ($b_1$ and $c_1$ are connected since $I_1$ is isolated from the other red objects),

\item
a tree connecting $b_0,c_2,b_2$ ($b_2$ and $c_2$ are connected since $I_2$ is separated from the other blue objects),

\item
a curve connecting $b'_2$ and $c_0$ (same reasons as above).
\end{itemize}

The optimal solution of this form consists of segments from $b_1,b'_1,c_1$ to their Fermat point and the segments $b_0c_0,b'_0c_1,b'_2c_0,c_0c_2,b_2c_2$, and it has cost $\approx 1.85>1.67$.

\paragraph*{Case 2.1.3: $O_i\cup O'_i$ is in a domain with $I_{i+1}$.}
Assume without loss of generality that $O_0\cup O'_0$ is in a domain with $I_1$.
That domain separates $O_2\cup O'_2$ from $I'_0$, and thus any solution contains a curve from $b'_0$ to $c_1$.
There must also be a curve connecting $b_0$ and $b_1$ on the boundary of the domain.
The solution also contains a curve from $b'_1$ to $c_2$, as $I'_1$ is isolated from the other red objects.

\paragraph*{Case 2.1.3.1: $O_1\cup O'_1$ is not in a domain with $I_2$ or $I'_2$.}
Any solution contains a curve from $b_1$ to $b'_1$ (bounding the domain of $O_1\cup O'_1$), one from $b_2$ to $c_2$ (bounding the domain of $I_2$), and one from $b'_2$ to $c_0$ (bounding the domain of $I'_2$).
To summarize, the solution contains
\begin{itemize}
\item
a tree connecting $b_0,b_1,b'_1,b_2,c_2$,

\item
a curve connecting $b'_0$ and $c_1$, and

\item
a curve connecting $b'_2$ and $c_0$.
\end{itemize}

The shortest such solution consists of segments from $b_0,b_1,c_2$ to their Fermat point and the segments $b'_0c_1,b'_1c_2,b_2c_2,b'_2c_0$, and it has cost $\approx 1.83>1.67$.

\paragraph*{Case 2.1.3.2: $O_1\cup O'_1$ is in a domain with $I'_2$.}
This case is covered by case~2.1.2.

\paragraph*{Case 2.1.3.3: $O_1\cup O'_1$ is in a domain with $I_2$.}
There is a curve bounding that domain connecting $b_1$ and $b_2$.
There are thus curves connecting all pairs $b_i,b_{i+1}$ and $b'_i,c_{i+1}$.
The shortest solution of that form is $\bigcup_{i=0}^2 b_ip\cup b'_ic_{i+1}$, which has total cost $\approx 1.83>1.67$.

\begin{figure}
    \centering
    \includegraphics[width=0.45\textwidth]
    {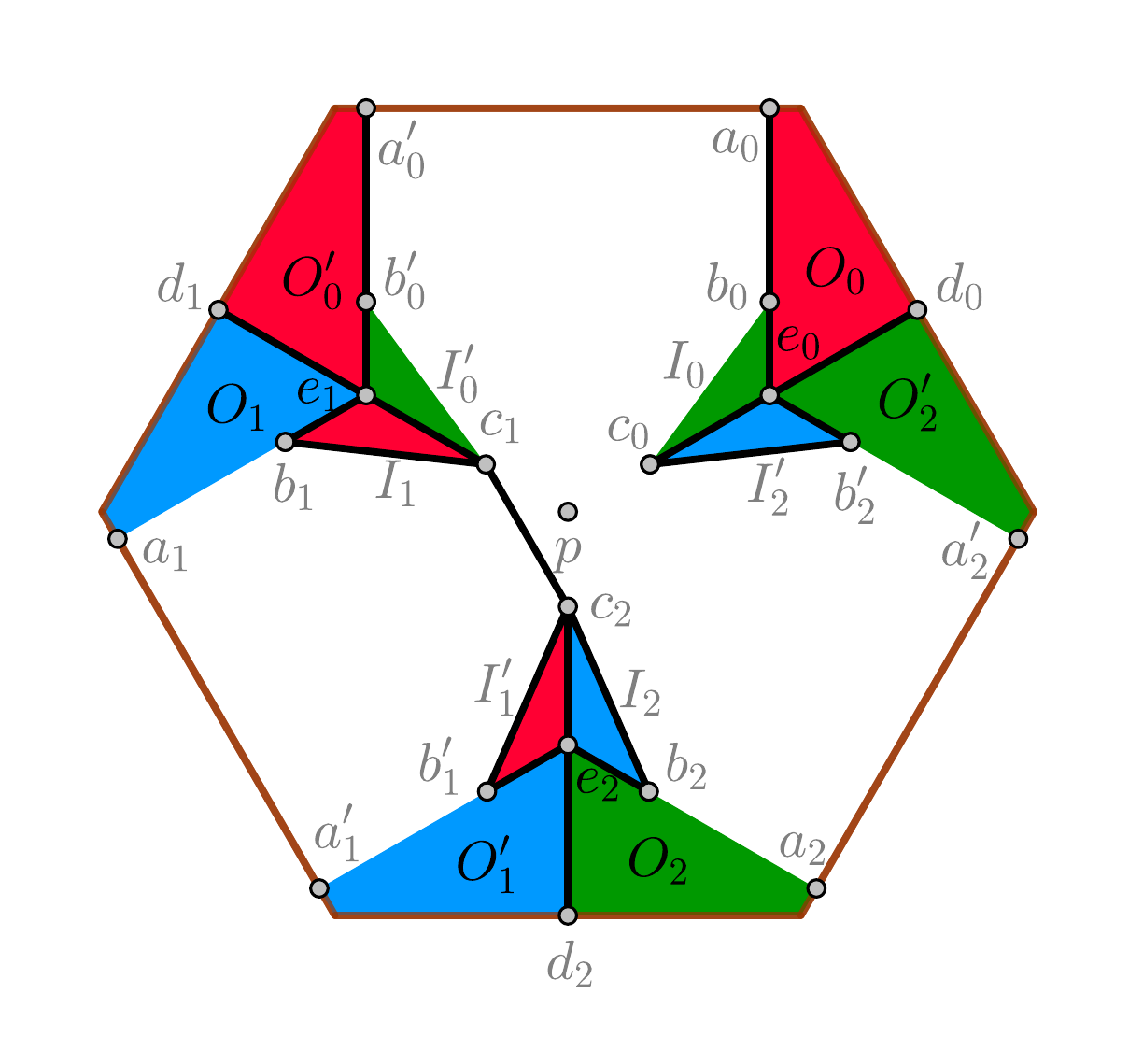}
    \includegraphics[width=0.45\textwidth]
    {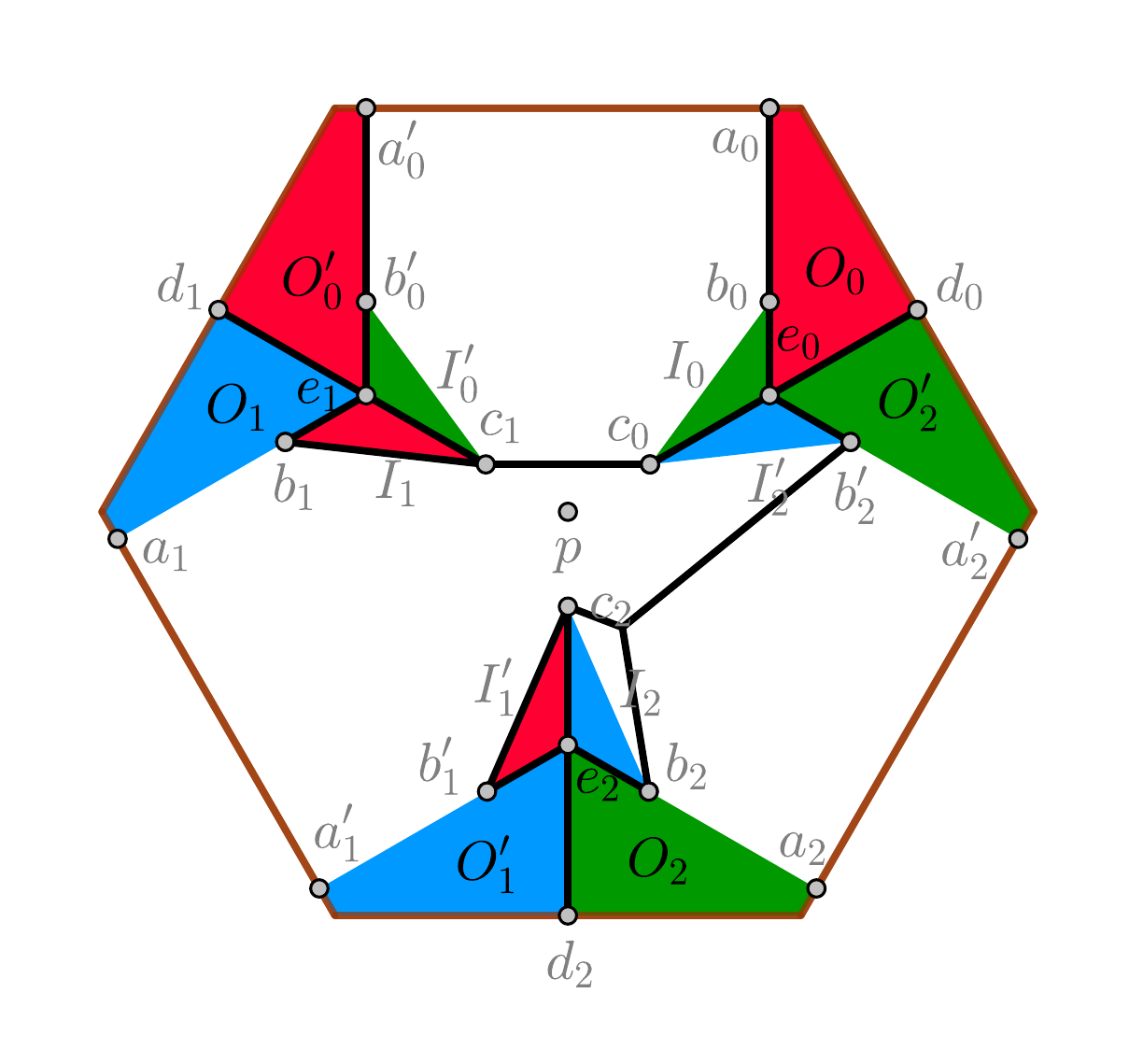} \\
    \includegraphics[width=0.45\textwidth]
    {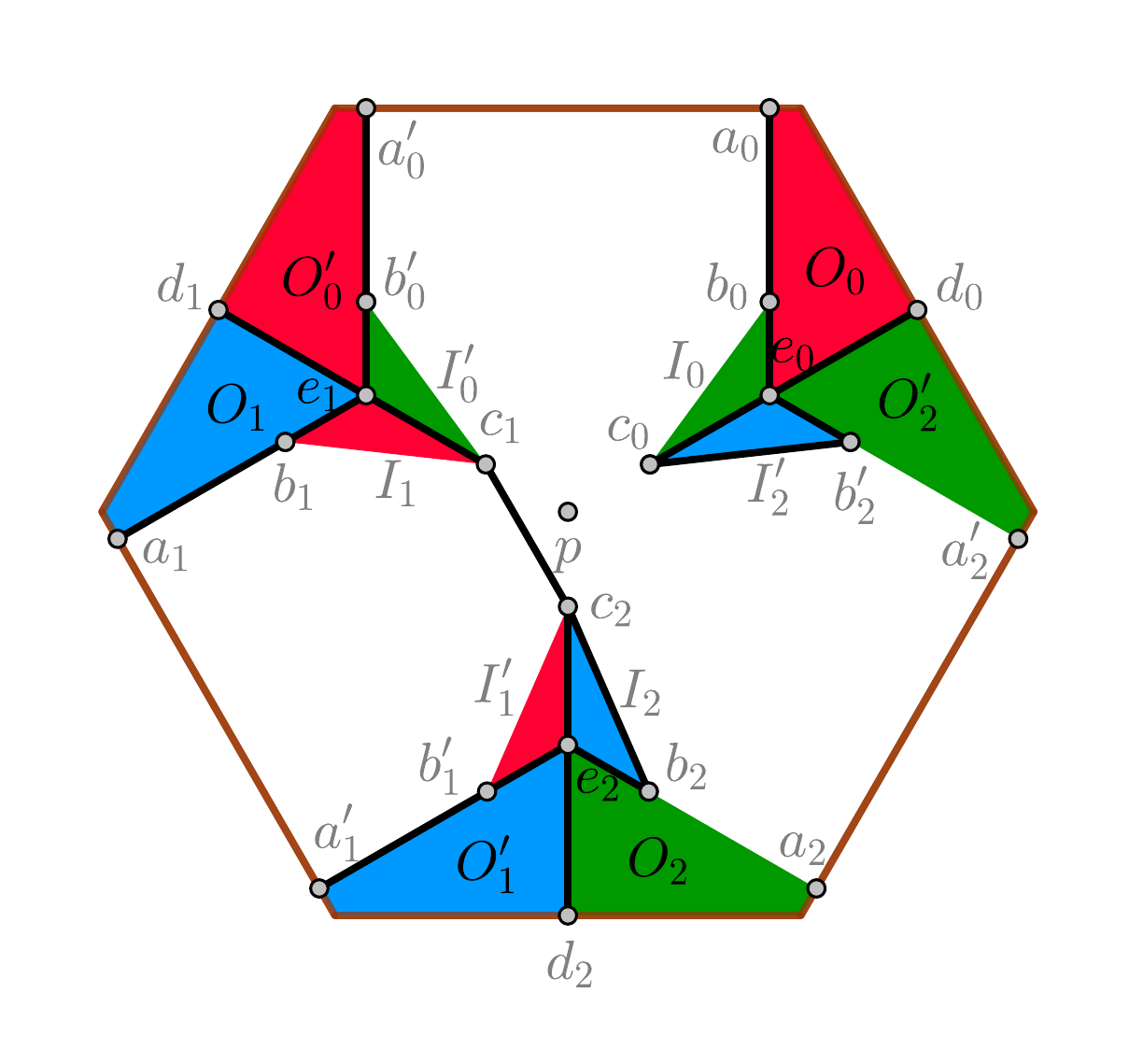}
    \includegraphics[width=0.45\textwidth]
    {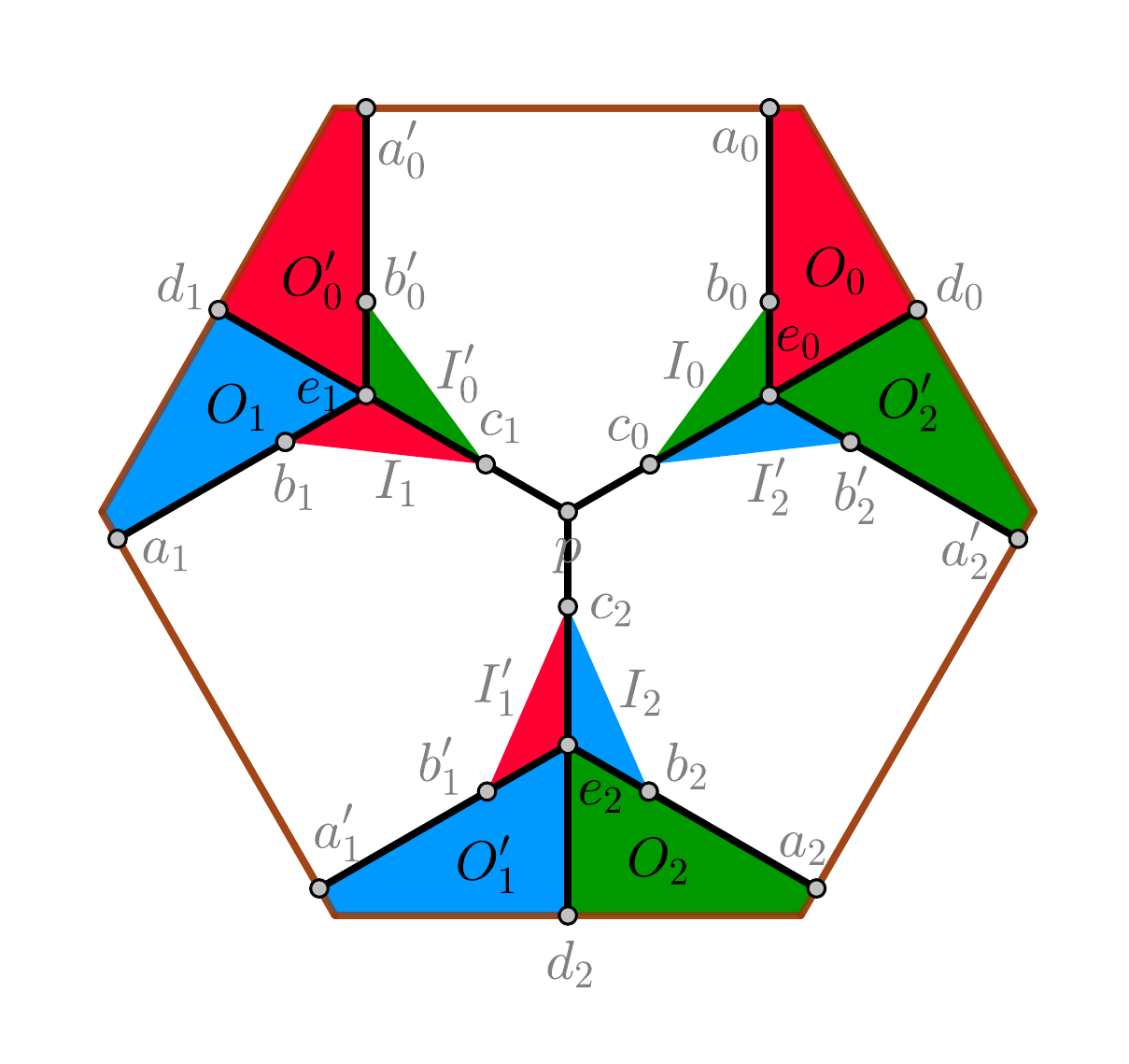}
    \caption{Top: Case~2.2.1. Case~2.2.2. Bottom: Case~2.3. Case~2.4.}
    \label{fig:case22}
\end{figure}

\paragraph*{Case 2.2: $c=1$.}
Assume without loss of generality that $I_0$ and $I'_0$ are in the same domain.
We thus also know that $O_1$ and $O'_1$ are in the same domain, as are $O_2$ and $O'_2$.
The domain of $I_0\cup I'_0$ separates $O_0$ and $O'_0$ from $I_1$ and $I'_1$, so $I_1$ and $I'_1$ are in separate domains, and the solution must contain a curve connecting $b_1$ and $c_1$ and one connecting $b'_1$ and $c_2$ bounding these domains.

Note that in order to separate $I_0\cup I'_0$ from $O_0$ and $O'_0$, the solution either contains a curve from $b_0$ to $b'_0$ or curves from $b_0$ and $b'_0$ to the boundary segment $a_0a'_0$.
We consider the latter option, which is $0.02$ cheaper.
It will follow from the analysis that even this is too expensive to get below $M+0.02$.
The individual cases as shown in Figure~\ref{fig:case22}.

\paragraph*{Case 2.2.1: $O_1\cup O'_1$ is not in a domain with $I_2$ or $I'_2$.}
The solution contains a curve from $b_1$ to $b'_1$ bounding the domain containing $O_1\cup O'_1$, and a curve connecting $b_2$ and $c_2$, and one connecting $b'_2$ and $c_0$ on the boundaries of the domains of $I_2$ and $I'_2$, respectively.
It follows that the solution contains a tree connecting $b_1,c_1,c_2,b'_1,b_2$.

The cheapest such solution is $a_0b_0\cup a'_0b'_0\cup b_1c_1\cup c_1c_2\cup b'_1c_2\cup b_2c_2\cup b'_2c_0$, which has cost $M+0.02$.
This ``second-best'' solution is the reason we have chosen the threshold $0.02$ in the lemma.

\paragraph*{Case 2.2.2: $O_1\cup O'_1$ is in a domain with $I'_2$.}
The solution contains a curve connecting $b_1$ and $c_0$ and one connecting $b'_1$ and $b'_2$ bounding that domain.
It also contains a curve connecting $b_2$ and $c_2$ bounding the domain of $I_2$, since $I_2$ is sepatated from the other blue objects.
The optimal solution consists of segments from $b_2,b'_2,c_2$ to their Fermat point and segments $a_0b_0, a'_0b'_0,b_1c_1,b'_1c_2,c_0c_1$, and it has cost $\approx 1.83>1.67$.

\paragraph*{Case 2.2.3: $O_1\cup O'_1$ is in a domain with $I_2$.}
The solution contains a curve connecting $b_1$ and $b_2$, and one connecting $b'_2$ and $c_0$ bounding the domain of $I'_2$.
The cheapest solution is similar to that in case 2.2.1, where the domain containing $O_1\cup O'_1\cup I_2$ has collapsed to zero width at $c_2$.

\paragraph*{Case 2.3: $c=2$.}
Assume without loss of generality that $I_0$ and $I'_0$ are in the same domain, as are $I_1$ and $I'_1$.
Furthermore, $I_2$ and $I'_2$ are separated, but $O_2$ and $O'_2$ are together.
As in case~2.2, we assume that $O_0$ and $O'_0$ are separated, as are $O_1$ and $O'_1$.
Otherwise, the cost of the solution will increase by at least $0.02$, and as the analysis will show, that is too much to stay below $M+0.02$.

The domain containing $I_1\cup I'_1$ separates $O_1$ and $O'_1$ from $I_2$ and $I'_2$.
It follows that there is a curve in the solution that connects $b_2$ and $c_2$, and one that connects $b'_2$ and $c_0$.
Likewise, the domain containing $I_0\cup I'_0$ separates $O_0$ and $O'_0$ from $I_1$ and $I'_1$.
Hence, the boundary of the domain containing $I_1\cup I'_1$ contains a curve connecting $c_1$ and $c_2$.
The cheapest solution is obtained as $a_0b_0\cup a'_0b'_0\cup a_1b_1\cup a'_1b'_1\cup b_2c_2\cup b'_2c'_0\cup c_1c_2$, as shown in Figure~\ref{fig:case22}, and the cost is $M$.
This is the optimal solution among all cases.

The segment $a_0b_0$ can be substituted by a curve from $f_0\in a_0a'_0$ to $b_0$, while keeping the cost below $M+0.02$, if and only if $\|f_0a_0\|<\sqrt{(0.24+0.02)^2-0.24^2}=0.1$.
Likewise for the other segments with an endpoint on the boundary of $T$.
These are exactly the solutions described in the lemma.

\paragraph*{Case 2.4: $c=3$.}
The cheapest solution is obtained when $O_i$ and $O'_i$ are separated for each $i$.
Furthermore, the solution contains a curve connecting $c_i$ and $c_{i+1}$ for each $i$ bounding the domain containing $I_i\cup I'_i$.
The cheapest such solution is $\bigcup_{i=0}^2 a_ib_i\cup a'_ib'_i\cup c_ip$ as shown in Figure~\ref{fig:case22}, which has cost $1.79>1.67$.
\end{proof}
\fi

\begin{theorem}\label{thm:nphard}
The problem GEOMETRIC $3$-CUT is NP-hard.
\end{theorem}

\begin{proof}
Let an instance $(\Phi,G(\Phi))$ of COLORED TRIGRID POSITIVE 1-IN-3-SAT be given and construct the tiles on top of $G(\Phi)$ as described.
Let $\mathcal T$ be the set of tiles and $\mathcal A$ the area that the tiles cover (i.e., $\mathcal A$ is a union of the hexagons).
We will cover any holes in $\mathcal A$ with completely red tiles, and
place red tiles all the way along the exterior boundary of $\mathcal A$.
Let $\mathcal R$ be the set of these added red tiles and let $I$ be the resulting instance of GEOMETRIC $3$-CUT.
It is now trivial how to place the fences in $I$ everywhere except in the interior of $\mathcal A$.

Consider a fence $\mathcal F$ to the obtained instance with cost $M$.
Let $M^*$ be the sum of the cost of an optimal solution to each tile in $\mathcal T$ plus the cost of the fence that must be placed along the boundaries of the added red tiles in $\mathcal R$.
We claim that if $\Phi$ is satisfiable, then a solution realizing the minimum $M^*$ exists.
Furthermore, if $M< M^*+1/50$, then $\Phi$ is satisfiable.

Suppose that $\Phi$ is satisfiable and fix a satisfying assignment.
Consider a clause tile where $E_x$, $E_y$, and $E_z$ meet.
Now, we choose the $v$-outer state, where $v\in\{x,y,z\}$ is the variable that is satisfied.
For each non-clause tile that covers a part of $E_w$ for a variable $w$ of $\Phi$, we choose the outer state if $w$ is true and the inner otherwise.
It is now easy to see that the curves form a fence of the desired cost.

On the other hand, suppose that $M<M^*+1/50$.
It follows that in each tile in $\mathcal T$, the cost exceeds the optimum by at most $1/50$.
Hence, the solution in each tile is homotopic to one of the optimal states as described in Lemma~\ref{lemma:solved:states}.
We now claim that the states of all tiles representing one variable must agree on either the inner or outer state.
Consider two adjacent tiles where one is in the inner state.
There are open curves with endpoints on the shared edge of the two tiles with a distance of more than $1/2-2\cdot 1/10=3/10$.
The other tile cannot be in the outer state, because then there would have to be an extra open curve of length at least $3/10$ to connect those endpoints.
It follows that the other tile must also be in the inner state.
Thus, both tiles are either in the inner or in the outer state, as desired.

We now describe how to obtain a satisfying assignment of $\Phi$.
Consider a clause tile where $E_x$, $E_y$, and $E_z$ meet and suppose the tile is in the $x$-outer state.
It follows from the above that each tile covering $E_x$ is in the outer state or, in the case of the clause tile, in the $x$-outer state.
Similarly, each non-clause tile covering only $E_y$ (resp.~$E_z$) is in the inner state and each clause tile covering a part of $E_y$ (resp.~$E_z$) is not in the $y$-outer (resp.~$z$-outer) state.
We now set $x$ to true and $y$ and $z$ to false and do similarly with the other clause tiles, and it follows that we get a solution to $\Phi$.

\iffull

\begin{figure}
\centering
\includegraphics{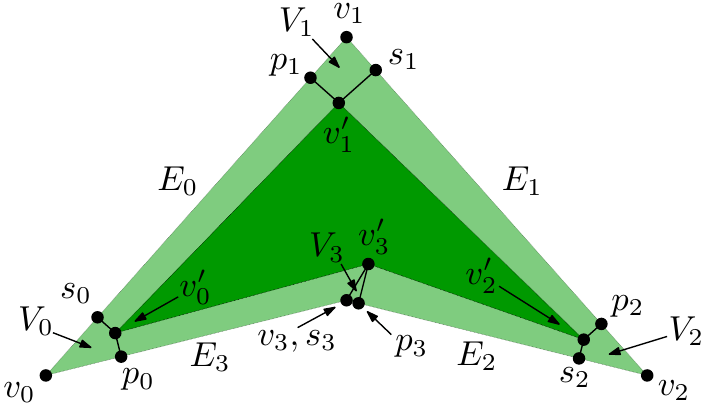}
\caption{An object $O$ and the rationalized version $O'\subset O$.}
\label{fig:rationalize}
\end{figure}

% Let $N$ be the total number of corners with irrational coordinates in $I$.
% Here, $N$ is counted with multiplicity so that if a point is a corner
% of more objects (as in the clause tile), it is counted for each object
% individually.
For each object $O$ with corner $v$ with an irrational coordinate, we choose a substitute $v'\in O$ with rational coordinates such that $\|vv'\|< \frac{1/50}{4n}$ and such that $v'$ only requires polynomially many bits to represent.
This results in an instance $I'$ where all objects are subsets of corresponding objects in $I$.
Let $C$ and $C'$ be the cost of the optimal solutions to $I$ and $I'$, respectively, and note that $C'\leq C$, as any solution to $I$ is also a solution to $I'$.
We claim that $C< C'+1/50$.
To prove this, consider a solution $\mathcal F$ to $I'$.
If $\mathcal F$ contains a curve $\pi'$ in the interior of an object $O$ of $I$, we move $\pi'$ to a curve $\pi$ on the boundary of $O$, as follows.

Let $O'\subseteq O$ be the object in $I'$ corresponding to $O$.
Let $v_0,\ldots,v_{k-1}$ be the vertices of $O$ in clockwise order and $v'_0,\ldots,v'_{k-1}$ the corresponding vertices of $O'$.
In the following, indices will be taken modulo $k$.
We divide the annulus $\mathcal D\mydef \inte O\setminus\inte O'$ into a region $E_i$ for each edge $v'_iv'_{i+1}$ and a region $V_i$ for each vertex $v'_i$ of $O'$, see Figure~\ref{fig:rationalize}.
We make a line segment from $v'_i$ to the closest point $p_i$ on $v_{i-1}v_i$ and one to the closest point $s_i$ on $v_iv_{i+1}$.
We then define quadrilaterals $V_i\mydef v'_is_iv_ip_i$ and $E_i\mydef v_iv_{i+1}p_{i+1}s_{i+1}$.
We now describe the modification we make on $\pi'$ in order to avoid $\mathcal D$.
If $\pi'$ intersects $V_i$, we remove $\pi'\cap V_i$ and instead add the segments $p_iv_i\cup v_is_i$.
Note that these added segments have total length less than $\frac 1{100n}$.
If $\pi'$ intersects $E_i$, we project each point in $\pi'\cap E_i$ to the closest point on $v_iv_{i+1}$.
This does not increase the length of the curve.
It follows that the modification of $\pi'$ made to avoid $\mathcal D$ increases the length by less than $\frac k{100n}$.
Performing the same operation for all objects of $I'$, we get a solution to $I$ with cost less than $C'+1/100$.
Hence, $C<C'+1/100$.

Let $M'\mydef \frac{\lceil 100M^*\rceil}{100}$, so that $M'$ is rational and $M^*\leq M'< M^*+1/100$.
We conclude by observing that if $C'\leq M'$, then $C<C'+1/100< M'+1/100<M^*+1/50$, and thus $\Phi$ is satisfiable.
On the other hand, if $\Phi$ is satisfiable, then $C'\leq C=M^*\leq M'$.
We can thus tell whether $\Phi$ is satisfiable or not by evaluating whether $C'\leq M'$. %so GEOMETRIC $3$-SAT is NP-hard.
\else
The proof that we can avoid the use of irrational corners is deferred to the full version.
The basic idea is as follows.
%Let $N$ be the total number of corners with irrational coordinates in $I$.
%Here, $N$ is counted with multiplicity so that if a point is a corner of more objects (as in the clause tile), it is counted for each object individually.
For each object $O$ with corner $v$ with an irrational coordinate, we choose a substitute $v'\in O$ with rational coordinates such that $\|vv'\|< \frac{1/50}{4n}$ and such that $v'$ only requires polynomially many bits to represent.
This results in a modified instance $I'$, and we prove that $I'$ has a solution of cost $M'\mydef \frac{\lceil 100M^*\rceil}{100}$ if and only if $\Phi$ is satisfiable.
\fi
\end{proof}

%\begin{figure}
%\centering
%\includegraphics[width=0.46\textwidth]{}
%\end{figure}

%%% Local Variables:
%%% mode: latex
%%% TeX-master: "geometric-multisep"
%%% End:

\section{Approximation}
\label{sec:a}

%A straightforward $k$-approximation follows from a basic observation: any system of fences that separates each set $B_i$ from $B=\bigcup_{j\neq i} B_j$ is a fence.
%We therefore use the algorithm for the bichromatic case in Section~\ref{sec:2} to compute the optimal fence $F_i$ that separates $B_i$ from $B$, and return the union $F=\bigcup_{i=1}^k F_i$ in time $O(kn^4 \log^3 n)$.
%Each of the fences $F_i$ has a cost of at most $F^\star$, so we must have $|F|\leq k|F^\star|$

%The approximation ratio can be improved to $\lceil \log_2 k \rceil$ by the following trick: We write each color as a binary number, and we compute optimal fences for two-part instances where the input regions are partitioned into two groups according to a particular bit position.

%\begin {theorem}
%  We can compute a $\lceil \log_2 k \rceil$-approximation in $O(n^4 \log^3 n\log k)$ time.
%\end {theorem}

The approach for $k=2$ from Section~\ref{sec:2} does not extend to $k\geq 3$ because
Lemma~\ref{lemma:opt_fence_polygons} does not apply:
The
arrangement $\mathcal A$ (formed by the free segments between the
corners $N$ of the input objects) is no longer guaranteed to contain an optimal
fence,
see Figure~\ref{fig:polygon_example}.
However, we can still search for an approximate solution in
 $\mathcal A$:
We show that the optimal fence $F_\mathcal A$ contained in
$\mathcal A$ has a cost which is at most $4/3$ times higher than the
true optimal fence $F^\star$ (Section~\ref{4/3}).
 We construct a corresponding lower-bound example with $|F_\mathcal
A|> 1.15\cdot|F^\star|$. (This factor is the conjectured Steiner
ratio, see
 Section~\ref{sec:lower}.)

 The graph-theoretic problem that we then have to solve in
 the weighted dual graph $G=(V,E)$ of $\mathcal A$ is
 the \emph{colored multiterminal cut problem}:
We have terminals of
 $k\geq 3$ different colors and want to make a cut that separates
 every pair of terminals of different colors.
 This problem is NP-hard, but we can use
 approximation algorithms, see Section~\ref{sec:approx_alg}.
 
%where the weight of each edge is the length of the common boundary between the corresponding cells in $\mathcal A$.
%The input objects correspond to $k$ pairwise disjoint classes $T_1,\ldots,T_k\subset V$ of terminals, and we want to remove a subset $C$ of the edges of minimum total weight such that all terminals in different classes are in different components.

\subsection{Upper bound  $|F_\mathcal A|/|F^\star|\le 4/3$}
\label{4/3}

\begin{theorem}\label{thm:approx}
$|F_\mathcal A|\leq 4/3\cdot |F^\star|$.
\end{theorem}
%We first show that $F_\mathcal A$ is at most twice as expensive as $F^\star$.
%
\begin{proof}
  
By Lemma~\ref{lemma:char} and Lemma~\ref{lemma:no_cycles}, we know
that after cutting an optimal fence $F^\star$ at all points of $N$,
the remaining components
% $F^\star\setminus N$
% contains no cycles, but merely
are Steiner minimal trees with leaves in $N$ and internal \emph{Steiner vertices} of
degree $3$, where three segments make angles of $2\pi/3$.

  Consider such a Steiner tree $T$ (Figure~\ref {fig:approx2}a).
Since $T$ is embedded in the plane, the leaves can be enumerated in cyclic order as $v_1,\ldots,v_m$.
We will replace $T$ by a connected system $\bar T$ of fences that
connects the same set of leaves $v_1,\ldots,v_m$, but contains only
segments from the arrangement $\mathcal A$.
Furthermore, we prove that the total length of $\bar T$ is bounded as $|\bar T|\le \frac43 |T|$.
  Thus, carrying out this replacement for every Steiner tree leads to the fence $F_{\mathcal{A}}$ of the desired cost.
If $T$ consists of a single segment, we define $\bar T$ to be the same segment, in which case trivially $|\bar T|\leq \frac 43|T|$.
Assume therefore that $T$ has at least one Steiner vertex.

Let $T_{ij}$ be the path in $T$ from $v_i$ to $v_j$.
For each pair $\{i,j\}$, we define the path $\bar T_{ij}$
as the shortest path with the properties that
\begin{enumerate}
\item[a)]
$\bar T_{ij}$ has endpoints $v_i$ and $v_j$, and
\item[b)]
$\bar T_{ij}$ is \emph{homotopic} to $T_{ij}$: this means that $T_{ij}$ can be continuously deformed into $\bar T_{ij}$ while keeping the endpoints fixed at $v_i$ and $v_j$, without entering the interiors of the objects.
\end{enumerate}
It is clear that
\begin{itemize}
\item[c)]
$\bar T_{ij}$ is contained in the arrangement $\mathcal A$, and
\item[d)]
$\bar T_{ij}$ is at most as long as $T_{ij}$.
\end{itemize}

We will construct $\bar T$ as the union of paths $\bar T_{ij}$ that
are specified by a certain set $S$ of leaf pairs $\{i,j\}$, and we
will show that its total length is bounded $|\bar T|\le \frac43 |T|$.
The fact that $F_{\mathcal{A}}$ is a valid fence is ensured by our
choice of the set $S$, which we will now discuss.
  
% %We will show how to replace $T$ by a subset of edges of $\mathcal{A}$,
% %  
% 

% \begin {lemma} \label{factor-approximation}
%   $|F_{\cal A}| \le 2 |F^\star|$.
% \end {lemma}

 \begin{figure}
 \centering
 \includegraphics {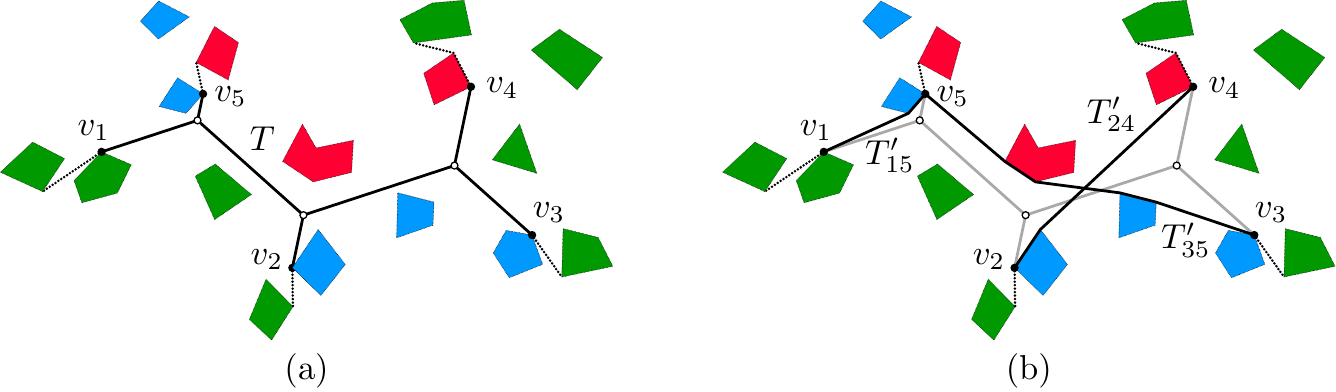}
 \caption{(a) a single Steiner tree $T$ with 5 terminals $v_1,\ldots,v_5$, part of a larger fence system $F^\star$. Steiner vertices are white, leaves are black.
   (b) The transformed graph $\bar T$, formed as the union of three
   shortest homotopic paths $\bar T_{15}$, $\bar T_{24}$, and
    $\bar T_{35}$.}
 \label{fig:approx2}
 \end{figure}

If we overlay all paths $T_{ij}$ for $\{i,j\}\in S$, we get
a multigraph $\widetilde T$, which has the same vertices as $T$ and
uses the edges of $T$, some of them multiple times.
% 
% The set $S$ gives rise to a multigraph $\widetilde T$ with the same vertices as $T$ and an edge $uv$ for every such edge in a path $T_{ij}$, $\{i,j\}\in S$.
% In other words, the edges of $\widetilde T$ are the disjoint union of the paths $T_{ij}$ for all $\{i,j\}\in S$.
%
We require % that % the set $S$ has
these three properties:
\begin{enumerate}
\item\label{enum:dup1}
%  For every edge of $T$, there are one or two edges between the same
%  vertices in $\widetilde T$.
  Every edge of $T$ is used once or twice in $\widetilde T$.
  %For every edge of $T$, there are one or two edges between the same vertices in $\widetilde T$.  
\item\label{enum:dup2}
Every Steiner vertex of $T$ has even degree (4 or 6) in $\widetilde T$.
(By contrast, the degree in $T$ is always~$3$.)
\item\label{enum:dup3} Any two paths $T_{ij}$ and $T_{i'j'}$ that have a point of $T$ in
  common must \emph{cross} in the following sense:
If we assume, by relabeling if necessary, that $i<j$ and $i'<j'$, then 
$i\leq i'\leq j\leq j'$ or $i'\leq i\leq j'\leq j$.
\end{enumerate}
The last property is important to ensure that $\bar T$ is connected.
For this we use the following lemma, whose proof is given later.
For a path $P$ and points $x,y\in P$, we denote by $P[x,y]$ the subpath of $P$ from $x$ to $y$.
This is in general not well-defined unless $P$ is simple.
To make the notation unambiguous, we will assume that
 the points $x,y$ are associated to particular
 parameter values along the parameterization of~$P$.

\begin{lemma}
  \label{cross}
%We can define the paths $\bar T_{ij}$ so that the following holds.
  Suppose that %Consider two pairs $ij,i'j'\in S$ such that
  the paths $T_{ij}$ and $T_{i'j'}$ cross in the sense of Property~\ref{enum:dup3}. % and let $x\in T_{ij}\cap T_{i'j'}$.
  Then there exists a point $\bar x\in \bar T_{ij}\cap \bar T_{i'j'}$
  such that the path
 $$\bar T_{ij}[v_j,\bar x]\cup \bar T_{i'j'}[\bar x,v_{i'}]$$ is
 homotopic to the path
 \begin{math}%  \label{eq:target-path}
T_{ji'}
%T_{ij}[v_j,x]\cup T_{i'j'}[x,v_{i'}].
 \end{math}.
\end{lemma}
%Here, the homotopy class of the target path
%\eqref{eq:target-path} is unambiguous because 
%$T_{ij}$ and $T_{i'j'}$ are paths in the tre1
As we will prove shortly, 
% the three
Properties~\ref{enum:dup1} and~\ref{enum:dup3}
imply that for any two leaves $v_i$ and $v_j$ (where the pair $\{i,j\}$ is not necessarily in $S$),
% We will show that for any two leaves $v_i$ and $v_j$,
% %\mikkel{GR: Maybe we should add a prove this somewhere?}
 the set $\bar T$ contains a path from $v_i$ to $v_j$ that is homotopic to
 the path $T_{ij}$. % (as will be proven in the following).
This means that after replacing $T$ by $\bar T$ in $F^\star$, we get a
system of fences $F'$ that encloses and separates the same objects
as $F^\star$, and thus we have indeed produced a valid fence.

We now show the existence of a path in $\bar T$ homotopic to $T_{ij}$:
Let the vertices of $T_{ij}$ be $x_0,x_1,\ldots,x_{p+1}$ in order, such that $x_0\mydef v_i$ and $x_{p+1}\mydef v_j$.
For each $m=0,1,\ldots,p$, % edge $x_mx_{m+1}$, 
we select, by Property~\ref{enum:dup1}, a
path $T_{k_ml_m}$ with $\{k_m,l_m\}\in S$ that goes through the
directed edge $x_mx_{m+1}$ on the way from $v_{k_m}$ to  $v_{k_l}$.
This leads to a sequence of paths $T_{k_0l_0},T_{k_1l_1},\ldots,T_{k_pl_p}$, where $k_0=i$, $l_p=j$, and any two successive paths $T_{k_{m-1}l_{m-1}}$ and
$T_{k_ml_m}$ have the point $x_m$ in common, and hence cross, by % in the sense of
Property~\ref{enum:dup3}.
Lemma~\ref{cross} implies that also the corresponding paths $\bar T_{k_{m-1}l_{m-1}}$
and $\bar T_{k_ml_m}$ have a common point $\bar x_m$ such that
\begin{align*}
%& U_m \mydef T_{k_{m-1}l_{m-1}}[v_{l_{m-1}},x_m]\cup T_{k_ml_m}[x_m,v_{k_m}],\text{ and} \\
& \bar U_m \mydef \bar T_{k_{m-1}l_{m-1}}[v_{l_{m-1}},\bar x_m]\cup \bar T_{k_ml_m}[\bar x_m,v_{k_m}]
\end{align*}
is homotopic
to $U_m := T_{{l_{m-1}}{k_m}}$. Now, define the paths
\begin{align*}
& W\mydef T_{k_0l_0}\cup U_1\cup T_{k_1l_1}\cup U_2\cup\ldots\cup U_p\cup T_{k_pl_p},\text{ and} \\
& \bar W\mydef \bar T_{k_0l_0}\cup \bar U_1\cup \bar T_{k_1l_1}\cup
  \bar U_2\cup\ldots\cup \bar U_p\cup \bar T_{k_p l_p}.
\end{align*}
The paths $T_{ij}$ and $W$ are homotopic, as
they have the same endpoints and $W$ is obtained by joining paths in
the simple tree~$T$.
% we obtain $W$ from $T_{ij}$ by creating, for each $m\in\{1,\ldots,p\}$, a detour from $x_m$ to $v_{l_{m-1}}$ along $T_{k_{m-1}l_{m-1}}$, and one to $v_{k_m}$ along $T_{k_ml_m}$.
 Also, $W$ and $\bar W$ are homotopic, as the corresponding subpaths are homotopic.
The paths $T_{ij}$ and $\bar W$ are thus homotopic, and $\bar W$ is contained in $\bar T$, so we are done.
%, as will be proven by the following lemma.

% \begin{lemma}
%   \label{cross}
% We can define the paths $\bar T_{ij}$ so that the following holds.
% Consider two pairs $ij,i'j'\in S$ such that the paths $T_{ij}$ and $T_{i'j'}$ cross, as defined in Property~\ref{enum:dup3}.
% Then the paths $\bar T_{ij}$ and $\bar T_{i'j'}$ intersect.
% \end{lemma}

\begin{proof}[Proof of  Lemma~\ref{cross}]
We first describe how % construct the path $\bar T_{ij}$.
 $T_{ij}$ can be continuously deformed into $\bar T_{ij}$ while remaining
 a polygonal path, moving one vertex at a time.
 We denote by
 $\hat T_{ij}$ the current path during this deformation procedure.
 
% Since $T$ has at least one Steiner vertex,
Consider the case that
 $\hat T_{ij}$ has a vertex $b$ which is not in $N$.
Let $a$ and $c$ be the neighboring vertices. %, named arbitrarily.
We then move $b$ towards $c$, thus shortening the edge $bc$ while the
edge $ab$ sweeps over a region in the plane.
If $ab$ hits the corner %$d$
of an object,
$\hat T_{ij}$ gets a new vertex $a'$ at this point.
 The segment $aa'$ will then remain static, and we continue the movement
of $b$ with $a'$ taking the role of $a$.
%
%A new corner $d$ is introduced where $ac$ hits a corner of an object, and then $dc$ remains static while $ad$ continues to sweep.
When $b$ eventually reaches $c$, the number of vertices of
$\hat T_{ij}$
that are not in $N$ has decreased by~1.
We repeat this process of contracting edges as long as there is a
vertex not in $N$.
Note that it is possible that the path crosses itself during the
deformation, or it
may have a vertex where it turns $180^\circ$ back on itself.
Such a vertex is known as a \emph{spur}, and it
% A Spurs
can be easily eliminated by moving
it % each spur
to the closest adjacent vertex.
%Let $\bar T_{ij}$ be the resulting spur-free path contained in $\mathcal A$.

 (For establishing Theorem~\ref{thm:approx}, we could already stop
the deformation procedure as soon as all vertices of
 $\hat T_{ij}$ are in $N$ and $\hat T_{ij}$ is free of spurs, because $\hat T_{ij}$
is contained in $\mathcal A$ and is at most as long as $T_{ij}$,
thus satisfying properties (c) and~(d).)
If $\hat T_{ij}$ is not yet the shortest homotopic path, it must
contain three consecutive vertices $abc$ such that the angle
at $b$ contains no object. In this case we can start the same
deformation move from $b$ towards $c$ as above. Temporarily, the vertex $b$
is an additional vertex not in $N$, but after the move,
$\hat T_{ij}$
is again a path connecting vertices of $N$. Since the number of such
paths that are not longer than the initial path
$T_{ij}$ is finite, the procedure must eventually terminate with the shortest
homotopic path~$\bar T_{ij}$.

We successively apply this procedure to the pairs ${ij}$ and ${i'j'}$.
We still have to prove the existence of a point $\bar x\in \bar T_{ij}\cap \bar T_{i'j'}$ with the property stated in the lemma.
We assume that the four corners $v_{i},v_{j},v_{i'},v_{j'}$ are
distinct because otherwise the statement follows easily if we choose
 a shared endpoint as~$\bar x$.

The proof uses that fact that the number of \emph{crossings} between
the paths $\hat T_{ij}$ and $\hat T_{i'j'}$ can only change by an even number
during a deformation. The definition of a crossing requires some
care, as the paths may share segments.
Assume that
$\hat T_{ij}$ is the path that is currently being deformed, while
$\hat T_{i'j'}$ is %another path,
either the initial path
$ T_{i'j'}$ or the final deformed path
$\bar T_{i'j'}$.
%It will be important that $\hat T_{i'j'}$ has no spurs.

%
Orient the paths $\hat T_{ij}$ and $\hat T_{i'j'}$ arbitrarily.
Consider a maximal subpath $Q$ that is shared between $\hat T_{ij}$
and $\hat T_{i'j'}$, possibly in opposite directions.
If $\hat T_{ij}$ enters and leaves $Q$ on the same side of $\hat T_{i'j'}$, we say that $\hat T_{ij}$ \emph{touches} $\hat T_{i'j'}$ at $Q$.
Otherwise, $\hat T_{ij}$ and  $\hat T_{i'j'}$  form a \emph{crossing}
at $Q$.
Here it is important that $\hat T_{i'j'}$ has no spurs, since at a
spur, the side on which $\hat T_{ij}$ enters or leaves $\hat T_{i'j'}$ is ill-defined.
If $Q$ contains an endpoint $q$ of one of the paths, %, say $\hat T_{ij}$,
we extend this path % $\hat T_{ij}$ 
into the interior of the object in order
to determine the side of
$\hat T_{i'j'}$ on which $\hat T_{ij}$ enters or leaves $Q$ at $q$.

%Recall that $x$ is a point in $T_{ij}\cap T_{i'j'}$.
A crossing $Q$ of $\hat T_{ij}$ and  $\hat T_{i'j'}$ is a
\emph{homotopic crossing} if
it has the desired property for the lemma, namely
that $\hat T_{ij}[v_j,\hat x]\cup \hat
 T_{i'j'}[\hat x,v_{i'}]$ 
 for $\hat x\in Q$ is homotopic to
$T_{ji'}$. % path $T_{ij}[v_j,x]\cup T_{i'j'}[x,v_{i'}]$.
Clearly, this does not depend on the choice of $\hat x\in Q$, because
$Q$ is represented by a connected interval of parameters, both on 
$\hat T_{ij}$
and $\hat T_{i'j'}$.

%We now consider how the number of crossings
%% times that $T_{ij}$ crosses
%%$T_{i'j'}$ (reduced modulo $2$) 
%can change
When $\hat
T_{ij}$ is deformed by moving one vertex at a time, as described above,
it is easy to see that crossings can only appear or disappear in pairs:
It is not possible for a crossing $Q$ to appear or disappear by $\hat T_{ij}$ sliding over an endpoint $q'$ of $\hat T_{i'j'}$, since that would require $\hat T_{ij}$ to enter the interior of the object with $q'$ on the boundary.
%It is then easy to verify that the number of crossings does not
%change when $T_{ij}$ is deformed into $\bar T_{ij}$.

Furthermore, a pair of crossings $Q_1,Q_2$ that appear or disappear will either both be homotopic crossings or non-homotopic crossings:
At the moment when the pair appears or disappears, the loop formed by the subpaths of $\hat T_{ij}$ and $\hat T_{i'j'}$ between $Q_1$ and $Q_2$ is empty and thus contains no objects.
Therefore, if $\hat x_1\in Q_1$ and $\hat x_2\in Q_2$, the paths $\hat T_{ij}[v_j,\hat x_1]\cup \hat T_{i'j'}[\hat x_1,v_{i'}]$ and $\hat T_{ij}[v_j,\hat x_2]\cup \hat T_{i'j'}[\hat x_2,v_{i'}]$ are homotopic.

During the deformation of $\hat T_{ij}$, each crossing $Q$ can move back and forth on $\hat T_{i'j'}$, expand and shrink.
However, it is clear that its character (homotopic versus non-homotopic) does not change during the deformation.

The initial number of crossings is~$1$, and the
 crossing, $Q$, is a
 homotopic crossing, since $T_{ji'}$ can be realized as a path $\hat T_{ij}[v_j,\hat x]\cup \hat
 T_{i'j'}[\hat x,v_{i'}]$ 
 for $\hat x\in Q$.
Hence the number of homotopic crossings of $\bar T_{ij}$ and $\bar T_{i'j'}$ is
odd, and in particular positive,
which establishes the claim.
\end{proof}

%(If the system $A$ contains cycles, we can of course save something by reducing $A$ to its minimum spanning tree.)

To bound the length of $\bar T$, we bound each path $\bar T_{ij}$, $\{i,j\}\in S$, by the corresponding path $T_{ij}$ in $T$.
 This upper estimate is simply the total length of $T$ plus the length of the duplicated edges of $T$.

Our first task is to construct the multigraph
$\widetilde T$.
By Property~\ref{enum:dup1},
this boils down to selecting which edges of $T$ to duplicate.
In order to
fulfill %get % a set $S$ with
Property~\ref{enum:dup2}, we require that the degree of every inner
vertex of $\widetilde T$ becomes even.
(We show later that this is sufficient to ensure
that the edges of $\widetilde T$ can be partitioned into paths
$T_{ij}$ subject to Property~\ref{enum:dup3}.)

 \begin{lemma}\label{lemma:duplicate}
   The edges that should be duplicated can be chosen such that their total length is at most $|T|/3$.
 \end{lemma}
 \begin{proof}
For a particular tree, the optimum can be computed easily by dynamic programming, as follows.
We root $T$ at some arbitrary leaf. % $v_1$.
Consider a subtree $U$ rooted at some vertex $u$ of $T$ such that $u$ has one child $v$ in $U$.
We define $U_1$ and $U_2$ as the cost of the optimal set of duplicated edges in $U$, under the constraint that
the multiplicity of the edge $uv$ in $\widetilde T$ is 1 and 2, respectively.
% there are one resp.~two edges between $u$ and $v$ in $\widetilde T$.
\iffull\else\looseness-1\fi

By induction, we will establish that
\begin{equation}
  \label{tree-bound}
  2U_1+U_2 \le |U|.
\end{equation}
This gives % an upper bound
$\min\{U_1,U_2\}\leq |U|/3$ and proves the lemma, since this also holds for $U=T$.

In the base case $U$ has only one edge.
Then $U_1=0$ and $U_2=\|uv\|=|U|$, and% the bound
~\eqref{tree-bound} holds.

If $U$ is larger, $v$ has degree 3, and two subtrees $L$ and $R$ are attached there.
 If $uv$ is not duplicated, then exactly one of the other edges incident to $v$ has to be duplicated in order for $v$ to get even degree in $\widetilde T$.
On the other hand, if $uv$ is duplicated, then either both or none of the other edges should be duplicated.
Hence, we can compute $U_1$ and $U_2$ by the following recursion:
\begin{align}
\label{T1}
  U_1 &= \min \{ L_1+R_2, L_2+R_1 \}
\\
\label{T2}
  U_2 &= \min \{ L_1+R_1, L_2+R_2 \} + \|uv\|
\end{align}
We therefore get
\begin{align}
  \label{au}
    U_1 &\le L_2+R_1 
\\
    U_1 &\le L_1+R_2
\end{align}
from~\eqref{T1} and
\begin{align}
  \label{way}
    U_2 &\le L_1+R_1 + \|uv\|
\end{align}
from~\eqref{T2}. Adding inequalities \thetag{\ref{au}--\ref{way}} and using the
inductive hypothesis~\eqref{tree-bound}
 for $L$ and $R$ gives
\begin{displaymath}
  2U_1+U_2 \le 2L_1+L_2+2R_1+R_2+\|uv\|
 \le |L|+|R|+\|uv\| = |U|.
\qedhere
\end{displaymath}
 \end{proof}

 \iffull
 The bound $|T|/3$ in Lemma~\ref{lemma:duplicate} cannot be improved,
 as shown by the graph $K_{1,3}$ with unit edge lengths.
This graph can appear as a Steiner tree in an optimal fence, see
Figure~\ref{fig:approx-lower}.
(But this does not mean that the factor $4/3$ in Theorem~\ref{thm:approx} cannot be improved.)
\fi

We now have a multigraph $\widetilde T$ where every internal vertex has even degree.
It follows that the edges of $\widetilde T$ can be partitioned into
leaf-to-leaf paths, much like when creating an Eulerian tour in a
graph where all vertices have even degree.

We still need to %find paths that 
satisfy Property~\ref{enum:dup3}.
Whenever two paths $P_1$ and $P_2$ violate this property, we
% show how to
repair this by swapping parts of the paths, without changing the number
of remaining violating pairs,
as follows:
The paths %Since
$P_1$ and $P_2$ % violate the Property, they
must have a common
vertex, and thus also a common edge $uv$, because the maximum degree in
$T$ is $3$.
Orient $P_1$ and $P_2$ so that they use this edge in the direction
$uv$, and cut them at $v$ into
$P_1=Q_1\cdot R_1$
and $P_2=Q_2\cdot R_2$.
We now make a cross-over at $v$, forming the new paths
$%P_1'=
Q_1\cdot R_2$
and $%P_2'=
Q_2\cdot R_1$.
These new paths satisfy Property~\ref{enum:dup3}.
To check that we did not create any new violations, we observe that,
by Property~~\ref{enum:dup1},
no other path can use the edge $uv$, because the capacity of 2
is already taken by $P_1$ and $P_2$. Thus, all other paths can either
interact with $Q_1$ and $Q_2$, or with
$R_1$ and $R_2$. Thus, swapping the parts of $P_1$ and $P_2$ in the
other half of the tree $T$ does not affect Property~\ref{enum:dup3}.

We have thus established Theorem~\ref{thm:approx}.
\end{proof}

\subsection{Lower bound on $|F_\mathcal A|/|F^\star|$}\label{sec:lower}

We believe that the bound of Theorem~\ref{thm:approx} can be improved:
We bounded $|\bar T_{ij}|$ crudely by $|T_{ij}|$, using  only the triangle inequality,
and we did not use at all the fact that edges meet at
$120^\circ$.

We  construct an example that establishes a
 lower bound.
Gilbert and Pollack~\cite{gilbert1968steiner} conjectured that for any set of points in the plane, the ratio between the length of a minimum spanning tree and the length of a minimum Steiner tree is at most $\frac{2}{\sqrt{3}}\approx 1.15$, which is realized by the corners of an equilateral triangle.
The current best upper bound is $1.21$ by Chung and Graham~\cite{CG86}.
We show a lower bound on the ratio $|F_\mathcal A|/|F^\star|$ that corresponds to the conjectured Steiner ratio.

\begin{figure}
\centering
  \includegraphics[scale=0.8] {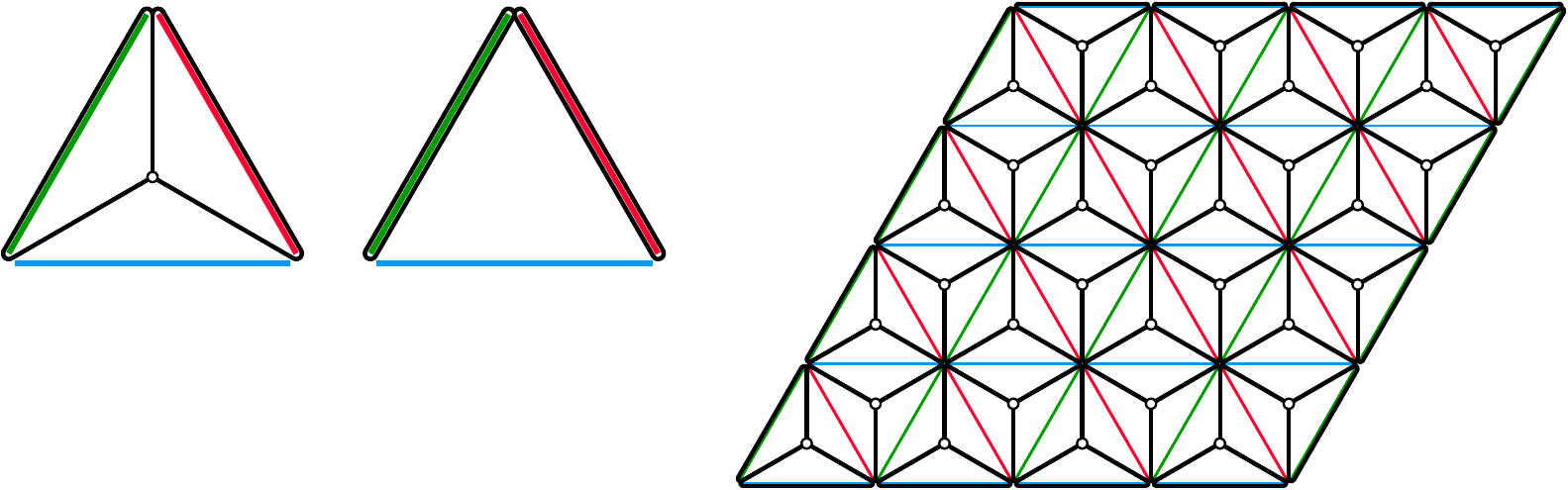} \\
\caption{The core (left) and repeated (right) construction from the proof of Lemma~\ref{lemma:lower}.}
\label{fig:approx-lower}
\end{figure}

%\begin{figure}
%\centering
%  \includegraphics[width=0.9\textwidth]{}
%\caption{The repeated construction from the proof of Lemma~\ref{lemma:lower}.}
%\label{fig:approx-lower-repeat}
%\end{figure}

\begin {lemma}\label{lemma:lower}
For every $\eps>0$, there is an instance of GEOMETRIC $3$-CUT for which 
  \[
    \frac {|F_{\cal A}|}{|F^\star|} \ge \frac 2{\sqrt3} - \varepsilon > 1.15-\eps.
  \]
\end {lemma}

\begin {proof}
The core idea is shown in
Figure~\ref{fig:approx-lower}:
Three very thin rectangles in different colors form an equilateral triangle with side length $\sqrt 3$.
The optimal fence uses the center of
the triangle as a Steiner vertex, whereas the 
fence $F_{\cal A}$ is restricted to follow the triangle edges.
This example, considered in isolation, gives only a ratio
$ {|F_{\cal A}|}/{|F^\star|} \approx ({4\sqrt3})/({3+2\sqrt3}) \approx
1.07$, because the outer boundary edges, which are common to both
fences, ``dilute'' the ratio.
 
So we 
  set $k = 1/\varepsilon$, and repeat the construction $k\times k$ times.
  We get $|F^\star| = 2k^2 \cdot 3 + 2k \cdot \sqrt 3$, versus $|F_{\cal A}| = 2k^2 \cdot 2 \sqrt 3 + 2k \cdot \sqrt 3$.
\end {proof}

\subsection{Finding a good fence in $\mathcal A$}\label{sec:approx_alg}

The problem of finding a small cut in a planar graph $G=(V,E)$ that separates $k$ different classes $T_1,\ldots,T_k\subset V$ of terminals was mentioned as a suggestion for future work by Dahlhaus et al.~\cite{dahlhaus1994complexity}, but we have not found any subsequent work on that except for the case $k=2$~\cite{borradaile2017multiple}.
We can, however, reduce the problem to the multiway cut problem in general graphs (also known as the multiterminal cut problem):
For each class $T_i$, we add an ``apex vertex'' $t_i$ which is connected to all vertices in $T_i$ by edges of infinite weight.
We then ask for the cut of minimum total weight that separates each pair $t_i,t_j$.
Dahlhaus et al.~gave a $(2-2/k)$-approximation algorithm for the problem.
In our setup, the running time will be $O(kn^8\log n)$.
The approximation ratio was since then improved to $3/2-1/k$ by C{\u{a}}linescu et al.~\cite{cualinescu2000improved}.
Finally, a randomized algorithm with approximation factor $1.3438$ was given by Karger et al.~\cite{karger2004rounding}, who also gave the best known bounds for various specific values of $k$.
Together with Theorem~\ref{thm:approx}, we obtain the following result.

\begin {theorem}
  \label{approximation-theorem}
  There is a randomized $4/3\cdot 1.3438$-approximation algorithm and
  a deterministic $(2-\frac{4}{3k})$-approximation algorithm for
  GEOMETRIC $k$-CUT, each of which runs in polynomial time.
  \qed
\end {theorem}

%%% Local Variables:
%%% mode: latex
%%% TeX-master: "geometric-multisep"
%%% End:

\iffull
\section*{Acknowledgements}
This work was initiated at the workshop
on \emph{Fixed-Parameter Computational Geometry}
at the Lorentz Center in Leiden
in May 2018. We thank the organizers and the Lorentz Center for a nice workshop and Michael Hoffmann for useful discussions during the workshop.
\fi

\bibliographystyle{plain}
\bibliography{bib}

%\newpage
% \appendix

% \input appendix

\end{document}